\documentclass[letterpaper, 11pt]{article}

% Packages
\usepackage{amsmath, amssymb, amsthm, amsfonts, bbm}
\usepackage{graphicx}
\usepackage{enumerate}
\usepackage{palatino}
\usepackage[usenames,dvipsnames]{color}
\usepackage{units}
\usepackage{tikz}
\usepackage[utf8]{inputenc}
\usepackage[T1]{fontenc}
\usepackage[top=1in, bottom=1in, left=1in, right=1in]{geometry}
\usepackage{algorithm}
\usepackage{algpseudocode}
\usepackage{enumitem}
\usepackage{varwidth}
\usepackage[bookmarks=false]{hyperref}

\usepackage{xargs}                      % Use more than one optional parameter 
\usepackage[colorinlistoftodos,prependcaption,textsize=footnotesize]{todonotes}

% Tikz commands :
\usetikzlibrary{positioning,snakes,mindmap,calc,fit,arrows,shapes}

\tikzstyle{box} = [shape=rectangle,draw=black,thick]

% Colors:

\newcommand{\eps}{\epsilon}

% Words in Mathrm :

\newcommand{\poly}{\mathrm{poly}}

% Theorems :
\newtheorem{theorem}{Theorem}[section]

\newtheorem{property}{Property}[section]

\newtheorem{lemma}[theorem]{Lemma}

\newtheorem{definition}[theorem]{Definition}

\newtheorem{remark}[theorem]{Remark}

\newcommand{\wt}{\mathrm{wt}}

\newcommand{\pienc}{\Pi_\mathrm{enc}^\mathrm{oblivious}}
\newcommand{\piblk}{\Pi_\mathrm{blk}}
\newcommand{\piblkrand}{\Pi_\mathrm{blk}}
\newcommand{\piencrand}{\Pi_\mathrm{enc}^\mathrm{random}}
\newcommand{\widt}{\widetilde}

\newcommand{\spka}{\mathsf{speak}_A}
\newcommand{\spkb}{\mathsf{speak}_B}

\newcommand{\E}{\mathbb{E}}
\newcommand{\crateless}{\mathcal{C}^\mathsf{rateless}}
\newcommand{\curerr}{\mathsf{err}}

\newcommand{\malAB}{\mathsf{mal}_\mathrm{AB}}
\newcommand{\malA}{\mathsf{mal}_\mathrm{A}}
\newcommand{\malB}{\mathsf{mal}_\mathrm{B}}
\newcommand{\Niter}{N_\mathsf{iter}}
\newcommand{\Nmal}{N_\mathsf{mal}}
\newcommand{\Ninv}{N_\mathsf{inv}}
\newcommand{\Nsound}{N_\mathsf{sound}}

\newcommand{\cinv}{C_\mathsf{inv}}
\newcommand{\cmal}{C_\mathsf{mal}}
\newcommand{\synca}{\mathsf{sync}_A}
\newcommand{\syncb}{\mathsf{sync}_B}

\newcommand{\rcvsyncb}{\widt{\mathsf{sync}}_B}
\newcommand{\inv}{\mathsf{inv}}
\newcommand{\shrand}{\mathsf{str}}
\newcommand{\shpseudo}{\mathsf{str}_\mathrm{stretch}}
\newcommand{\locrand}{\mathsf{str}^\mathrm{loc}}
\newcommand{\shremain}{\mathsf{str}'}
\newcommand{\chash}{\mathcal{C}^\mathsf{hash}}
\newcommand{\cexch}{\mathcal{C}^\mathsf{exchange}}

\newcommand{\nalt}{\mathsf{alt}}
\newcommand{\mcC}{\mathcal{C}}
\newcommand{\contenc}{\mathsf{Enc}}
\newcommand{\contdec}{\mathsf{Dec}}
\newcommand{\Unif}{\mathrm{Unif}}
\newcommand{\ctrlinfo}{\mathsf{ctrl}}
\newcommand{\cwt}{\mathsf{wt}}
\newcommand{\corrupt}{\mathsf{Corrupt}}

%Todo commands

\def \todotoggle{0}
\newcommand{\mytodo}[1]{\ifnum\todotoggle=1{#1}\fi}

\newcommandx{\unsure}[2][1=]{\mytodo{\todo[linecolor=blue,
backgroundcolor=blue!25,bordercolor=blue,#1]{#2}}}
\newcommandx{\change}[2][1=]{\mytodo{\todo[linecolor=cyan,
backgroundcolor=cyan!25,bordercolor=cyan,#1]{#2}}}
\newcommandx{\info}[2][1=]{\mytodo{\todo[linecolor=green,
backgroundcolor=green!25,bordercolor=green,#1]{#2}}}
\newcommandx{\improve}[2][1=]{\mytodo{\todo[linecolor=red,backgroundcolor=red!25
,bordercolor=red,#1]{#2}}}
\newcommandx{\thiswillnotshow}[2][1=]{\todo[disable,#1]{#2}}

%% Full or short paper?
\newcommand{\FullOrShort}{full}
\ifthenelse{\equal{\FullOrShort}{full}}{
  \newcommand{\fullOnly}[1]{#1}
  \newcommand{\shortOnly}[1]{}
}{

  \newcommand{\fullOnly}[1]{}
  \newcommand{\shortOnly}[1]{#1}
}

\begin{document}

\title{Bridging the Capacity Gap Between Interactive and One-Way Communication}
\author{
Bernhard Haeupler\thanks{Computer Science Department, Carnegie Mellon 
University. Research supported in part by NSF grant CCF-1527110 and the 
NSF-BSF grant ``Coding for Distributed Computing.''
} \\Carnegie Mellon University\\haeupler@cs.cmu.edu\and
Ameya Velingker \thanks{Computer Science Department, Carnegie Mellon 
University. Part of this work was done while the author was visiting 
the Simons Institute for the Theory of Computing, Berkeley, CA. Research 
supported in part by NSF grant CCF-0963975.} \\Carnegie Mellon University\\avelingk@cs.cmu.edu
}

\date{}

\maketitle

\thispagestyle{empty}

\begin{abstract}
We study the communication rate of coding schemes for interactive communication 
that transform any two-party interactive protocol into a protocol that is 
robust to noise. 

Recently, Haeupler~\cite{Haeupler14} showed that if an $\epsilon > 0$ fraction
of transmissions are corrupted, adversarially or randomly, then it is possible to 
achieve a communication rate of $1 - \widetilde{O}(\sqrt{\epsilon})$. 
Furthermore, Haeupler conjectured that this rate is optimal for general input 
protocols. This stands in contrast to the classical setting of one-way 
communication in which 
error-correcting codes are known to achieve an optimal communication rate of 
$1 - \Theta(H(\epsilon)) = 1 - \widt{\Theta}(\epsilon)$. 

In this work, we show that the quadratically smaller rate loss of the one-way 
setting can also be achieved in interactive coding schemes for a very natural 
class of input protocols. We introduce the notion of \emph{average message
length}, or the average number of bits a party sends before receiving a reply,
as a natural parameter for measuring the level of interactivity in a protocol. 
Moreover, we show that any protocol with average message length $\ell = \Omega(\poly(1/\epsilon))$ can be
simulated by a protocol with optimal communication rate $1 - \Theta(H(\epsilon))$ over an oblivious
adversarial channel with error fraction $\epsilon$. Furthermore, under the 
additional assumption of access to public shared randomness, the optimal 
communication rate is achieved \emph{ratelessly}, i.e., the 
communication rate adapts automatically to the actual error rate 
$\epsilon$ without having to specify it in advance.
  
This shows that the capacity gap between one-way and interactive communication 
can be bridged even for very small (constant in $\epsilon$) average message
lengths, which are likely to be found in many applications.

\end{abstract}

%\newpage

%\thispagestyle{empty}

%\tableofcontents

%\ifnum\todotoggle=1
%\listoftodos[Comments/To Do's]
%\fi

\newpage

\setcounter{page}{1}

\section{Introduction}

In this work, we study the communication rate of coding schemes for interactive 
communication that transform any two-party interactive protocol into a protocol 
that is robust to noise.

\subsection{Error-Correcting Codes}
The study of reliable transmission over a noisy channel was pioneered by Shannon's work in the 1940s. He and others showed that error-correcting codes allow one to add redundancy to a message, thereby transforming the message into a longer sequence of symbols, such that one can recover the original message even if some errors occur. This allows fault-tolerant 
transmissions and storage of information. Error-correcting codes have since permeated most modern computation and communication technologies.% including networks, data storage, computation, or data computation, dainvolving data storage, transmission or csuch as storage devices (e.g., disk drives, RAM, DVDs), satellite systems, wireless LAN, etc.

One focus of study has been the precise tradeoff between redundancy and fault-tolerance. In particular, if one uses an error-correcting code that encodes a binary message of length $k$ into a sequence of $n$ bits, 
then the \emph{communication rate} of the code is said to be $k/n$. One 
wishes to make the rate as high as possible. Shannon showed that for the random binary symmetric channel (BSC) 
with error probability $\epsilon$ the (asymptotically) best achievable rate is $C = 1-H(\epsilon)$, where $H(\epsilon) = 
-\epsilon\log_2 {\epsilon} - (1-\epsilon)\log_2 (1-\epsilon)$ denotes 
the \emph{binary entropy function}.

Another realm of interest is the case of \emph{adversarial} errors. In this 
case, the communication channel corrupts at most an $\epsilon$ fraction of the 
total number of bits that are transmitted. Moreover, one wishes to allow the 
receiver to correctly decode the message in the presence of any such error 
pattern. The work of Hamming shows that one can achieve a communication rate of 
$R = 1-\Theta(H(\epsilon))$, in particular, the so-called Gilbert-Varshamov 
bound of $1-H(2\epsilon) > 1 - 2H(\epsilon)$. Determining the optimal rate, or even just the  constant hidden by the asymptotic $\Theta(H(\epsilon))$ term, remains a major open question.

\subsection{Interactive Communication}
The work of Shannon and Hamming applies to the problem of \emph{one-way 
communication}, in which one party, say Alice, wishes to send a message to 
another party, say Bob. However, in many applications, underlying (two-party) 
communications are \emph{interactive}, i.e., Bob's  response to Alice may be 
based on what he received from her previously and vice versa. 
As in the case of one-way communication, one wishes to make such interactive communications robust to noise 
by adding some redundancy. 

At first sight, it seems plausible that one could  use  error-correcting 
codes to encode each round of communication separately. However, this does not
work correctly because the channel might corrupt the codeword of one such round of communication entirely and
as a result derail the entire future conversation. With the naive approach being insufficient, it is not obvious whether it is possible at all to encode interactive protocols in a way that can tolerate some small constant fraction of errors in an 
interactive setting. Nonetheless, Schulman~\cite{Schulman92, Schulman93, 
Schulman96} showed that this is possible and numerous follow-up works over the past 
several years have led to a drastically better understanding of 
error-correcting coding schemes for interactive communications
\shortOnly{(see Appendix~\ref{app:relworks})}.

\subsection{Communication Rates of Interactive Coding Schemes} \label{sec:introICrate}
Only recently, however, has this study led to results shedding light on the tradeoff between the achievable communication rate for a given error fraction or amount of noise.

Kol and Raz~\cite{KR13} gave a communication scheme for random errors that achieves a communication rate of $1-O(\sqrt{H(\epsilon)})$ for any alternating protocol, where $\epsilon>0$ is the error rate. \cite{KR13} also developed powerful tools to prove upper bounds on the communication rate. Haeupler~\cite{Haeupler14} showed communication schemes that achieve a communication rate of $1-O(\sqrt{\epsilon})$ for any oblivious adversarial channel, including random errors, as well as a communication rate of $1-O(\sqrt{\epsilon \log\log(1/\epsilon)})$ for any fully adaptive adversarial channel. These results apply to alternating protocols as well as adaptively simulated non-alternating protocols (see \cite{Haeupler14} for a more detailed discussions). Lastly, given \cite{KR13}, Haeupler conjectured these rates to be optimal for their respective settings. Therefore, there is an almost quadratic gap between the conjectured rate achievable in the interactive setting and the $1-\Theta(H(\epsilon))$ rate known to be optimal for one-way communications.

\subsection{Results}

In this paper, we investigate this communication rate gap. In particular, we show 
that for a natural and large class of protocols this gap disappears. Our primary 
focus is on protocols for \emph{oblivious adversarial} channels. Such a channel 
can corrupt any $\epsilon$ fraction of bits that are exchanged in the execution 
of a protocol, and the simulation is required to work, with high probability, 
for any such error pattern. This is significantly stronger, more interesting, 
and, as we will see, also much more challenging than the case of independent 
random errors. We remark that, in contrast to a \emph{fully adaptive 
adversarial} channel, the decision whether an error happens in a given round is 
not allowed to depend on the transcript of the execution thus far. This seems to 
be a minor but crucially necessary restriction (see also Section 
\ref{sec:conceptual}).

%However, the positions of the corruptions are chosen prior to the start of the protocol. This is in
%contrast to the \emph{adaptive adversarial} channel in which the adversary can
%adaptively decide on which locations to corrupt based on the transcript of the 
%execution thus far (as well as what the parties are sending in the current 
%round).

As mentioned, the conjectured optimal communication rate of $1-O(\sqrt{\epsilon})$ for the oblivious adversarial setting is worse than the $1-O(H(\epsilon))$ communication rate achievable in the one-way communication settings. However, the conjectured upper bound seems to be tight mainly for ``maximally interactive'' protocols, i.e., protocols in which the party that is sending
bits changes frequently. In particular, \emph{alternating} protocols, in which
Alice and Bob take turns sending a single bit, seem to require the most
redundancy for a noise-resilient encoding. On the other hand, the usual
one-way communication case in which one party just sends a single message
consisting of several bits is an example of a ``minimally
interactive'' protocol. It is a natural question to consider
what the tradeoff is between achievable communication rate and the level of interaction that takes
place. In particular, most natural real-world protocols are rarely ``maximally
interactive'' and could potentially be simulated with communication rates going well beyond
$1-O(\sqrt{\epsilon})$. We seek to investigate this possibility.

Our first contribution is to introduce the notion of \emph{average 
message length} as a natural measure of the interactivity of a protocol in the 
context of analyzing communication rates. Loosely speaking, the average message 
length of an $n$-round protocol corresponds to the average number of bits a 
party sends before receiving a reply from the other party. A lower average 
message length roughly corresponds to more interactivity in a protocol, e.g., 
a maximally interactive protocol has average message length 1, while a 
one-way protocol with no interactivity has average message length $n$. The 
formal definition of average message length appears as 
Definition~\ref{def:avgmsg} in Section~\ref{subsec:msglen}.

Our second and main contribution in this paper is to show that for protocols with
an average message length of at least some constant in $\epsilon$ (but independent of
the number of rounds $n$) one can go well beyond the $1-\Theta(\sqrt{\epsilon})$ communication rate achieved by 
\cite{Haeupler14} for channels with oblivious adversarial errors. In fact, we 
show that for such protocols one can actually achieve a communication rate of 
$1-\Theta(H(\epsilon))$, matching the communication rate for one-way 
communication up to the (unknown) constant in  the $H(\epsilon)$ term.

\begin{theorem} \label{thm:mainoblivious}
For any $\eps > 0$ and any $n$-round interactive protocol $\Pi$ 
with average message
length $\ell = \Omega(\mathrm{poly}(1/\epsilon))$, it is possible to 
encode $\Pi$ into a protocol over the same alphabet which, with probability at 
least $1 - \exp(-n \epsilon^6)$, simulates 
$\Pi$ over an oblivious adversarial channel with an $\epsilon$ fraction of errors while 
achieving a communication rate of $1-\Theta(H(\epsilon)) = 1-\Theta(\epsilon\log(1/\epsilon))$.
\end{theorem}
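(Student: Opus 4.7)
The plan is to build the simulation in three layers, leveraging the large average message length $\ell$ to amortize overheads that would otherwise ruin the rate.

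\textbf{Layer 1 (Chunking).} First I would partition $\Pi$ into blocks following its natural speaker structure, grouping consecutive same-party bits so that each block has length $\Theta(\ell)$, with the total number of blocks being $M = \Theta(n/\ell)$. Since $\ell = \Omega(\mathrm{poly}(1/\epsilon))$, the block count $M$ is small relative to $n$, and each block is a one-way transmission between the speaker changes.

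\textbf{Layer 2 (Inner ECC plus Hashing).} On each block I would apply a Gilbert--Varshamov-style error correcting code at rate $1 - \Theta(H(\epsilon))$ that locally corrects up to an $\epsilon$ fraction of errors. Within each encoded block I would also embed a short robust hash of size $\mathrm{poly}(1/\epsilon)$ certifying the sender's view of the transcript so far, so that the receiver can detect whether the inner decoding succeeded (up to a hash-collision probability of $2^{-\mathrm{poly}(1/\epsilon)}$). The hash overhead per block is $O(\mathrm{poly}(1/\epsilon)/\ell) = o(H(\epsilon))$ by the assumption on $\ell$.

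\textbf{Layer 3 (Outer Interactive Recovery).} On top of this I would run an interactive coding scheme in the style of \cite{Haeupler14} \emph{at the block level}: as long as hashes agree, parties advance one block at a time; on a hash mismatch, they invoke a short interactive recovery subroutine to identify the point of divergence and resend. The outer scheme never pays the full $\ell$ bits for retransmission; instead, by treating the receiver's garbled block as side information and communicating only the differences (again using hash-based ``edit'' routines), each recovery costs amortized $O(H(\epsilon)\,\ell)$ bits. Because any block failure forces the adversary to spend $\Omega(\epsilon \ell)$ physical errors to overwhelm the inner ECC, the total number of recovery events is at most $O(n/\ell)$, giving a cumulative recovery overhead of $O(H(\epsilon))\cdot n$.

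\textbf{Rate accounting and failure probability.} Combining the three contributions gives a total overhead of $\Theta(H(\epsilon)) + o(H(\epsilon)) + O(H(\epsilon)) = \Theta(H(\epsilon))$, yielding the desired rate $1 - \Theta(H(\epsilon))$. The only randomness comes from the hash families, and a standard Chernoff/martingale argument over the $M$ hash checks gives the claimed $1 - \exp(-n\epsilon^6)$ success probability (the exponent $\epsilon^6$ arising from the interaction of the poly$(1/\epsilon)$ hash length with the number of potential collision events).

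\textbf{Main obstacle.} The crux is the recovery bookkeeping. Since the oblivious adversary can in principle cause a large fraction of blocks to fail by concentrating errors at rate just above $\epsilon$ in each block, the naive recovery cost of $\Theta(\ell)$ per failure gives a ruinous $\Theta(n)$ total. What I expect to be the hard part is carefully designing the recovery subroutine so that its amortized cost per failed block is $O(H(\epsilon)\,\ell)$ rather than $\Theta(\ell)$ --- essentially a source-coding-with-side-information argument nested inside a Haeupler-style hash synchronization scheme --- and then proving a potential-function bound charging every recovery bit against $\Omega(1/H(\epsilon))$ physical errors in the adversary's budget.
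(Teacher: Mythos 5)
Your plan captures the right high-level shape (blocking, per-block error correction, per-block hash verification, a Haeupler-style outer synchronization layer), but it omits the two ideas the paper identifies as \emph{indispensable} for beating $1-\Omega(\sqrt{\epsilon})$ against an oblivious adversary, and without them the scheme you describe runs straight into the barrier argument the paper itself gives in Section~\ref{sec:conceptual}.

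First, your Layer~2 uses a fixed-rate systematic block code whose parity positions are deterministic and known to the adversary. The paper's window attack shows this is fatal: pick a window of $(\epsilon/\epsilon')N$ consecutive rounds, corrupt every parity bit inside it (this costs only about $\epsilon N$ out of the $\epsilon N$ budget), and then a handful of extra data corruptions derail every block in the window because they now have no error correction. The window has length $N/\log(1/\epsilon)$ when $\epsilon'=H(\epsilon)$, so the rate loss is $\Omega(1/\log(1/\epsilon))\gg H(\epsilon)$. The paper's fix is \emph{information hiding}: the positions of control/redundant information within each mini-block are drawn from shared randomness and hence invisible to the oblivious adversary, so any adversary wanting to kill a constant fraction of a block's control bits must pay a constant fraction of the block. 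Your proposal has no analogue of this, and the randomness in your hash families does not help because hashes live in fixed slots.

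Second, your Layer~3 ``source-coding-with-side-information edit routine'' is asked to do the work that the paper's \emph{rateless code} does, but you have not specified it, and committing to a fixed relative distance $\Theta(\epsilon)$ is exactly what the paper argues cannot work: an adversary that concentrates slightly more than half the distance in a block forces a full failure at cost only $\Theta(\epsilon\ell)$, and over the whole protocol it can corrupt essentially \emph{every} block this way. The paper sidesteps this by letting the sender stream additional incremental redundancy (Lemma~\ref{lem:ecc}) until the receiver can certify a successful decode, so the effective redundancy automatically tracks how much the adversary actually spent on that block. Without rateless behavior (or a rigorous side-information protocol with the same adaptivity, which you only gesture at), you cannot charge recovery bits against the adversary's budget at the $\Omega(1/H(\epsilon))$ exchange rate you assume. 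Finally, the theorem is stated in the private-randomness model, so you also need the randomness-exchange preamble (Section~\ref{subsec:randexch}) before any shared hash seeds or hidden locations are available; the $\epsilon^6$ in the failure probability is not a hash-length artifact but comes from the paper's choice $\epsilon'=\epsilon^2$ together with $\Niter=\Theta(n\epsilon')$ in Lemma~\ref{lem:malbound}, giving $2^{-\Omega(\epsilon'^2 \Niter)}=2^{-\Omega(n\epsilon^6)}$.
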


Under the (simplifying) assumption of \emph{public shared randomness}, our 
protocol can furthermore be seen to have the nice property of being 
\emph{rateless}. This means that the communication rate adapts automatically 
and only depends on the actual error rate $\eps$ without having to specify or 
know in advance what amount of noise to prepare for.

\begin{theorem} \label{thm:rateless}
 Suppose Alice and Bob have access to public shared randomness. For any 
$\epsilon' > 0$ and any $n$-round interactive protocol $\Pi$ with average 
message length $\ell = \Omega(\mathrm{poly}(1/\epsilon'))$, it is possible to 
encode $\Pi$ into protocol $\Pi_\mathrm{rateless}$ over 
the same alphabet such that for any \emph{true error rate} $\epsilon$, executing 
$\Pi_\mathrm{rateless}$ for $n(1 + O(H(\epsilon)) + 
O(\epsilon'\,\mathrm{polylog}(1/\epsilon')))$ rounds simulates $\Pi$ with 
probability at least $1 - \exp(-n\epsilon'^3)$.
\end{theorem}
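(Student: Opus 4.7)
The plan is to obtain $\Pi_\mathrm{rateless}$ by upgrading the fixed-rate construction underlying Theorem~\ref{thm:mainoblivious} into an incremental-redundancy scheme, using the public shared randomness as an unbounded supply of fresh hash functions that tell the sender when each chunk has been decoded. The design parameter $\epsilon'$ enters only through block sizes and hash lengths; all quantities depending on the true error rate $\epsilon$ are driven by the receiver's feedback, which is what makes the scheme adaptive. First, I would reuse the block decomposition of $\Pi$ from Theorem~\ref{thm:mainoblivious}: partition the protocol into $\Theta(n/L)$ super-blocks of length $L = \Theta(\mathrm{poly}(1/\epsilon'))$ aligned with the direction changes in $\Pi$. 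Since the average message length is $\Omega(\mathrm{poly}(1/\epsilon'))$, this can be done so that a $1 - O(\epsilon')$ fraction of super-blocks are entirely one-directional, with the remainder absorbed into the same $O(\epsilon'\,\mathrm{polylog}(1/\epsilon'))$ synchronization overhead already analyzed in Theorem~\ref{thm:mainoblivious}.

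Within each one-directional super-block, the sender would stream a capacity-achieving rateless one-way code (for instance successively refined random linear symbols, or the expurgated construction of~\cite{Haeupler14} run indefinitely) instead of a fixed-length encoding. The public random string is parsed into a sequence of independent hash functions $h_1, h_2, \dots$ of output length $\Theta(\log(1/\epsilon'))$. After each sub-block of $\Theta(\mathrm{poly}(1/\epsilon'))$ streamed symbols, the receiver applies the next unused $h_i$ to its current tentative decoding and sends it on the reverse channel; the sender, which knows what it intends to convey, compares $h_i$ on its own chunk and advances to the next super-block as soon as it observes a match. These hash exchanges consume an $O(\epsilon'\,\mathrm{polylog}(1/\epsilon'))$ fraction of the total transmissions, which accounts for the second additive overhead term. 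Standard one-way capacity bounds imply that the receiver's decoder becomes correct within $L(1 + O(H(\epsilon_b)))$ streamed symbols in a super-block whose local error rate is $\epsilon_b$, and summing across super-blocks via concavity of $H(\cdot)$ bounds $\sum_b L (1+O(H(\epsilon_b)))$ by $n(1+O(H(\epsilon)))$, yielding the main term in the theorem. The per-check error probability is at most $2^{-\Theta(\log(1/\epsilon'))}$ because the oblivious adversary commits to its error pattern before seeing the shared randomness, so a Chernoff/union bound over the $\Theta(n\cdot\mathrm{poly}(\epsilon'))$ checks supplies the $\exp(-n\epsilon'^{3})$ failure bound.

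The main obstacle is controlling the regime in which an oblivious adversary concentrates most of its $\epsilon$-fraction noise budget into a small number of super-blocks. In those blocks the local error rate can greatly exceed $\epsilon$, so the rateless stream is long and the hash-exchange checkpoints fire many times before a match. One has to argue that the aggregate excess transmission length telescopes, via concavity of the binary entropy, back to at most $n\cdot O(H(\epsilon))$ rather than something catastrophically larger, and simultaneously that no corrupted hash message causes the sender to advance prematurely. The latter is exactly where the obliviousness of the adversary is essential: without it, the noise could be chosen adaptively to forge a hash collision with constant probability, breaking synchronization and the entire rate bound. With obliviousness, the shared randomness keeps every hash check statistically independent of the error pattern, and the analysis reduces to a clean concentration argument combining the Shannon capacity of the inner rateless code with a union bound over checkpoints.
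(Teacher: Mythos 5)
Your opening observation is exactly right and is the crux of the paper's own argument: the design parameter $\epsilon'$ controls all fixed choices (block sizes, hash output lengths, the rateless code $\crateless$), while the true noise level $\epsilon$ enters only through how many iterations need to run before the receiver's feedback allows the sender to advance. But having noticed this, the paper's proof is a one-paragraph reduction: take $\pienc$ from Theorem~\ref{thm:nonrate}, delete the initial randomness-exchange phase (the only part of the construction whose size scales with $\epsilon$), seed $\shrand$ directly from the public shared randomness, and observe that the potential-function analysis of Theorem~\ref{thm:nonrate} already shows that after $\Niter = \frac{n'}{b}(1+\Theta(\epsilon\log(1/\epsilon)))$ iterations one has $\ell^+ \geq n'$ with probability $1-2^{-\Omega(\epsilon'^2\Niter)}$, which gives exactly the stated round count and failure probability. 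Nothing new needs to be built: the fixed-rate scheme of Theorem~\ref{thm:mainoblivious} already \emph{is} an incremental-redundancy scheme internally, and the only $\epsilon$-dependent commitment it made was the randomness-exchange codeword length.

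Instead of invoking this reduction, you set out to construct a protocol from scratch: super-blocks, streamed one-way rateless symbols, hash checkpoints on the reverse channel, and then ``a clean concentration argument combining the Shannon capacity of the inner rateless code with a union bound over checkpoints.'' That last phrase is where the gap is. The sketch omits precisely the machinery the paper spends Sections 5.4--5.6 developing, and without it the argument does not close. Concretely: (i) you have no mechanism for recovering when a forged hash match causes the sender to advance and Alice and Bob's transcripts diverge --- the meeting-point backtracking, the scales $\widt{k}_A$, and the $\malA,\malB$ bookkeeping exist specifically to handle this and are not optional; (ii) your concavity bound $\sum_b L(1+O(H(\epsilon_b))) \leq n(1+O(H(\epsilon)))$ ignores the feedback delay between the receiver decoding and the sender learning of it, and more importantly is circular because the number of symbols streamed in a super-block grows with the noise the adversary injects \emph{into the extension}, not just into the first $L$ symbols --- controlling this requires the specific partial-read distance guarantee $\delta_j = H^{-1}\left(\frac{j-s}{j}-\frac{1}{4s}\right)$ of Lemma~\ref{lem:ecc} fed into the potential function's case analysis (Lemma~\ref{lem:synctrans}, Case~3), not a generic ``Shannon capacity'' appeal; and (iii) ``statistically independent of the error pattern'' is not automatic for an oblivious adversary who knows the protocol's structure --- the paper needs the explicit information-hiding step of randomizing control-bit positions within each mini-block (Section~5.4.2) so that a budget-$\nu$ corruption of a mini-block translates into a budget-$O(\nu)$ corruption of the hidden control bits with high probability (Lemma~\ref{lem:whpcontrol}). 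In short, your high-level parameter separation is correct, but the proof you propose would have to rebuild essentially all of Theorem~\ref{thm:nonrate}; the paper instead \emph{reuses} it, and the reuse is the theorem.
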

We note that one should think of $\epsilon'$ in Theorem~\ref{thm:rateless} as 
chosen to be very small, in particular, smaller than the smallest amount of 
noise one expects to encounter. In this case, the communication rate of the 
protocol simplifies to the optimal $1 - O(H(\epsilon))$ for essentially any 
$\epsilon > \epsilon'$. The only reason for not choosing $\epsilon'$ too small 
is that it very slightly increases the failure probability. As an example, 
choosing $\epsilon'=o(1)$ suffices to get ratelessness for any constant 
$\epsilon$ and still leads to an essentially exponential failure probability. 
Alternatively, one can even set $\epsilon' = n^{-1/6}$ which leads to optimal 
communication rates even for tiny sub-constant true error fractions $\epsilon > 
n^{-0.2}$ while still achieving a strong sub-exponential failure probability of 
at most $\exp(-\sqrt{n})$.

\global\def\RelatedWorkSection{
Schulman was the first to consider the question of coding for 
interactive communication and showed that one can tolerate an adversarial error 
fraction of $\epsilon = 1/240$ with an unspecified constant 
communication rate~\cite{Schulman92, Schulman93, Schulman96}. Schulman's result 
also 
implies that for the easier setting of random errors, one can tolerate any
error 
rate bounded away from $1/2$ by repeating symbols multiple times. Since 
Schulman's seminal work, there has been a number of subsequent works pinning 
down the tolerable error fraction. For instance, Braverman and Rao~\cite{BR14} 
showed that any error fraction $\epsilon < 1/4$ can be tolerated in the realm 
of adversarial errors, provided that one can use larger alphabet sizes, and 
this bound was shown to be optimal. A series of subsequent works~\cite{BE14, 
GH14, GHS14, EGH15, FGOS15} worked to determine the error rate region under 
which non-zero communication rates can be obtained for a variety of models, 
e.g., adversarial errors, random errors, list-decoding, adaptivity, and channels 
with feedback. Unlike the initial coding schemes of \cite{Schulman96} and \cite{BR14} that relied on tree codes and as a result required exponential time computations, many of the newer coding schemes are 
computationally efficient~\cite{BK12, BN13, BKN14, GMS14, GH14}. All these results achieve 
small often unspecified constant communication rate of $\Theta(1)$ which is fixed and independent of amount of noise. Only the works of \cite{KR13} and \cite{Haeupler14}, which are already discussed above in Section \ref{sec:introICrate} achieve a communication rate approaching $1$ for error fractions going to zero.

%However, it should be noted that the question of determining the maximum tolerable error fraction for coding schemes over \emph{binary} alphabets is still an important open question.

%More recent results relevant to this paper have focused on determining the maximum communication rate 
%that can be obtained for a given noise level $\epsilon$. In particular, one 
%can ask what the optimal communication rate for interactive protocols is as 
%$\epsilon \to 0$. Kol and Raz~\cite{KR13} were the first to consider such a 
%question and studied the model of random errors. They showed that under 
%certain assumptions, the communication rate must obey an upper bound of 
%$1-\Omega(\sqrt{H(\epsilon)})$ for general protocols, a quantity that is much 
%lower than the $1-H(\epsilon)$ capacity for one-way communication under the 
%BSC. Moreover, their work gave an interactive coding scheme that achieves a 
%rate of $1-O(\sqrt{H(\epsilon)})$ for alternating protocols
%
%However, the model studied in this work assumes that the output protocol under a coding
%scheme be \emph{non-adaptive}, i.e., has a speaking order that is fixed. The
%non-adaptivity assumption is critical, as the improved coding schemes in the
%subsequent work of Haeupler~\cite{Haeupler14} as well as this work require
%adaptivity of the output protocol.
}

\fullOnly{
\subsection{Further Related Works}
\RelatedWorkSection
}
\shortOnly{
\section{Preliminaries}
The definitions and notation for the coding for interactive communication setting used throughout the paper are standard and are stated in detail in Appendix~\ref{app:prelim}.
}

\global\def\Prelim{
An \emph{interactive protocol} $\Pi$ consists of communication performed by 
two parties, Alice and Bob, over a channel with alphabet $\Sigma$. Alice has 
an input $x$ and Bob has an input $y$, and the protocol consists of $n$ 
\emph{rounds}. During each 
round of a protocol, each party decides whether to listen or transmit a symbol 
from $\Sigma$, based on his input and the player's \emph{transcript} thus 
far. Alice's \emph{transcript} is defined as a tuple of symbols from $\Sigma$, 
one for each round that has occurred, such that the $i^\text{th}$ symbol is 
either (a.) the symbol that Alice sent during the $i^\text{th}$ round, if she 
chose to transmit, or (b.) the symbol that Alice received, otherwise.

Moreover, protocols 
can utilize 
\emph{randomness}. In the case 
of \emph{private randomness}, each party is given its own infinite string of 
independent uniformly random bits as part of its input. In the case of \emph{shared randomness}, 
both 
parties have access to a common infinite random string during each round. In 
general, our protocols will utilize private randomness, unless otherwise 
specified.

In a \emph{noiseless} setting, we can assume that in any round, exactly one 
party speaks and one party listens. In this case, the listening party simply 
receives the symbol sent by the speaking party.

The \emph{communication order} of a protocol refers to the order in which Alice 
and Bob choose to speak or listen. A protocol is \emph{non-adaptive} if the 
communication order is fixed prior to the start of the protocol, in which case, 
whether a party transmits or listens depends only on the round number. A simple 
type of non-adaptive protocol is an \emph{alternating} protocol, in which one 
party transmits during odd numbered rounds, while the other party transmits 
during even numbered rounds. On the other hand, an \emph{adaptive} protocol is 
one in which the communication order is not fixed prior to the start; 
therefore, the communication order can vary depending on the transcript of the 
protocol. In particular, each party's decision whether to speak or listen 
during a round will depend on his input, randomness, as well as the transcript 
of the protocol thus far.

For an $n$-round protocol over alphabet $\Sigma$, one can define an associated
\emph{protocol tree} of depth $n$. The protocol tree is a rooted tree in 
which each non-leaf node of the tree has 
$|\Sigma|$ children, and the outgoing edges are labeled by the elements of 
$\Sigma$. Each non-leaf node is owned by some player, and the owner of the node 
has a \emph{preferred} edge that emanates from the node. The preferred edge is 
a function of the owner's input and any randomness that is allowed. Also, leaf 
nodes of the protocol tree correspond to ending states.

A proper execution of the protocol corresponds to the unique path from the root 
of the protocol tree to a leaf node, such that each traversed edge is the 
preferred edge of the parent node of the edge. In this case, each edge along 
the path can be viewed as a successive round in which the owner of the parent 
node transmits the symbol along the edge.

An example of a protocol tree is shown in Figure~\ref{fig:protocoltree}.

\begin{figure}[h]
 \centering
 \includegraphics[width=5in]{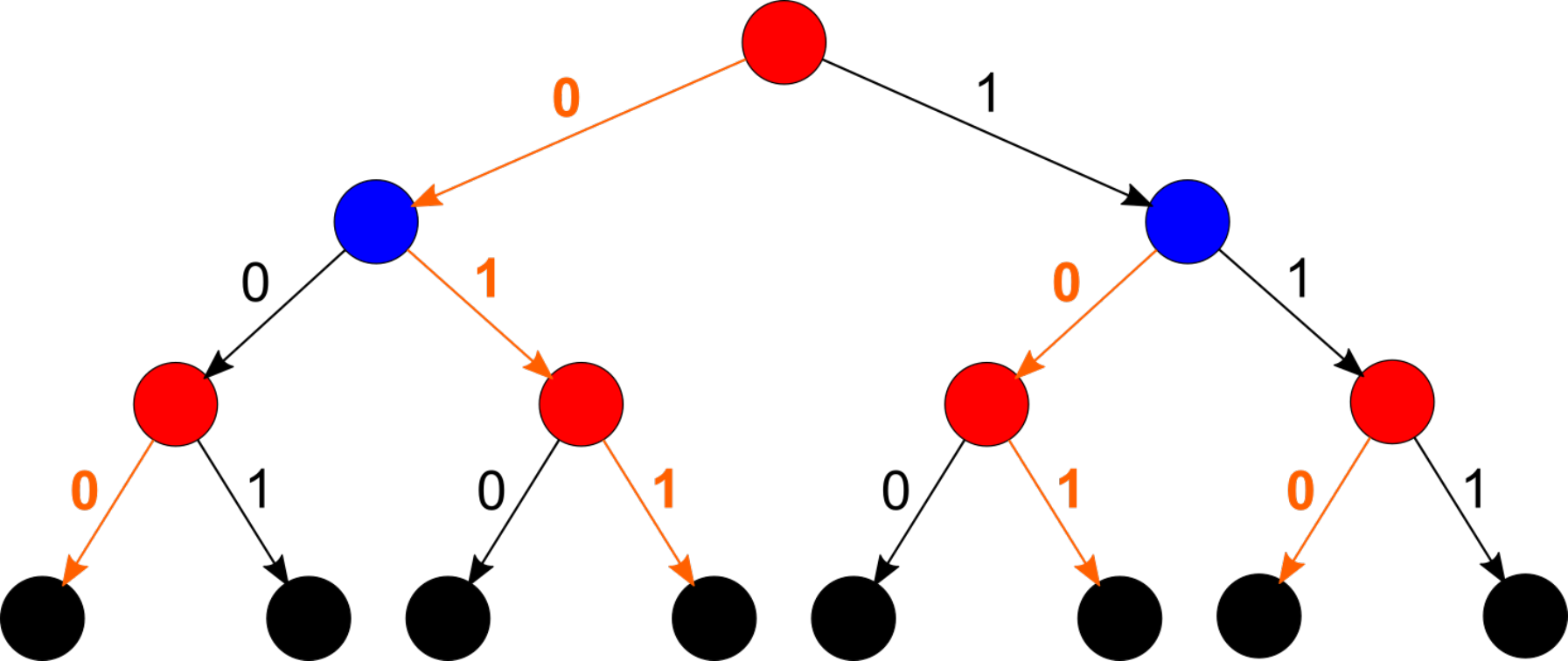}
 \caption{An example of a protocol tree for a 3-round interactive protocol. 
Nodes owned by Alice are colored red, while those owned by Bob are colored 
blue. Note that Alice always speaks during the first and third rounds, while 
Bob 
speaks during the second round. The orange edges are the set of preferred edges 
for some choice of inputs of Alice and Bob. In this case, a proper execution of 
the protocol corresponds to the path ``011.''}
 \label{fig:protocoltree}
\end{figure}
}

\global\def\CommChannels{
For our purposes, the communication between the two parties occurs over a 
\emph{communication channel} that delivers a possibly corrupted version of the 
symbol transmitted by the sending party. In this work, transmissions will be 
from a \emph{binary} alphabet, i.e., $\Sigma = \{0,1\}$.

In a \emph{random error channel}, each transmission occurs over a binary 
symmetric channel with crossover probability $\epsilon$. In other words, in 
each round, if only one party is speaking, then the transmitted bit gets 
corrupted with probability $\epsilon$.

This work mainly considers the \emph{oblivious adversarial channel}, in which 
an adversary gets to corrupt at most $\epsilon$ fraction of the total number of 
rounds. However, the adversary is restricted to making his decisions prior to 
the start of the protocol, i.e., the adversary must decide which rounds to 
corrupt independently of the communication history and randomness used by Alice 
and Bob. For each round that the adversary decides to corrupt, he can either 
commit a \emph{flip} error or \emph{replace} error. Suppose a round has one 
party that speaks and one party that listens. Then, a flip error means that 
the listening party receives the opposite of the bit that the transmitting 
party sends. On the other hand, a replace error requires the adversary to 
specify a symbol $\alpha\in\Sigma$ for the round. In this case, the listening 
party receives $\alpha$ regardless of which symbol was sent by the transmitting 
party.

An {adaptive adversarial channel} allows an adversary 
to corrupt at most 
$\epsilon$ fraction of the total number of rounds. However, in this case, the 
adversary does not have to commit to which rounds to corrupt prior to the start 
of the protocol. Rather, the adversary can decide to corrupt a round based on 
the communication history thus far, including what is being sent in the 
current round. Thus, in any round that the adversary chooses to corrupt in 
which one party transmits and one party receives, the adversary can make the 
listening party receive any symbol of his choice.

Note that we have not yet specified the behavior for rounds in which both 
parties speak or both parties listen. Such rounds can occur for \emph{adaptive} 
protocols when the communication occurs over a noisy communication channel.

If both parties speak during a round, we stipulate that neither party receives 
any symbol during that round (since neither party is expecting to receive a 
symbol).

Moreover, we stipulate that in rounds during which both parties listen, the 
symbols received by Alice and Bob are unspecified. In other words, an 
arbitrary symbol may be delivered to each of the parties, and we require that 
the protocol work for any choice of 
received symbols. Alternatively, one can imagine that the adversary 
chooses arbitrary symbols for Alice and Bob to receive without this being 
counted as a corruption (i.e., a free corruption that is not counted toward 
the budget of $\epsilon$ fraction of corruptions). The reason for this model is 
to disallow the possibility of transmitting information by using silence. 
An extensive discussion on the appropriateness of this error model can be 
found in~\cite{GHS14}.
}

\fullOnly{
\section{Preliminaries}
\Prelim
\subsection{Communication Channels} \label{subsec:commchannel}
\CommChannels
}

\section{Average Message Length and Blocked Protocols}\label{subsec:msglen}
One conceptual contribution of this work is to introduce the notion of 
\emph{average message length} as a natural measure of the level of interactivity of a protocol. While this paper uses it only in the  
context of analyzing the optimal rate of interactive coding schemes, we believe that this notion and parametrization will also be useful in other settings, such as compression. Next, we define this notion formally.

\begin{definition}\label{def:avgmsg}
 The \emph{average message length} $\ell$ of an $n$-round interactive protocol
$\Pi$ is the minimum, over all paths in the protocol tree of $\Pi$, of the
average length in bits of a maximal contiguous block (spoken by a single
party) down the path.

More precisely, given any string $s\in\{0,1\}^n$, there exist integer message lengths $l_0, \ldots, l_k>0$ such that along the path of $\Pi$ given by $s$ one player (either Alice or Bob) speaks between round $1 + \sum_{j < i} l_j$ and round $\sum_{j \leq i} l_j$ for even $i$ while the other speaks during the remaining intervals, i.e., those for odd $i$. We then define $\ell_s$ to be the average of these message lengths $l_0, \ldots, l_k$ and define the \emph{average
message length} of $\Pi$ to be minimum over all possible inputs, i.e., $\ell = \min_{s\in\{0,1\}^n} \ell_s$.
\end{definition}

An alternate characterization of the amount of interaction in a protocol
involves the number of alternations in the protocol:

\begin{definition}
 An $n$-round protocol $\Pi$ is said to be \emph{$k$-alternating} if any path 
in the protocol tree of $\Pi$ can be divided into at most $k$ blocks of 
consecutive rounds such that only one person (either Alice or Bob) speaks 
during each block.

More precisely, $\Pi$ is \emph{$k$-alternating} if, given any string $s \in
\{0,1\}^n$, there exist $k' \leq k$ integers $r_0, r_1, \dots, r_{k'}$ with $0 = r_0 < \dots < 
r_{k'} = n$, such that along the path of $\Pi$ given by $s$, only one player 
(either Alice or Bob) speaks for rounds $r_i + 1, \dots, r_{i+1}$ for any 
$0\leq i < k'$.
\end{definition}

It is easy to see that the two notions are essentially equivalent, as an $n$-round protocol
with average message length $\ell$ is an $(n/\ell)$-alternating protocol, and a
$k$-alternating $n$-round protocol has average message length $n/k$.
Note that an $n$-round \emph{alternating} protocol has average message
length $1$, while a \emph{one-way} protocol has average message length $n$. 
The average message length can thus be seen as a natural measure for the interactivity of a protocol.

We emphasize that the average message length definition does not require 
message lengths to be uniform along any path or across paths. In particular, 
this allows for the length of a response to vary depending on what was 
communicated before, e.g., the statement the other party has just made---a 
common phenomenon in many applications. Taking as an example real-world 
conversations between two people, responses to statements can be as short as a 
simple ``I agree'' or much longer, depending on what the conversation 
has already covered and what the opinion or input of the receiving party is. 
Thus, a sufficiently large average message length roughly states that while the 
$i^\text{th}$ response of a person can be short or long depending on the history 
of the conversation, no sequence of responses can lead to two parties going back 
and forth with super short statements for too long a period of time. This 
flexibility makes the average message length a highly applicable parameter that 
is reasonably large in most settings of interest. We expect it to be a very 
useful parametrization for questions going beyond the communication rate 
considered here.

However, the non-uniformity of protocols with an average message length bound 
can make the design and analysis of protocols somewhat harder than one would 
like. Fortunately, adding some dummy rounds of communication in a 
simple procedure we call \emph{blocking} allows us to transform any protocol 
with small number of alternations into a much more regularly structured protocol 
which we refer to as \emph{blocked}.

\begin{definition}
 An $n$-round protocol $\Pi$ is said to be \emph{$b$-blocked} if for any 
$1\leq j \leq \lceil n/b \rceil$, only one person (either Alice or Bob) speaks 
during all rounds $r$ such that $(j-1)b < r \leq jb$.
\end{definition}

\begin{lemma}\label{lem:blocking}
Any $n$-round $k$-alternating protocol $\Pi$ can be simulated by a $b$-blocked protocol $\Pi'$ that consists of at most $n+kb$ rounds.
\end{lemma}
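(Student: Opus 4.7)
The plan is to construct $\Pi'$ from $\Pi$ by inserting dummy padding rounds after each maximal contiguous same-speaker block of $\Pi$, so that the cumulative round count at the end of each padded block becomes a multiple of $b$. More concretely, I would have both parties simulate $\Pi$ round by round inside $\Pi'$, maintaining the current node in the protocol tree of $\Pi$. As long as the identity of the speaker at the current node does not change, they simply execute the corresponding round of $\Pi$. Whenever the speaker is about to change (i.e., a maximal same-speaker block of $\Pi$ has just ended), and before handing the floor to the other party, the current speaker transmits arbitrary filler bits (say $0$) for exactly the number of rounds needed to push the total round count up to the next multiple of $b$. Since both parties know the protocol tree of $\Pi$ and the current node, they can compute this padding length deterministically and agree on when the padding ends without exchanging any additional information.

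By construction every maximal same-speaker block of $\Pi'$ begins at a round of the form $jb+1$ and ends at a round of the form $j'b$; hence for every $j \geq 1$ the rounds $(j-1)b+1,\ldots,jb$ lie entirely within a single same-speaker block of $\Pi'$, which is exactly the $b$-blocked property. Correctness of the simulation is immediate: both parties strip the padding rounds (whose positions they know) from their $\Pi'$-transcript to recover the transcript of $\Pi$.

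To bound the number of rounds, fix any path through $\Pi$. Since $\Pi$ is $k$-alternating, this path is partitioned into at most $k$ maximal same-speaker blocks whose lengths sum to $n$. Each block contributes at most $b-1$ padding rounds, since padding is only needed when the cumulative count is not already a multiple of $b$. Summing over the at most $k$ blocks gives at most $k(b-1) \le kb$ padding rounds, and hence at most $n + kb$ rounds total along every path. The only subtlety worth flagging is the adaptive nature of $\Pi$: the positions at which padding is inserted depend on the path, and hence on the parties' inputs. This causes no trouble because the decision to pad depends only on the current node of the protocol tree (which determines the current speaker) and the round count, both of which are common knowledge once the noiseless simulation is underway; no party ever needs to learn anything new in order to know when padding begins or ends.
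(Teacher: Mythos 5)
Your proof is correct and follows essentially the same approach as the paper: both pad each maximal same-speaker block with dummy rounds to round its length up to a multiple of $b$, then bound the overhead by at most $b$ (you note $b-1$) padding rounds per block times at most $k$ blocks. The only cosmetic difference is that the paper inserts the dummy rounds in the middle of each block rather than at its end, and you spell out more explicitly why the path-dependent placement of padding causes no difficulty, which the paper leaves implicit.
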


\global\def\ProofofBlocking{
\begin{proof}[Proof of Lemma~\ref{lem:blocking}]
 Consider the protocol tree of $\Pi$, where each node corresponds to a state of 
the protocol (with the root as the starting state) and each node has at most 
two edges leaving from it (labeled `0' and `1'). Moreover, each node is colored 
one of two colors depending on whether Alice or Bob speaks next in the 
corresponding state, and the edges emanating from the node are colored the 
same. The leaves of the protocol tree are terminating states of 
the protocol, and one can view any (possibly corrupted) execution of the 
protocol as a path from the root to a leaf of the tree, where the edge taken 
from any node indicates the bit that is transmitted by the sender from the 
corresponding state.

Now, consider any path down the protocol tree. We can group the edges of the 
path into maximal groups of consecutive edges of the same color. Now, if any 
group of edges contains a number of edges that is not a multiple of $b$, then 
we add some dummy nodes (with edges) in the middle of the group so that the new 
number of edges in the group is the next largest multiple of $b$. It is clear 
that if we do this for every path down the original protocol tree, then the 
resulting protocol tree will correspond to a protocol $\Pi'$ that is
$b$-blocked 
and simulates $\Pi$ (i.e., each leaf of $\Pi'$ corresponds to a leaf of $\Pi$).

Moreover, note that the number of groups of edges is at most $k$, since $\Pi$ 
is $k$-alternating. Also, the number of dummy nodes we add in each group is at 
most $b$. It follows that the number of nodes (and edges) down any original 
path of $\Pi$ has increased by at most $kn$ in $\Pi'$. Thus, the desired claim 
follows.
\end{proof}
}

\fullOnly{\ProofofBlocking}
\shortOnly{\noindent The proof of Lemma~\ref{lem:blocking} is straightforward 
and appears in Appendix~\ref{app:proofs}.}

\section{Warmup: Interactive Coding for Random Errors}\label{sec:random}
As a warmup for the much more difficult adversarial setting, we first consider 
the setting of random errors, as this will illustrate several ideas including 
blocking, the use of error-correcting codes, and how to incorporate those with 
known techniques in coding for interactive communication.

In this section, we suppose that each transmission of Alice and Bob occurs over a binary symmetric channel with an $\epsilon$ probability of corruption. Recall that we wish to encode an $n$-round protocol $\Pi$ into a protocol $\piencrand$ such that with high probability over the communication channel, execution of $\piencrand$ robustly simulates $\Pi$. By \cite{Haeupler14}, it is known that 
one can achieve a communication rate of $1-O(\sqrt{\epsilon})$. In this section, we show how to go beyond the rate of $1-O(\sqrt{\epsilon})$ for protocols with at least a constant (in $\epsilon$) average message length.

\shortOnly{
%There is a trivial interactive coding scheme that achieves a communication 
%rate of $1-O(H(\epsilon))$ for \emph{non-adaptive} input protocols $\Pi$ with 
%\emph{minimum} message length $\Omega((\log n)/\epsilon)$. The idea is to 
%simply encode each message using an appropriate error-correcting code. The 
%details appear in Appendix~\ref{app:trivialscheme}. In the next section, we 
%build on this scheme by using blocking and existing interactive coding schemes 
%as a blackbox.
}

\global\def\TrivialScheme{
The first coding scheme we present for completeness is a completely trivial and straight forward application of error correcting codes which works for \emph{non-adaptive} protocols $\Pi$ with a guaranteed \emph{minimum} message 
length. In particular, the coding scheme achieves a communication rate of $1-O(H(\epsilon))$ for non-adaptive protocols 
with minimum message length $\Omega((1/\epsilon)\log n)$.

In particular, we assume that $\Pi$ is a a non-adaptive $n$-round protocol with 
message lengths of size $b_1, b_2, \dots, b_k$, i.e., Alice sends $b_1$ bits, 
then Bob sends $b_2$ bits, and so on. Moreover, we assume that that $b_1, b_2, \dots, b_k \geq b$, where $b = \Omega((1/\epsilon)\log n)$ is the minimum message length.

Now, we can form the encoded protocol $\piencrand$ by simply having the 
transmitting party replace its intended message in $\Pi$ (of $b_i$ bits) 
with the encoding (of length, say, $b_i'$) of the message under an 
error-correcting code of minimum relative distance $\Omega(\epsilon)$ and
rate $1-O(H(\epsilon))$  and then transmitting the resulting codeword. The 
receiver then decodes the word according to the nearest codeword of the
appropriate error-correcting code.

Note that for any given message (codeword) of length $b_i'$, the expected 
number of
corruptions due to the channel is $\epsilon b_i'$. Thus, by Chernoff bound, 
the probability
that the corresponding codeword is corrupted beyond half the minimum distance 
of the relevant error-correcting code is
$e^{-\Omega(\epsilon b')} = n^{-\Omega(1)}$. Since $k = O(n/b) = 
O(n\epsilon/\log
n)$, the union bound implies that the probability that any of the $k<n$ messages
is corrupted beyond half the minimum distance is also $n^{-\Omega(1)}$. Thus, with
probability $1-n^{-\Omega(1)}$, $\piencrand$ simulates the original
protocol without error. Moreover, the overall communication rate is clearly $1 
- O(H(\epsilon))$ due to the choice of the error-correcting codes.

\begin{remark}
Note that the aforementioned trivial coding scheme has the disadvantage of 
working only for nonadaptive protocols with a certain \emph{minimum} message 
length, which is a much stronger assumption than average message length. In 
Section~\ref{subsec:randblock}, we show how to get around this problem by 
converting the input protocol to a \emph{blocked} protocol.

Another problem with the coding scheme is that the minimum message length is 
required to be $\Omega_\epsilon(\log n)$. This is in order to ensure that the 
probability of error survives a union bound, as the trivial coding scheme has 
no mechanism for recovering if a particular message gets corrupted. This also results in 
a success probability of only $1 - 1/\poly(n)$ instead of the $1 - \exp(n)$ one would like to 
have for a coding scheme. Section~\ref{subsec:randblock} shows how to rectify both problems by combining the 
reduced error probability of a error correcting code failing with any existing interactive coding scheme, such as \cite{Haeupler14}.
\end{remark}
}

\fullOnly{
\subsection{Trivial Scheme for Non-Adaptive Protocols with Minimum Message 
Length}
\TrivialScheme
}

\subsection{Coding Scheme for Protocols with 
Average Message Length of $\Omega(\log(1/\epsilon)/\epsilon^2)$} 
\label{subsec:randblock}

\fullOnly{In this section, we build on the trivial scheme discussed earlier to provide 
an improved coding scheme that handles any protocol $\Pi$ with an \emph{average
message length} of at least $\ell = \Omega (\log(1/\epsilon) / \epsilon^2)$.}

The first step will be to transform $\Pi$ into a protocol that is blocked. 
Note that the $\Pi$ is a $k$-alternating protocol, where $k = n/\ell = 
O(n\epsilon^2 / \log(1/\epsilon))$. Thus, by Lemma~\ref{lem:blocking}, we can 
transform $\Pi$ into a $b$-blocked protocol $\piblkrand$, for $b = 
\Theta(\log(1/\epsilon)/\epsilon)$, such that $\piblkrand$ simulates $\Pi$ 
and consists of $n_b = n+kb = n(1+O(\epsilon))$ rounds.

Now, we view $\piblk$ as a $q$-ary protocol with $n_b / b$ rounds, 
where $q = 2^b$. This can be done by grouping the symbols in each $b$-sized 
block as a single symbol from an alphabet of size $q$. Next, we can use the 
coding scheme of \cite{Haeupler14} in a blackbox manner to encode this $q$-ary 
protocol as a $q$-ary protocol $\Pi'$ with $\frac{n_b}{b}(1 + 
\Theta(\sqrt{\epsilon'}))$ rounds such that $\Pi'$ simulates $\Pi$ under oblivious random 
errors with error fraction $\epsilon'$ (i.e., each $q$-ary symbol is 
corrupted (in any way) with an independent probability of at most $\epsilon'$). We pick $\epsilon' = \epsilon^4$.

Finally, we transform $\Pi'$ into a \emph{binary} protocol $\piencrand$ as 
follows: We expand each $q$-ary symbol of $\Pi'$ back into a sequence of 
$b$ bits and then expand the $b$ bits into $b' > b$ bits using an 
error-correcting code. In particular, we use an error-correcting code 
$\mathcal{C}: \{0,1\}^b \to \{0,1\}^{b'}$ with block length $b' = b + 
(2c+\delta)\log^2 (1/\epsilon)$ and minimum distance $2c\log(1/\epsilon)$ for 
appropriate constants $c,\delta$ (such a code is guaranteed to exist by the 
Gilbert-Varshamov bound). Thus, $\piencrand$ is a $b'$-blocked binary protocol 
with $n_b\cdot \frac{b'}{b} (1+\Theta(\sqrt{\epsilon'})) = n(1 + 
O(\epsilon\log(1/\epsilon))$ rounds. Moreover, each $b'$-sized block of  
$\piencrand$ simply simulates each $q$-ary symbol of $\Pi'$ and the listening party simply decodes the received $b'$ bits to the nearest codeword of $\mathcal{C}$.

To see that $\piencrand$ successfully simulates $\Pi$ in the presence of 
random errors with error fraction $\epsilon$, observe that a $b'$-block is 
decoded incorrectly if and only if more than $d/2$ of the $b'$ bits are 
corrupted. By the Chernoff bound, the probability of such an event is $< 
\epsilon^4$ (for appropriate choice of $c,\delta$). Thus, since $\Pi'$ is known 
to simulate $\Pi$ under oblivious errors with error fraction $\epsilon^4$, it 
follows that $\piencrand$ satisfies the desired property.
\section{Conceptual Challenges and Key Ideas}\label{sec:conceptual}

In this section, we wish to provide some intuition for the difficulties in 
surpassing the $1-\Theta(\sqrt{\epsilon})$ communication rate for interactive 
coding when dealing with non-random errors. We do this because the adversarial 
setting comes with a completely new set of challenges that are somewhat subtle 
but nonetheless fundamental. As such, the techniques used in the previous 
section for interactive coding under random errors still provide a good 
introduction to some of the building blocks in the framework we use to deal with 
the adversarial setting, but they are not sufficient to circumvent the main 
technical challenges. Indeed, we show in this section that the adversarial 
setting inherently requires several completely new techniques to beat the 
$1-\Theta(\sqrt{\epsilon})$ communication rate barrier. 

We begin by noting that all existing interactive coding schemes encode the input 
protocol $\Pi$ into a protocol $\Pi'$ with a certain type of structure: There 
are some, \emph{a priori} specified, communication rounds which simulate rounds 
of the original protocol (i.e., 
result in a walk down the protocol tree of $\Pi$), while other rounds constitute 
\emph{redundant information} which is used for error correction. In the case of 
protocols that use hashing (e.g., \cite{Haeupler14}, \cite{KR13}), this is 
directly apparent in their description, as rounds in which hashes and control 
information are communicated constitute redundant information. However, this 
is also the case for all protocols based on tree codes (e.g., 
\cite{BR14,GHS14,GH14}): To see this, note that in such protocols, one can 
simply use an underlying tree code that is linear and systematic, with the 
non-systematic portion of the tree code then corresponding to redundant rounds.

We next present an argument which shows that, due to the above structure, no 
existing coding scheme can break the natural $1-\Omega(\sqrt{\epsilon})$ 
communication rate barrier, even for protocols with near-linear $o(n)$ average 
message lengths. This will also provide some intuition about what is required to 
surpass this barrier.

Suppose that for a (randomized) $n$-round communication protocol $\Pi$, the 
simulating protocol $\Pi'$ has the above structure and a communication rate of 
$1 -\epsilon'$. The simulation $\Pi'$ thus consists of exactly $N = n / 
(1-\epsilon')$ rounds. Note that, since every simulation must have at least $n$ 
non-redundant rounds, the fraction of redundant rounds in $\Pi'$ can be at most 
$\epsilon'$. Given that the position of the redundant rounds is fixed, it is 
therefore possible to find a window of $(\epsilon/\epsilon') N$ consecutive 
rounds in $\Pi'$ which contain at most $\epsilon N$ redundant rounds, i.e., an 
$\epsilon'$ fraction. Now, consider an oblivious adversarial channel that 
corrupts all the redundant information in the window along with a few extra 
rounds. Such an adversary renders any error correction technique useless, while 
the few extra errors derail the unprotected parts of the communication, thereby 
rendering essentially all the non-redundant information communicated in this 
window useless as well---all while corrupting essentially only $\epsilon N$ 
rounds in total. This implies that in the remaining $N - (\epsilon/\epsilon') 
N$ communication rounds outside of this window, there must be at least $n$ 
non-redundant rounds in order for $\Pi'$ to be able to successfully simulate 
$\Pi$. However, it follows that $N -  (\epsilon/\epsilon') N \geq n = N (1 - 
\epsilon')$ which simplifies to $1 - (\epsilon/\epsilon') \geq 1 - \epsilon'$, 
or $\epsilon'^2 \geq \epsilon$, implying that the communication rate of $1 - 
\epsilon'$ can be at most $1 - \Omega(\sqrt{\epsilon})$, where $\epsilon$ is the 
fraction of errors applied by the channel. 

%there must be at least $n$ non-redundant rounds in $\Pi'$ that are outside this window in order to simulate the input protocol $\Pi$ even in the absence of any further errors. This implies that $n \geq (1-\epsilon')n + (\epsilon/\epsilon') n$, which implies that $\epsilon' \geq \sqrt{\epsilon}$, meaning that the communication rate must be $1-\Omega(\sqrt{\epsilon})$.

One can note that a main reason for the $1-\Omega(\sqrt{\epsilon})$ limitation 
in the above argument is that the adversary can target the rounds with 
redundant information in the relevant window. For instance, in the interactive 
coding scheme of \cite{Haeupler14}, the rounds with control information are in 
predetermined positions of the encoded protocol, and so, the adversary knows 
exactly which locations to corrupt.

Our idea for overcoming the aforementioned limitations in the case of an 
\emph{oblivious} adversarial channel is to employ some type of 
\textbf{information hiding} to hide the locations of the redundant rounds carrying control/verification 
information. In particular, we randomize the locations of control information 
bits within the output protocol, which allows us to guard against attacks that 
target solely the redundant information. In order to allow for this synchronized 
randomization in the standard \emph{private randomness} model assumed in this 
paper, Alice and Bob use the standard trick of first running an error-corrected 
randomness exchange procedure that allows them to establish some shared 
randomness hidden from the oblivious adversary that can be used for the 
rest of the simulation. Note that this inherently does not work for a 
\emph{fully adaptive} adversary, as the adversary can adaptively choose which 
locations to corrupt based on any randomness that has been shared over the 
channel. In fact, we believe that beating the $1-\Omega(\sqrt{\epsilon})$ 
communication rate barrier against fully adaptive adversaries may be 
fundamentally impossible for precisely this reason. 

Information hiding, while absolutely crucial, does not, however, make use of a 
larger average message length which, according to the conjectures of 
\cite{Haeupler14}, is necessary to beat the $1-\Omega(\sqrt{\epsilon})$ barrier. 
The idea we use for this, as already demonstrated in Section \ref{sec:random}, 
is 
the use of blocking and the subsequent application of error-correcting codes on 
each such block. 

Unfortunately, the same argument as given above shows that a 
straightforward application of \emph{block} error-correcting codes, as done in 
Section \ref{sec:random}, cannot work against an oblivious adversarial channel. 
The reason is that in such a case, an application of \emph{systematic} block 
error-correcting codes would be possible as well, and such codes again have 
pre-specified positions of redundancy which can be targeted by the adversarial 
channel. In particular, one could again disable all redundant rounds including 
the non-systematic parts of block error-correcting codes in a large 
window of $(\epsilon/\epsilon')N$ rounds and make the remaining communication 
useless with few extra errors. More concretely, suppose that one simply encodes 
all blocks of data with 
a standard block error-correcting code. For such block codes, one needs to 
specify \emph{a priori} how much redundancy should be added,
and the natural direction would be to set the relative distance to, say, $100 
\epsilon$ given that one wants to prepare against an error rate of $\epsilon$. 
However, this would allow the adversary to corrupt a constant fraction (e.g., 
$1/200$) of error correcting codes beyond their distance, thus making a constant 
fraction of the communicated information essentially useless. This would lead to 
a communication rate of $1 - \Theta(1)$. It can again be easily seen that in 
this tradeoff, the best fixed relative distance one can choose for block 
error-correcting codes is essentially $\sqrt{\epsilon}$, which would 
lead to a rate loss of $H(\sqrt{\epsilon})$ for the error-correcting codes but 
would also allow the adversary to corrupt at most a $\sqrt{\epsilon}$ fraction 
of all 
codewords. This would again lead to an overall communication rate of $1 - 
\tilde{\Omega}(\sqrt{\epsilon})$.

Our solution to the hurdle of having to 
commit to a fixed amount of redundancy in advance is to use \textbf{rateless 
error-correcting codes}. Unlike block error-correcting codes with fixed 
block length and minimum 
distance, rateless codes encode a message into a potentially \emph{infinite} 
stream of symbols such that having access to enough uncorrupted symbols allows a 
party to decode the desired message with a resulting communication rate that 
\emph{adapts} to the true error rate without requiring \emph{a priori} 
knowledge of the error rate. Since it is not possible for Alice and Bob to know 
in advance which data bits the adversary will corrupt, rateless codes allow 
them to adaptively adjust the amount of redundancy for each communicated block, 
thereby allowing the correction of errors without incurring too 
great a loss in the overall communication rate.

\section{Main Result: Interactive Coding for Oblivious Adversarial Errors}

In this section, we develop our main result. We remind the reader that in the oblivious adversarial setting assumed throughout the rest of this paper, the adversary is allowed to corrupt up to an $\epsilon$ fraction of the total number of bits exchanged by Alice and Bob. The adversary commits to the locations of these bits before the start of the protocol. Alice and Bob will use randomness in their encoding, and one asks for a coding scheme that allows Alice and Bob to recover the transcript of the original protocol with exponentially high probability in the length of the protocol (over the randomness that Alice and Bob use) for any fixed error pattern chosen by the adversary.

%fixed an arbitrary $\leq \epsilon$ fraction of bits to corrupt.

For simplicity in exposition, we assume that the input protocol is 
\emph{binary}, so that the simulating output protocol will also be binary. 
However, the results hold virtually as-is for protocols over larger alphabet. 
We first provide a high-level overview of our construction of an encoded 
protocol. The pseudocode of the algorithm appears in Figure~\ref{fig:oblivious}.

\subsection{High-Level Description of Coding Scheme}\label{subsec:highlevel}
Let us describe the basic structure of our interactive coding scheme. Suppose 
$\Pi$ is an $n$-round binary input protocol with average message length $\ell 
\geq 
\mathrm{poly}(1/\epsilon)$. Using Lemma~\ref{lem:blocking}, we first produce a 
$B$-blocked binary protocol $\piblk$ with $n'$ rounds that simulates $\Pi$.

Our encoded protocol $\pienc$ will begin by having Alice and Bob performing a 
\emph{randomness exchange procedure}. More specifically, Alice will generate 
some number of bits from her private randomness and encode the random string 
using an error-correcting code of an appropriate rate and distance. Alice will 
then transmit the encoding to Bob, who can decode the received string. This 
allows Alice and Bob to maintain \emph{shared random bits}. The randomness 
exchange procedure is described in further detail in 
Section~\ref{subsec:randexch}.

Next, $\pienc$ will simulate the $B$-sized blocks (which we call 
\emph{$B$-blocks}) of $\piblk$ in order in a structured manner. Each $B$-block 
will be encoded as a string of $2B$ bits using a \emph{rateless code}, and the 
encoded string will be divided into \emph{chunks} of size $b < B$. For a 
detailed discussion on the encoding procedure via rateless codes, see 
Section~\ref{subsec:datasend}.

Now, $\pienc$ will consist 
of a series of $\Niter$ \emph{iterations}. Each iteration consists of 
transmitting $b'$ 
rounds, and we call such a $b'$-sized unit a \emph{mini-block}, where $b'>b$. 
Each mini-block will consist of $b$ 
\emph{data bits}, as well as $b'-b$ bits of \emph{control information}. The data 
bits in successive mini-blocks will taken from the successive $b$-sized chunks 
obtained by the encoding under the rateless code. Meanwhile, the control 
information bits are sent by Alice and Bob in order to check whether they are in 
sync with each other and to allow a \emph{backtracking} mechanism to tack place 
if they are not.

For a particular $B$-block that is being simulated, mini-blocks keep 
getting sent until the receiving party of the $B$-block is able to decode 
the correct $B$-block, after which Alice and Bob move on to the next $B$-block 
in $\Pi$.

In addition to data bits, each mini-block also contains $b'-b$ bits of control 
information. A party's unencoded control information during a mini-block 
consists of some hashes of his view of the current state of the protocol as well 
as some backtracking parameters. The aforementioned quantities are encoded 
using a hash for verification as well as an error-correcting code. Each party 
sends his encoded control information as part of each mini-block. The locations 
of the control information within each mini-block will be randomized for 
the sake of \emph{information hiding}, using bits from the shared randomness of 
Alice and Bob. This is described in further detail in 
Section~\ref{subsec:controlinfo}. Moreover, we note that the hashes
used for the control information in each mini-block are seeded using bits from 
the shared randomness. The structure of each mini-block is shown in 
Figure~\ref{fig:protocolstructure}.

After each iteration, Alice and Bob try to decode each other's control 
information in order to determine whether they are in sync. If not, the parties 
decide whether to backtrack in a controlled manner (see 
Section~\ref{subsec:backtrack} for details).

Throughout the protocol, Alice maintains a \emph{block index} $c_A$ (which 
indicates which block of $\piblk$ she believes is currently being simulated), a 
\emph{chunk counter} $j_A$, a \emph{transcript} (of the blocks in 
$\piblk$ that have been simulated so far) $T_A$, a \emph{global counter} $m$ 
(indicating the number of the current iteration), a \emph{backtracking 
parameter} $k_A$, as well as a \emph{sync parameter} $\synca$. Similarly, Bob 
maintains $c_B$, $j_B$, $T_B$, $m$, $k_B$, and $\syncb$.

\begin{figure}[h]
 \centering
 \includegraphics[width=5in]{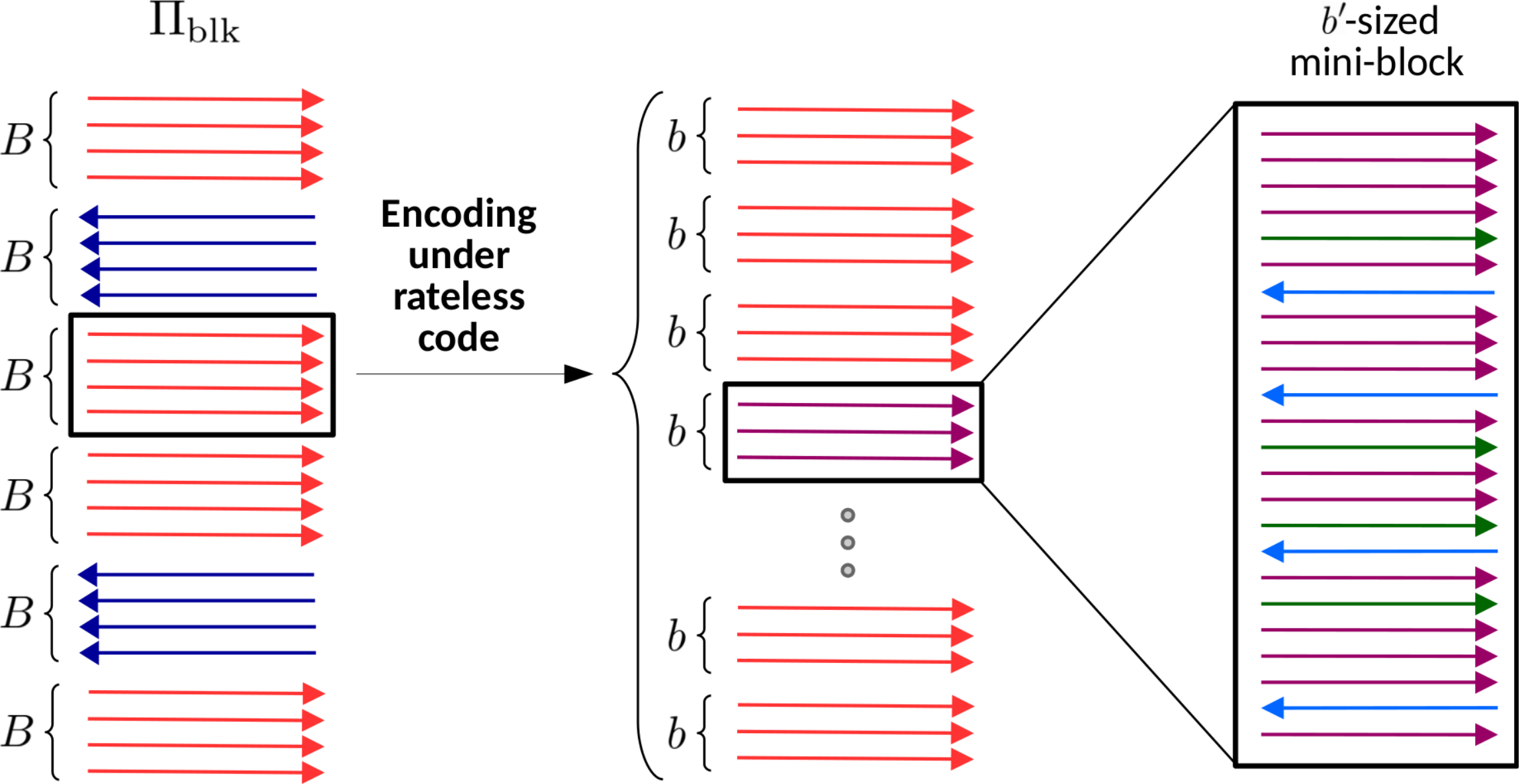}
 \caption{Each $B$-block of $\piblk$ gets encoded into 
chunks of size $b$ using a rateless code. Every $b'$-sized mini-block 
in $\pienc$ consists of the $b$ bits of such a chunk, along with $(b'-b)/2$ 
bits of Alice's control information and $(b'-b)/2$ bits of Bob's control 
information. The positions of the control information within a mini-block 
are randomized. Note that rounds with Alice's control information are in 
green, while rounds with Bob's control information are in light blue.}
 \label{fig:protocolstructure}
\end{figure}

\subsection{Parameters}
We now set the parameters of the protocol. For convenience, we will define a 
\emph{loss parameter} $\epsilon' < \epsilon$. Our interactive coding scheme 
will incur a rate loss of $\Theta(\epsilon'\,\mathrm{polylog}(1/\epsilon'))$, 
in 
addition to the usual rate loss of $\Theta(H(\epsilon))$. Alice and Bob are free 
to decide on an $\epsilon'$ based on what rate loss they are willing to tolerate 
in the 
interactive coding scheme. In particular, note that if $\epsilon' = 
\Theta(\epsilon^2)$, then the rate loss of 
$\Theta(\epsilon'\,\mathrm{polylog}(1/\epsilon'))$ is overwhelmed by 
$\Theta(H(\epsilon))$. For the purposes of Theorem~\ref{thm:mainoblivious}, it 
will suffice to take $\epsilon' = \Theta(\epsilon^2)$ at then end, but for the 
sake of generality, we maintain $\epsilon'$ as a separate parameter.

We now take the average message length threshold to be $\Omega(1/\epsilon'^3)$, 
i.e., we assume that our input protocol $\Pi$ has average message length $\ell 
= \Omega(1/\epsilon'^3)$. Then, $\Pi$ has at most $\nalt = n/\ell = 
O(n\epsilon'^3)$ alternations. Moreover, we take $B = 
\Theta(1/\epsilon'^2)$ and $b = s = \Theta(1/\epsilon')$, with $B = sb$. Then, 
by Lemma~\ref{lem:blocking}, note that $n' \leq n + \nalt\cdot B = n(1 + 
O(\epsilon'))$.

We also take $b' = b +
2c\log(1/\epsilon')$, so that within each $b'$-sized mini-block, each party 
transmits $c\log(1/\epsilon')$ bits of (encoded) control information.

Finally, we take $\Niter = \frac{n'}{b}(1 + 
\Theta(\epsilon\log(1/\epsilon))$ iterations. This will guarantee, with high 
probability, that at the end of the protocol, Alice and Bob have successfully 
simulated all blocks of $\piblk$, and therefore, $\Pi$. Also, it should be 
noted that we append trivial blocks of zeros (sent by, say, Alice) to the end 
of $\piblk$ to simulate in case $\pienc$ ever runs out of blocks of $\piblk$ to 
simulate (because it has reached the bottom of the protocol tree) before 
$\Niter$ iterations of $\pienc$ have been executed.

\subsection{Randomness Exchange}\label{subsec:randexch}
Alice and Bob will need to have some number of shared random bits throughout
the course of the protocol. The random bits will be used for two main purposes:
\emph{information hiding} and \emph{seeding hash functions}, which will be
discussed in Section~\ref{subsec:controlinfo}. As it turns out, it will suffice 
for Alice and Bob to have $l' = 
O(n\epsilon'\,\mathrm{polylog}(1/\epsilon'))$ shared random bits for the 
entirety of the protocol, using some additional tricks.

Thus, in the private randomness model, it suffices for Alice to generate the 
necessary number of random bits and transmit them to Bob using an 
error-correcting code. More precisely, Alice generates a uniformly random 
string $\shrand\in\{0,1\}^{l'}$, uses an error-correcting code $\cexch: 
\{0,1\}^{l'} \to \{0,1\}^{10\epsilon\Niter b'}$ of relative distance $2/5$ to 
encode $\shrand$, and transmits the encoded string to Bob. Since the adversary 
can corrupt only at most $\epsilon$ fraction of all
bits, the transmitted string cannot be corrupted beyond half the minimum
distance of $\cexch$. Hence, Bob can decode the received string and determine 
$\shrand$.

Note that the exchange of randomness via the codeword in $\cexch$ results in a 
rate loss of $\Theta(\epsilon)$, which is still overwhelmed by 
$\Theta(H(\epsilon))$.

\subsection{Sending Data Bits Using ``Rateless'' Error-Correcting
Codes}\label{subsec:datasend}
To transmit data from blocks of $\piblk$, we will use an error-correcting code 
that has incremental distance properties. One can think of this as a rateless 
code with minimum distance properties. Recall 
that $b = s = \Theta(1/{\epsilon'})$ and $B=sb$. In particular, we require an
error-correcting code $\crateless: \{0,1\}^{B} \to \{0,1\}^{2B}$ for which
the output is divided in to $2s$ chunks of $b$ bits each such that the code
restricted to any contiguous block (with cyclic wrap-around) of $> s$ chunks
has a certain guaranteed minimum distance. The following lemma guarantees the
existence of such a code.
\shortOnly{The proof appears in Appendix~\ref{app:proofs}.}

\begin{lemma}\label{lem:ecc}
 For sufficiently large $b, s$, there exists an error-correcting code $\mcC: 
\{0,1\}^{sb} \to \{0,1\}^{2sb}$  such 
that for any $a = 0,1,\dots, 2s-1$ and $j = s+1, s+2,\dots, 2s$, the code 
$\mcC_{a,j}:\{0,1\}^{sb} \to \{0,1\}^{jb}$ formed by 
restricting $\mcC$ to the bits $ab, ab+1, \dots, ab+jb-1$ (modulo $2sb$) has 
relative distance at least $\delta_{j} =
H^{-1}\left(\frac{j-s}{j}
- \frac{1}{4s}\right)$, while $\mcC$ has relative distance at least
$\delta_{2s} = \frac{1}{15}$. (Here, $H^{-1}$ denotes the unique inverse of $H$
that takes
values in $[0,1/2]$.)
\end{lemma}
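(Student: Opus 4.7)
The plan is to prove existence via a standard Gilbert–Varshamov style probabilistic argument applied to a uniformly random binary linear code of message length $sb$ and block length $2sb$. Concretely, I would sample $\mathcal{C}$ by choosing its $sb \times 2sb$ generator matrix uniformly at random over $\mathbb{F}_2$; this makes $\mathcal{C}(x)$ uniform on $\{0,1\}^{2sb}$ for every fixed nonzero message $x$, and the same holds for every coordinate projection, including any windowed restriction $\mathcal{C}_{a,j}(x)$.

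The key computation is to bound, for each fixed nonzero $x$ and each fixed pair $(a,j)$ with $s+1 \le j \le 2s$, the probability that the windowed codeword $\mathcal{C}_{a,j}(x) \in \{0,1\}^{jb}$ has Hamming weight less than $\delta_j\, jb$. Since $\mathcal{C}_{a,j}(x)$ is a uniformly random string of length $jb$, this probability is at most the volume bound $2^{H(\delta_j)\,jb}/2^{jb} = 2^{-(1-H(\delta_j))\,jb}$. The prescribed choice $\delta_j = H^{-1}\!\left(\tfrac{j-s}{j} - \tfrac{1}{4s}\right)$ is tailored precisely so that $1 - H(\delta_j) = \tfrac{s}{j} + \tfrac{1}{4s}$, giving a probability at most $2^{-sb - jb/(4s)}$. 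For the full-length case $j=2s$, the same formula yields $\delta_{2s} = H^{-1}(1/2 - 1/(4s)) \ge 1/15$ for $s$ sufficiently large, matching the stated last clause of the lemma.

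Next I would union-bound over all $2^{sb}-1$ nonzero messages and all $2s$ start positions $a \in \{0,\ldots,2s-1\}$ and all $s$ lengths $j \in \{s+1,\ldots,2s\}$. The total failure probability is at most
\[
\sum_{j=s+1}^{2s} 2s \cdot 2^{sb} \cdot 2^{-sb - jb/(4s)} \;\le\; 2s^2 \cdot 2^{-(s+1)b/(4s)} \;\le\; 2s^2 \cdot 2^{-b/4},
\]
which is strictly less than $1$ once $b$ is large enough (independent of $s$, or jointly in $s,b$ in any reasonable regime). Therefore there exists a fixed linear code $\mathcal{C}$ for which every windowed restriction $\mathcal{C}_{a,j}$ simultaneously attains the stated relative distance, and this code satisfies all the claims of the lemma.

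The main mild obstacle is just bookkeeping: one must handle the cyclic wrap-around in the indexing $ab,ab+1,\dots,ab+jb-1 \pmod{2sb}$, but since the random generator matrix is symmetric in its columns this reindexing does not change the distribution of $\mathcal{C}_{a,j}(x)$, so wrap-around windows are treated identically to contiguous ones. The only other subtlety is checking that $\delta_{2s}\ge 1/15$ does not conflict with the volume bound, which is automatic from the inequality $1 - H(1/15) > 1/2$ for large $s$. No nonstandard machinery is required beyond the binary entropy volume estimate $\sum_{i=0}^{\delta n}\binom{n}{i}\le 2^{H(\delta) n}$.
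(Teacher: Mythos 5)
Your proof is correct and follows essentially the same Gilbert--Varshamov-style random-linear-code argument as the paper: bound the probability a single nonzero codeword is light in a given window using the volume estimate, observe that the choice of $\delta_j$ makes $1-H(\delta_j)=s/j+1/(4s)$ so the $2^{sb}$ factor from the union over messages is absorbed, then union bound over the $O(s^2)$ windows. The only (minor, and actually cleaner) deviation is that you deduce the full-length relative distance $\geq 1/15$ directly from the $j=2s$ case via $H^{-1}(1/2-1/(4s))\geq 1/15$, whereas the paper treats the full code with a separate, slightly redundant, estimate.
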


\global\def\ProofofRatelessCode{
\begin{proof}[Proof of Lemma~\ref{lem:ecc}]
 We use a slight modification of the random coding argument that is often used
to 
establish the Gilbert-Varshamov bound. Suppose we pick a random 
linear code. For $s < j \leq 2s$, let us consider the probability $P_{a,j}$ that
the 
resulting $\mcC_{a,j}$ 
does not have relative distance at least $\delta_j$. Consider any codeword 
$y\in\{0,1\}^{jb}$ in $\mcC_{a,j}$. The probability that $y$ has Hamming weight
less 
than $\delta_j$ is at most $2^{-jb(1-H(\delta_j))}$. Thus, by the 
union bound, we have that the probability that $\mcC_{a,j}$ contains a codeword
of 
Hamming weight less than $\delta_j$ is at most
\begin{align*}
 P_{a,j} = 2^{sb} \cdot 2^{-jb(1-H(\delta_j))} &= 2^{sb-jb\left(1 -
\frac{j-s}{j} + \frac{1}{4s}\right)}\\
 &= 2^{-jb / 4s}\\
 &\leq 2^{-b/4}.
\end{align*}
Similarly, $P$, the probability that $\mcC$ contains a codeword of Hamming 
weight less than $\frac{2}{15}s$, is at most
\begin{align*}
 P \leq 2^{sb} \cdot 2^{-2sb(1-H(2/15))} \leq 2^{-sb/4} \leq 2^{-b/4}.
\end{align*}

Therefore, by another application of the union bound, the probability that some 
$\mcC_{a,j}$ or $\mcC$ does not have the required relative distance is at most
\begin{align*}
 P + \sum_{\substack{0\leq a \leq 2s-1\\ s <  j\leq 2s}} P_{a,j} \leq (2s^2+1)
\cdot
2^{-b/4} < 1
\end{align*}
for sufficiently large $b, s$.
\end{proof}
\begin{remark}
For our purposes, $b = s = \Theta(1/\epsilon')$. Therefore, for suitably small 
$\epsilon' > 0$, there exists such an error-correcting code $\mcC$ as 
guaranteed by Lemma~\ref{lem:ecc}. Moreover, it is possible to find a such a 
code by brute force in time $\mathrm{poly}(1/\epsilon')$.
\end{remark}
}

\fullOnly{\ProofofRatelessCode}

Thus, Alice and Bob can agree on a fixed error-correcting code $\crateless$
of the type guaranteed by Lemma~\ref{lem:ecc} prior to the start of the 
algorithm. Now, let us describe how data bits are sent during the iterations of 
$\pienc$. The blocks of $\pienc$ are 
simulated in order as follows.

First, suppose Alice's block index $c_A$ indicates a 
$B$-block in $\piblk$ during which Alice is the sender. Then in $\pienc$,  
Alice will transmit 
up to a maximum of $2s$ chunks (of size 
$b$) that 
will encode the data $x$ from 
that block. More specifically, Alice will compute $y = \crateless(x) 
\in \{0,1\}^{2B}$ and decompose it as $y = y_0 \circ y_1 \circ \cdots \circ 
y_{2s-1}$, where $\circ$ denotes concatenation 
and $y_0, y_1, \dots, y_{2s-1} \in \{0,1\}^b$.

Recall that each mini-block of 
$\pienc$ contains $b$ data 
bits (in 
addition to $b'-b$ control bits). Thus, Alice can send each $y_i$ as the 
data 
bits of a mini-block. The chunk that Alice sends in a given iteration depends
on the global counter $m$. In particular, Alice always sends the chunk
$y_{m\bmod 2s}$. Moreover, Alice keeps a chunk counter $j_A$, which is set to 0 
during the first iteration in which she transmits a chunk from $y$ and then 
increases by 1 during each subsequent iteration (until $j_A = 2s$, at which 
point $j_A$ stops increasing).

On the other hand, suppose Alice's block index $c_A$ indicates a $B$-block in 
$\piblk$ during which Alice is the receiver. Then, Alice listens for data 
during each mini-block. Alice stores her received $b$-sized chunks as 
$\widt{g}_0, \widt{g}_1, \dots$ and increments her chunk counter $j_A$ after 
each iteration to keep track of how many chunks she has stored, along with $a$, 
an index indicating which $y_a$ she expects the first chunk $\widt{g}_0$ to be. 
Once Alice has received more than $s$ chunks (i.e., $j_A > s$), she starts to 
keep an estimate $\widt{x}$ of the data $x$ that Bob is sending
that Alice has by decoding $\widt{g}_0 \circ \widt{g}_1 \circ \cdots \circ 
\widt{g}_{j_A-1}$ to the nearest codeword of $\crateless_{a,j_A}$. This 
estimate is updated after each subsequent iteration. As soon as Alice undergoes 
an iteration in which she receives valid control information 
suggesting that $\widt{x}=x$ (if Alice's estimate $\widt{x}$ matches the hash 
of $x$ that Bob sends as control 
information, see Section~\ref{subsec:controlinfo}), she advances her block 
index $c_A$ and appends her transcript $T_A$ with $\widt{x}$.

Note that it is possible that $j_A$ reaches $2s$ and Alice has not yet received 
valid control information suggesting that he has decoded $x$. In this case, 
Alice resets $j_A$ to 0 and 
also resets $a$ to the current value of $m$, thereby restarting the listening 
process. Also, during any iteration, if Alice receives control information 
suggesting that $j_B < j_A$ (i.e., Alice has been listening for a greater 
number of iterations than Bob has been transmitting), then again, Alice resets 
$j_A$ and $a$ and restarts the process.

\begin{remark}
The key observation is that using a rateless code
allows the amount of redundancy in data that the sender sends to \emph{adapt} to
the number of errors being introduced by the adversary, rather than wasting
redundant bits or not sending enough of them.
\end{remark}

\subsection{Control Information}\label{subsec:controlinfo}
Alice's unencoded control information in the $m^\text{th}$ iteration consists of
(1.) a hash 
$h_{A,c}^{(m)} =
{hash}(c_A, S)$ of the block index $c_A$,
(2.) a hash $h_{A,x}^{(m)} = {hash}(x, S)$ of the data in the current 
block 
of $\piblk$ being communicated, (3.) a hash $h_{A, k}^{(m)} = 
{hash}(k_A, S)$ 
of the
\emph{backtracking
parameter} $k_A$, (4.) a hash $h_{A,T}^{(m)} = {hash}(T_A, S)$ of Alice's
transcript $T_A$, (5.) a hash $h_{A, {\tt MP1}}^{(m)} = {hash}(T_A[1,{\tt 
MP1}], 
S)$ 
of
Alice's transcript up till the first \emph{meeting point}, (6.) a hash
$h_{A,{\tt MP2}}^{(m)} = {hash}(T_A[1,{\tt MP2}], S)$ of Alice's transcript up 
till the
second \emph{meeting point}, (7.) the chunk counter $j_A$, and (8.) the 
\emph{sync parameter} $\synca$. Here, $S$ refers 
to a string of fresh random bits used to seed the hash functions (note that $S$ 
is different for each instance). 
Thus, we write Alice's unencoded control information as
\[
 \ctrlinfo_A^{(m)} = \left(h_{A,c}^{(m)}, h_{A,x}^{(m)}, h_{A,k}^{(m)}, 
h_{A,T}^{(m)},
h_{A, {\tt MP1}}^{(m)}, h_{A, {\tt MP2}}^{(m)}, j_A, \synca \right).
\]
Bob's unencoded control information $\ctrlinfo_B^{(m)}$ is similar in the 
analogous way.

For the individual hashes, we can use the following Inner Product hash function 
${hash}: \{0,1\}^l \times \{0,1\}^r \to \{0,1\}^{p}$, where $r=lp$:
\[
 {hash}(X, R) = \left(\langle X, R_{[1,l]}\rangle, \langle X, 
R_{[l+1,2l]}\rangle, \dots, \langle X, R_{[lp-(l-1), lp]}\rangle\right),
\]
where the first argument $X$ is the quantity to be hashed, and the second 
argument $R$ is a random seed. This choice of hash function guarantees the 
following property:
\begin{property}\label{prop:collision}
 For any $X, Y\in \{0,1\}^l$ such that $X\neq Y$, we have that 
$\Pr_{R\sim\mathrm{Unif}(\{0,1\}^r)} [{hash}(X, R) = {hash}(Y, R)] \leq 2^{-p}$.
\end{property}
Now, we wish to take output size $p=O(\log(1/\epsilon'))$ for each of the 
hashes so that the total size of each party's control information in any 
iteration is $O(\log(1/\epsilon'))$. Note that some of the quantities we 
hash (e.g., $T_A$, $T_B$) actually have size $l = \Omega(n)$. Thus, for the 
corresponding hash function, we would naively require $r = lp = 
\Omega(n\log(1/\epsilon'))$ fresh bits of randomness for the seed (per 
iteration), for a total of $\Omega(\Niter n\log(1/\epsilon'))$ bits 
of randomness. However, as described in Section~\ref{subsec:randexch}, Alice 
and Bob only have access to $O(n\epsilon' \mathrm{polylog}(1/\epsilon'))$ bits 
of shared randomness!

To get around this problem, we make use of $\delta$-biased sources to minimize 
the amount of randomness we need. In particular, we can use the $\delta$-biased 
sample space of \cite{NaorNaor} to stretch $\Theta(\log(L/\delta))$ independent 
random bits into a string of $L = \Theta(\Niter 
n\log(1/\epsilon'))$ pseudorandom bits that are $\delta$-biased. We take 
$\delta = 2^{-\Theta(\Niter\cdot p)}$. The sample space guarantees that the $L$ 
pseudorandom bits are $\delta^{\Theta(1)}$-statistically close to being 
$k$-wise independent for $k=\log(1/\delta) = \Theta(\Niter\cdot p) = 
\Theta(\Niter\log(1/\epsilon'))$. Moreover, the Inner Product Hash Function 
satisfies the following modified collision property, which follows trivially 
from Property~\ref{prop:collision} and the definition of $\delta$-bias:
\begin{property}\label{prop:collisionbiased}
 For any $X, Y\in \{0,1\}^l$ such that $X\neq Y$, we have that 
$\Pr_{R} [{hash}(X, R) = {hash}(Y, R)] \leq 2^{-p} + \delta$,
where $R$ is sampled from a $\delta$-biased source.
\end{property}
As it turns out, this property is good enough for our purposes. Thus, 
after the randomness exchange, Alice and Bob can simply take 
$\Theta(\log(L/\delta))$ bits from $\shrand$ and stretch them into an $L$-bit 
string $\shpseudo$ as described. Then, for each iteration, Alice and Bob can 
simply seed their hash functions using bits from $\shpseudo$.

\subsubsection{Encoding and Decoding Control
Information}\label{subsubsec:controlencode}
Recall that during the $m^\text{th}$ iteration, Alice's (unencoded) control 
information is $\ctrlinfo_A^{(m)}$, while Bob's (unencoded) control information 
is 
$\ctrlinfo_B^{(m)}$. In this section, we describe the encoding and 
decoding functions that 
Alice and Bob use for their control information. We start by listing the 
properties we desire.

\begin{definition}
 Suppose $X\in\{0,1\}^l$ and $V\in\{*,\neg,0,1\}^l$ for some $l > 0$. Then, we 
define $\corrupt_V(X) = Y \in \{0,1\}^l$ as follows:
\[
 Y_i = \begin{cases}
        V_i \quad &\text{if $V_i \in \{0,1\}$}\\
        X_i \oplus 1 \quad &\text{if $V_i = \neg$}\\
        X_i \quad &\text{if $V_i = *$}
       \end{cases}.
\]
Moreover, we define $\cwt(V)$ to be the number of coordinates of $V$ that are 
not equal to $*$.
\end{definition}
\begin{remark}
 Note that $V$ corresponds to an error pattern. In 
particular, $*$ indicates a position that is not corrupted, while $\neg$ 
indicates a bit flip, and 0/1 indicate a bit that is fixed to the appropriate 
symbol (see Section~\ref{subsec:commchannel} for details about \emph{flip} and 
\emph{replace} errors). The function $\corrupt_V$ applies the error pattern $V$ 
to the bit 
string given as an argument. Also, $\wt(V)$ corresponds to the number of 
positions that are targeted for corruption.
\end{remark}

We require a seeded encoding function $\contenc: 
\{0,1\}^{l} \times \{0,1\}^r \to 
\{0,1\}^{o}$ as well as a seeded decoding function $\contdec: 
\{0,1\}^{o} \times \{0,1\}^r \to 
\{0,1\}^{l} \cup \{\perp\}$ such that the following property holds:
\begin{property}\label{prop:encdec}
The following holds:
\begin{enumerate}
 \item For any $X\in\{0,1\}^{l}$, $R\in\{0,1\}^{r}$, and $V\in\{*, 
\neg, 0, 1\}^{o}$ such 
that $\cwt(V) < \frac{1}{8}o$,
\[
 \contdec(\corrupt_V(\contenc(X, R)), R) = X.
\]
\item For any $X \in\{0,1\}^{l}$ and $V\in\{0,1\}^{o}$ such 
that $\cwt(V) \geq \frac{1}{8}o$,
\[
 \Pr_{R\sim\Unif(\{0,1\}^r)} \left[\contdec(\corrupt_V(\contenc(X,R)), 
R) 
\not\in \{X, \perp\} \right] \leq 2^{-\Omega(l)}.
\]
\end{enumerate}
\end{property}
\begin{remark}
The second argument of $\contenc$ and $\contdec$ will be a 
\emph{seed}, which is generated by taking $r$ fresh bits from the shared 
randomness of Alice and Bob. A decoding output of $\perp$ indicates a 
decoding failure. Moreover, 
(1.) of Property~\ref{prop:encdec} guarantees that a party can successfully 
decode the other party's control information if at most a constant fraction of 
the encoded control information symbols are corrupted (this is then used to 
prove Lemmas~\ref{lem:whpcontrol} and \ref{lem:invbound}). On the other 
hand, (2.) of Property~\ref{prop:encdec} guarantees that if a larger fraction 
of the encoded control information symbols are corrupted, then the decoding 
party can detect any possible corruption with high probability (this is 
then used to establish Lemma~\ref{lem:malbound}).
\end{remark}

\global\def\EncodingScheme{
We now show how to obtain $\contenc$, $\contdec$ that satisfy 
Property~\ref{prop:encdec}. The idea is that $\contenc$ consists of a 
three-stage encoding: (1.) append a hash value to the unencoded control 
information, (2.) encode the resulting string using an error-correcting code, 
and (3.) XOR each output bit with a fresh random bit taken from the shared 
randomness.

For our purposes, we want $l = O(\log(1/\epsilon'))$ to be the number of bits 
in $\ctrlinfo_A^{(m)}$ (or $\ctrlinfo_B^{(m)}$) and $o = c\log(1/\epsilon')$.

First, we choose a hash function $h: \{0,1\}^l \times \{0,1\}^t \to 
\{0,1\}^{o'}$ that has the following property:
\begin{property}\label{prop:addshift}
 Suppose $X, U \in \{0,1\}^l$, where $U$ is not the all-zeros vector, and 
$W\in\{0,1\}^{o'}$. Then,
 \[
  \Pr_{R\sim\mathrm{Unif}(\{0,1\}^t)} [h(X+U, R) = h(X,R) + W] \leq 2^{-o'}.
 \]
\end{property}
\noindent In particular, we can use the simple Inner Product Hash 
Function with $t 
= l\cdot o'$ and $o' = \Theta(\log(1/\epsilon'))$:
\[
 h(X,R) = \left(\left\langle X, R_{[1,l]}\right\rangle, \left\langle X, 
R_{[l+1,2l]}\right\rangle, \dots, \left\langle X, R_{[l\cdot o' - (l-1), l\cdot 
o']}\right\rangle\right).
\]

Next, we choose a \emph{linear} error-correcting code $\chash:\{0,1\}^{l+o'} 
\to \{0,1\}^{o}$ of constant 
relative distance $1/4$ and constant rate.

We now take $r = t + o$ and define $\contenc$ as
\[
 \contenc(X,R) = \chash(X \circ h(X, R_{[o+1,r]})) \oplus R_{[1,o]}.
\]
Moreover, we define $\contdec$ as follows: Given $Y, R$, let $X'$ be the 
decoding of $Y + R_{[1,o]}$ under $\chash$ (using the nearest 
codeword of $\chash$ and then inverting the map $\chash$). We then define
\[
 \contdec(Y,R) = \begin{cases}
                  X'_{[1,l]} \quad &\text{if 
$h(X'_{[1,l]}, R_{[o+1,r]}) = X'_{[l+1,l+o']}$}\\
                  \perp \quad &\text{if $h(X'_{[1,l]}, R_{[o+1,r]}) 
\neq X'_{[l+1,l+o']}$}
                 \end{cases}.
\]
\begin{remark}
 Note that we have $r = O(\log^2(1/\epsilon'))$, which means that over the 
course of the protocol $\pienc$, we will need $O(\Niter r) = 
O(n\epsilon'\log^2(1/\epsilon'))$ fresh random bits for the purpose of encoding 
and decoding control information.
\end{remark}

We now prove that the above $\contenc$, $\contdec$ satisfy 
Property~\ref{prop:encdec}.

\begin{proof}
 Note that if $V\in\{*,\neg,0,1\}^o$ satisfies $\cwt(V) < \frac{1}{8}o$, then 
note that the Hamming distance between $\corrupt_V(\contenc(X,R))$ and 
$\contenc(X,R)$ is less than $\frac{1}{8}o$. Hence, since $\chash$ has relative 
distance $1/4$, it follows that under the 
error-correcting code $\chash$, $\corrupt_V(\contenc(X,R)) \oplus 
R_{[1,o]}$ and $\contenc(X,R) \oplus R_{[1,o]}$ decode to 
the same element of $\{0,1\}^{l+o'}$, namely, $X\circ h(X,R)$. Part (1.) of 
Property~\ref{prop:encdec} therefore holds.

Now, let us establish (2.) of Property~\ref{prop:encdec}. Consider a 
$V\in\{0,1\}^o$ with $\cwt(V) \geq \frac{1}{8}o$. Now, let us enumerate 
$W^{(1)}, W^{(2)}, \dots, W^{(2^{\cwt(V)})} \in\{0,1\}^o$ as the set of all 
$2^{\cwt(V)}$ vectors in $\{0,1\}^o$ which have a 0 in all coordinates where 
$V$ has a $*$. Now, observe that the distribution of $\corrupt_V(\contenc(X, 
R))$ over $R_1, R_2, \dots, R_o$ taken i.i.d. uniformly in $\{0,1\}$ is 
identical to the distribution of
\[
 \chash(X\circ h(X, R_{[o+1,r]})) \oplus W,
\]
where $W$ is chosen uniformly from $\left\{W^{(1)}, W^{(2)}, \dots, 
W^{(2^{\cwt(V)})}\right\}$. Now, note that for each $W^{(i)}$, there exists a 
corresponding $U^{(i)} \in \{0,1\}^{o+l}$ such that under the nearest-codeword 
decoding of $\chash$,
\[
\chash(X\circ h(X, R_{[o+1, r]}))) 
\oplus W^{(i)}
\]
decodes to $(X\circ h(X, R_{[o+1, r]})) \oplus U^{(i)}$. 
Thus, we have that
\begin{multline*}
 \Pr_{R\sim\Unif(\{0,1\}^r)}\left[\contdec(\corrupt_V(\contenc(X,R)),R)\not\in 
\{X, \perp\}\right]\\ = \Pr_{\substack{R_{o+1} \dots, R_r \sim \Unif(\{0,1\})\\ 
1\leq i \leq 2^{\cwt(V)}}} \left[U^{(i)} \neq (0,0,\dots,0) \text{ AND }   
h\left(X\oplus U^{(i)}_{[1,l]}\right) = h\left(X,R_{[o+1,r]}\right) \oplus 
U^{(i)}_{[l+1,l+o]} \right],
\end{multline*}
which, by Property~\ref{prop:addshift}, is at most $2^{-o'}$, thereby 
establishing (2.) of Property~\ref{prop:encdec}.
\end{proof}
}

\fullOnly{\EncodingScheme}
\shortOnly{
For our purposes, we take $l = O(\log(1/\epsilon'))$ to be the number 
of bits in $\ctrlinfo_A^{(m)}$ (or $\ctrlinfo_B^{(m)}$) as well as $o = 
c\log(1/\epsilon')$ and $r = \Theta(\log^2 (1/\epsilon'))$. Thus, over the 
course of the protocol $\pienc$, we will need $O(\Niter r) = 
O(n\epsilon'\log^2(1/\epsilon'))$ fresh random bits for the purpose of encoding 
and decoding control information. The details of the construction for 
$\contenc$, $\contdec$ are provided in Appendix~\ref{app:encscheme}.
}

\subsubsection{Information Hiding}\label{subsubsec:infohide}
We now describe
how the encoded control information bits are sent within each mini-block. 
Recall that in the $m^\text{th}$ iteration, Alice 
chooses a fresh random seed $R^A$ taken from the 
shared randomness $\shrand$ and computes her encoded 
control information $\contenc(\ctrlinfo_A^{(m)}, R^A)$. Similarly, Bob chooses 
$R^B$ and computes $\contenc(\ctrlinfo_B^{(m)}, R^B)$. Recall that $R^A, 
R^B$ are known to both Alice and Bob.

As discussed previously, the control information bits in each mini-block are
not sent contiguously. Rather, the 
locations of the control information bits within each $b'$-sized mini-block are 
hidden from 
the oblivious adversary by using the shared randomness to agree on a designated 
set of $2c\log(1/\epsilon')$ locations. In particular, the locations of the 
control information bits sent by Alice and 
Bob during the $m^\text{th}$ iteration are given by the variables $z_{m,i}^A$
and $z_{m,i}^B$ ($i=1,\dots, c\log(1/\epsilon')$), respectively. For each 
$m$, these 
variables are chosen randomly at the beginning using $O(\log^2(1/\epsilon'))$ 
fresh random bits from the preshared string $\shrand$. Since there are $\Niter$ 
iterations, this will
require a total of $\Theta(\Niter \cdot
\log^2(1/\epsilon')) = \Theta(n\epsilon' \log^2(1/\epsilon'))$ random 
bits from $\shrand$.

Thus, Alice sends the $c\log(1/\epsilon')$ bits of $\contenc(\ctrlinfo_A^{(m)}, 
R^A)$ in positions $z_{m,i}^A$ ($i=1,\dots, c\log(1/\epsilon')$) of the 
mini-block of the $m^\text{th}$ iteration, and similarly, Bob sends the bits of 
$\contenc(\ctrlinfo_B^{(m)}, R^B)$ in positions $z_{m,i}^B$ ($i=1,\dots, 
c\log(1/\epsilon')$). Meanwhile, Bob listens for Alice's encoded control 
information in 
positions $z_{m,i}^A$ of the mini-block and assembles the received bits as a 
string $Y \in \{0,1\}^{c\log(1/\epsilon')}$, after which Bob tries to decode 
Alice's control information by computing $\contdec(Y, R^A)$. Similarly, Alice 
listens for Bob's encoded control information in locations $z_{m,i}^B$ and tries 
to decode the received bits.

After each iteration, Alice and Bob use their decodings of each other's 
control information to decide how to proceed. This is described in detail in 
Section~\ref{subsec:backtrack}.

\begin{remark}
 The information hiding provided by the randomization of $z_{m,i}^A$ 
and $z_{m,i}^B$ ($i=1,\dots,c\log(1/\epsilon')$) ensures that an oblivious
adversary generally needs to corrupt a constant fraction of bits in a
mini-block in order to corrupt a constant fraction of either party's encoded 
control
information bits in that mini-block. Along with Property~\ref{prop:encdec}, 
this statement is used to prove Lemma~\ref{lem:whpcontrol}.
\end{remark}

\subsection{Flow of the Protocol and Backtracking}\label{subsec:backtrack}
Throughout $\pienc$, each party maintains a state that indicates whether 
both parties are in sync as well as parameters that allow for backtracking 
in the case that the parties are not in sync. After each iteration, Alice and 
Bob use their decodings of the other party's control information from that 
iteration to update their states. We describe the flow of the protocol in 
detail.

Alice and Bob maintain binary variables $\synca$ and $\syncb$, respectively, 
which indicate the players' individual perceptions of whether they are in sync. 
Note that $\synca = 1$ implies $k_A = 1$ (and similarly, $\syncb=1$ implies 
$k_B=1$). Moreover, in the case that $\synca = 1$ (resp. $\syncb=1$), the 
variable 
$\spka$ (resp. $\spkb$) indicates whether Alice (resp. Bob) speaks in the 
$c_A^{\text{th}}$ (resp. $c_B^{\text{th}}$) block of $\piblk$, based on the 
transcript thus far.

Let us describe the protocol from Alice's point of view, as Bob's procedure is 
analogous. Note that after each iteration, Alice attempts to decode Bob's 
control information for that iteration. We say that Alice \emph{successfully 
decodes} Bob's control information if the decoding procedure (see 
Section~\ref{subsubsec:controlencode}) does not output $\perp$. In this case, 
we write the output of the control information decoder (for the $m^\text{th}$ 
iteration) as
\[
 \widt{\ctrlinfo}_B^{(m)} = \left(\widt{h}_{B,c}^{(m)}, \widt{h}_{B,x}^{(m)}, 
\widt{h}_{B,k}^{(m)}, \widt{h}_{B,T}^{(m)}, \widt{h}_{B, {\tt MP1}}^{(m)}, 
\widt{h}_{B, {\tt MP2}}^{(m)}, \widt{j}_B, \rcvsyncb\right).
\]
We now split into two cases, based on whether $\synca=1$ or $\synca=0$.

\bigskip
\noindent\underline{$\synca=1$:}

\bigskip
The general idea is that whenever Alice thinks she is in sync with
Bob (i.e., $\synca=1$), she either (a.) \emph{listens} for data bits from Bob 
while
updating her estimate $\widt{x}$ of block $c_A$ of $\piblk$, if
$\spka=0$, or (b.) \emph{transmits}, as data bits of the next iteration, the 
$(m\bmod{2s})$-th chunk of the encoding of $x$ (the $c_A$-th $B$-block of 
$\piblk$) under $\crateless$, if $\spka=1$ (see
Section~\ref{subsec:datasend} for details).

If Alice is \emph{listening} for data bits, then Alice expects that $k_A = k_B = 
1$ and either (1.) $c_A = c_B$, $T_A = T_B$ or (2.) $c_A = c_B + 1$, $T_B = 
T_A[1\dots (c_B-1)B]$. Condition (1.) is expected to hold if Alice has still 
not managed to decode the $B$-block $x$ that Bob is trying to relay, while (2.) 
is expected if Alice has managed to decode $x$ and has advanced her transcript 
but Bob has not yet realized this.

On the other hand, if Alice is \emph{transmitting} data bits, then Alice 
expects that $k_A = k_B = 1$, as well as either (1.) $c_A = c_B$, $T_A = T_B$, 
or (2.) $c_B = c_A + 1$, $T_B = T_A \circ x$, or (3.) $c_A = c_B+1$, $T_B = 
T_A[1\dots (c_B-1)B]$. Condition (1.) is expected to hold if Bob is still 
listening 
for data bits and has not yet decoded Alice's $x$, while (2.) is expected to 
hold if Bob has already managed to decode $x$ and advanced his block 
index and transcript, and (3.) is expected to hold if Bob has been transmitting 
data bits to Alice (for the $(c_A-1)$-th $B$-block of $\piblk$), but Bob has 
not realized that Alice has decoded the correct $B$-block and moved on.

Now, if Alice manages to successfully decode Bob's control information in the 
most recent iteration, then Alice checks whether the hashes 
$\widt{h}_{B,c}^{(m)}$, $\widt{h}_{B,k}^{(m)}$, $\widt{h}_{B,T}^{(m)}$, 
$\widt{h}_{B,x}^{(m)}$, as well as $\rcvsyncb$ are 
consistent with Alice's expectations (as outlined in the previous two 
paragraphs). If not, then Alice sets $\synca = 0$. Otherwise, Alice proceeds 
normally.

\begin{remark}
Note that in general, if a party is trying to transmit the contents $x$ of a 
$B$-block and the other party is trying to listen for $x$, then there is a 
delay of at least one iteration between the time that the listening party 
decodes $x$ and the time that the transmitting party receives control 
information suggesting that the other party has decoded $x$. However, since 
$b/B = O(\epsilon')$, the rate loss due to this delay turns out to be 
just $O(\epsilon')$.
\end{remark}
\bigskip
\noindent\underline{$\synca=0$:}

\bigskip
Now, we consider what happens when Alice believes she is out of sync (i.e.,
$\synca=0$). In this case, Alice uses a meeting point based backtracking
mechanism along the lines of \cite{Schulman92} and \cite{Haeupler14}. We sketch
the main ideas below:

Specifically, Alice keeps a backtracking parameter $k_A$ that is initialized as
1 when Alice first believes she has gone out of sync and increases by 1 each 
iteration
thereafter. (Note that $k_A$ is also maintained when $\synca=1$, but it is
always set to 1 in this case.) Alice also maintains a counter $E_A$ that counts
the number of discrepancies between $k_A$ and $k_B$, as well as \emph{meeting 
point counters} $v_1$ and $v_2$. The counters $E_A, v_1, v_2$ are initialized 
to zero when Alice first sets $\synca$ to 0.

The parameter $k_A$ measures the amount by which Alice is willing to backtrack
in her transcript $T_A$. More specifically, Alice creates a \emph{scale}
$\widt{k}_A = 2^{\lfloor \log_2
k_A \rfloor}$ by rounding $k_A$ to the largest power of two that does not
exceed it. Then, Alice defines two \emph{meeting points} ${\tt MP1}$ and ${\tt
MP2}$ on this scale to be the two largest multiples of $\widt{k}_A B$ not 
exceeding $|T_A|$. More precisely, ${\tt MP1} =
\widt{k}_A B \left\lfloor \frac{|T_A|}{k_A B} \right\rfloor$ and ${\tt MP2} =
{\tt MP1} - \widt{k}_A B$. Alice is willing to rewind her transcript to 
either one of $T_A[1\dots {\tt MP1}]$ and $T_A[1\dots {\tt MP2}]$, the last two 
positions in her transcript where the number of $B$-blocks of $\piblk$ that 
have been simulated is an integral multiple of $\widt{k}_A$.

If Alice is able to successfully decode Bob's control information, then she 
checks $\widt{h}_{B,k}^{(m)}$. If it does not agree with the hash of $k_A$ 
(suggesting that $k_A\neq k_B$), then Alice increments $E_A$. Alice also 
increments $E_A$ if $\rcvsyncb = 1$.

Otherwise, if $\widt{h}_{B,k}^{(m)}$ matches her computed hash of $k_A$, then 
Alice checks whether either of $\widt{h}_{B, {\tt MP1}}^{(m)}, \widt{h}_{B, 
{\tt 
MP2}}^{(m)}$ matches the appropriate hash of $T_A[1\dots {\tt MP1}]$. If so, 
then Alice 
increments her counter $v_1$, which counts the number of times her \emph{first} 
meeting point matches one of the meeting points of Bob. 
If not, then Alice then checks whether either of $\widt{h}_{B, {\tt 
MP1}}^{(m)}, \widt{h}_{B, {\tt MP2}}^{(m)}$ matches the hash of $T_A[1\dots 
{\tt 
MP2}]$ and if so, she increments her counter $v_2$, which counts the number of 
times her \emph{second} meeting point matches one of the meeting points of Bob.

In the case that Alice is not able to successfully decode 
Bob's control information from the most recent iteration (i.e., the decoder 
outputs $\perp$), she increments $E_A$.

Regardless of which of the above scenarios holds, Alice then increases $k_A$ by 
1 and updates $\widt{k}_A$, ${\tt MP1}$, and ${\tt MP2}$ accordingly.

Next, Alice checks whether to initiate a \emph{transition}. Alice only 
considers making a transition if $k_A = \widt{k}_A \geq 2$ (i.e., $k_A$ is a 
power of two and is $\geq 2$). Alice first decides whether to initiate a 
\emph{meeting point transition}. If $v_1 \geq 0.2 k_A$, then Alice rewinds 
$T_A$ to $T_A[1\dots {\tt MP1}]$ and resets $k_A, \widt{k}_A, \synca$ to 1 and 
$E_A, v_1, v_2$ to 0. Otherwise, if $v_2 \geq 0.2 k_A$, then Alice rewinds 
$T_A$ to $T_A[1\dots {\tt MP2}]$ and again resets $k_A, \widt{k}_A, \synca$ to 1 
and $E_A, v_1, v_2$ to 0.

If Alice has not made a meeting point transition, then Alice checks whether 
$E_A \geq 0.2 k_A$. If so, Alice undergoes an \emph{error transition}, in which 
she simply resets $k_A, \widt{k}_A, \synca$ to 1 and $E_A, v_1, v_2$ to 0 
(without modifying $T_A$).
 
Finally, if $k_A = \widt{k}_A \geq 2$ but Alice has not made any transition, 
then she simply resets $v_1, v_2$ to 0.

\begin{remark}
The idea behind meeting point transitions is that if the transcripts $T_A$ 
and $T_B$ have not diverged
too far, then there is a common meeting point up to which the
transcripts of Alice and Bob agree. Thus, during the control information of each
iteration, both Alice and Bob send hash values of their two meeting points in
the hope that there is a match. For a given scale $\widt{k}_A$, there are
$\widt{k}_A$ hash comparisons that are generated. If at least a constant
fraction of these comparisons result in a match, then Alice decides to
backtrack and rewind her transcript to the relevant meeting point. This ensures
that in order for an adversary to cause Alice to backtrack incorrectly, he must
corrupt the control information in a constant fraction of iterations.
\end{remark}

% Observe that ``invalid'' control information (in which the global hash fails)
% from a synced state does not cause
% Alice and Bob to go out of sync but can delay acknowledgement of receiving
% $x$. However, occasionally, there could be control information that is corrupted
% in a misleading manner, i.e., the control information is corrupted even though
% the enclosed verification string and global hash happen to match the correct
% values. Such control information can (1.) cause the receiver to incorrectly
% decode $x$, (2.) cause the sender to mistakenly believe that the receiver has
% decoded $x$, or (3.) cause a party to mistakenly believe that the other party is
% in/out of sync. In this case, we need a backtracking mechanism that allows Alice
% or Bob to rewind part of the transcript to get back into sync.

\subsection{Pseudocode}
We are now ready to provide the pseudocode for the protocol $\pienc$, which
follows the high-level description outlined in Section~\ref{subsec:highlevel}
and is shown in Figure~\ref{fig:oblivious}. The pseudocode for the helper
functions ${\tt AliceControlFlow}$, ${\tt AliceUpdateSyncStatus}$, 
${\tt AliceUpdateControl}$, \\
${\tt AliceDecodeControl}$, ${\tt AliceAdvanceBlock}$, ${\tt 
AliceUpdateEstimate}$, and ${\tt AliceRollback}$ for Alice 
is also displayed. Bob's functions ${\tt BobControlFlow}$, 
${\tt BobUpdateSyncStatus}$, ${\tt BobUpdateControl}$, ${\tt 
BobDecodeControl}$, ${\tt BobAdvanceBlock}$, ${\tt BobUpdateEstimate}$, and 
${\tt BobRollback}$ are almost identical, except that ``A'' subscripts are 
replaced with ``B'' and are thus omitted. Furthermore, the function 
${\tt InitializeSharedRandomness}$ is the same for Alice and Bob.
\begin{figure}[htp]
\begin{center}
\begin{tikzpicture}
\tikzstyle{every node}=[font=\small]
\def \ftw {15cm}
\def \tw {6.0cm}
\def \halfw {3.0in}
\def \stw {2cm}
\def \a {0cm}
\def \b {8.7cm}
\def \bleft {2.2in}
\def \bright {6.6cm}
\def \mid{5.25cm}
\def \aleft {\a - \tw/2}
\def \aright {\a + \tw/2}

\def \y {-1.1cm}
\node[draw, align=center] at (\mid,\y) {
{\bf Global parameters}\\[-0.1in]
\begin{minipage}{0.98\linewidth}
\begin{align*}
 b' = b + 2c\log(1/\epsilon')
\qquad &\piblk = \text{$B$-blocked simulating protocol for $\Pi$ (see 
Lemma~\ref{lem:blocking})}
\\[-0.04in]
\Niter = \frac{n'}{b}(1 
+ \Theta(\epsilon\log(1/\epsilon))) \qquad &l' = 
\Theta(n\epsilon'\,\mathrm{polylog}(1/\epsilon')) \\[-0.04in]
\epsilon' = \epsilon^2 \qquad &\chash: \{0,1\}^{\Theta(\log(1/\epsilon'))} \to 
\{0,1\}^{\Theta(\log(1/\epsilon'))} \text{ (see 
Section~\ref{subsubsec:controlencode})}\\[-0.04in]
b = s = \Theta(1/\epsilon') \qquad 
&\cexch: \{0,1\}^{l'} \to \{0,1\}^{10\epsilon\Niter b'} \text{ (see 
Section~\ref{subsec:randexch})}\\
B = sb \qquad 
&\crateless:\{0,1\}^B \to \{0,1\}^{2B} \text{ (see 
Lemma~\ref{lem:ecc})}
\end{align*}
\end{minipage}
};

\def \y {-3.7cm}
\node[draw] at (0.8cm,\y) {\bf Alice};
\node[draw] at (9.5,\y) {\bf Bob};

\def \y {-4.55cm}
\node[align=center] at (\mid, \y) {----------------- {\bf Random string
exchange} 
----------------- };

\def \y {-5.0cm}
\node[draw, below right, text width=2.6in] (arstr) at (\aleft,\y) {
Choose a random string $\shrand\in\{0,1\}^{l'}$\\
$w \gets \cexch(\shrand)$\\
};

\def \y {-6.6cm}
\node[draw, below right, text width=2.6in] (brstr) at (\bright,\y)
{
$w' \gets \text{nearest codeword of $\cexch$ to $\widetilde{w}$}$\\
$\shrand \gets (\cexch)^{-1}(w')$\\
};

\def \y {-8.1cm}
\node at (\mid, \y) {----------------- {\bf Initialization} 
----------------- };

\def \y {-8.6cm}
\node[draw, below right, text width=\halfw] (ainit) at (\aleft,\y) {
$T_A \gets \emptyset$\\
$x \gets nil$\\
$k_A, \widt{k}_A, c_A, \synca \gets 1$\\
$E_A, v_{1}, v_{2}, j_A, \spka, a, m, {\tt MP1}, {\tt MP2} \gets 0$ 
\vspace{2mm}\\
{\tt InitializeSharedRandomness()} \vspace{2mm}\\
{\bf if} Alice speaks in the first block of $\piblk$ {\bf then}\\
\hspace{2mm} $\spka\gets 1$\\
\hspace{2mm} $x \gets $ contents of first block of $\piblk$\\
\hspace{2mm} $y = y_0 \circ y_1 \circ \cdots \circ y_{2s-1} \gets 
\crateless(x)$\\
{\bf end if}
};
\node[draw, below right, text width=\halfw] (binit) at (\bleft,\y) {
$T_B \gets \emptyset$\\
$x \gets nil$\\
$k_B, \widt{k}_B, c_B, \syncb \gets 1$\\
$E_B, v_{1}, v_{2}, j_B, \spkb, a, m, {\tt MP1}, {\tt MP2} \gets 0$ 
\vspace{2mm}\\
{\tt InitializeSharedRandomness()} \vspace{2mm}\\
{\bf if} Bob speaks in the first block of $\piblk$ {\bf then}\\
\hspace{2mm} $\spkb\gets 1$\\
\hspace{2mm} $x \gets $ contents of first block of $\piblk$\\
\hspace{2mm} $y_0 \circ y_1 \circ \cdots \circ y_{2s-1} \gets \crateless(x)$\\
{\bf end if}
};

\def \y {-14.0cm}
\node at (\mid, \y) {----------------- {\bf Block transmission (repeat $\Niter$ 
times)} 
----------------- };

\def \y {-14.45cm}
\node[draw, below right, text width=\halfw] (ablock) at (\aleft,\y) {
{\tt AliceUpdateControl()}\vspace{2mm}\\
Send ${\bf r}[i]$ in slot $z_{m, i}^A$ for $i=1,\dots, (b'-b)/2$\\
Listen during slots $\widetilde{z}_{m,i}^B$ for $i=1,\dots, 
(b'-b)/2$ and write bits to $\widt{\bf r}$ \vspace{2mm}\\

{\bf if} $\synca = 1$ {\bf and } $\spka = 1$ {\bf then}\\
\hspace{2mm} Send the bits of $y_{m \bmod 2s}$ in the $b$\\
\hspace{3mm} remaining slots\\
{\bf else}\\
\hspace{2mm} Listen during the $b$ remaining slots and \\
\hspace{3mm} store as $g_A$\\
{\bf end if} \vspace{2mm}\\
{\tt AliceControlFlow()}
};
\node[draw, below right, text width=\halfw] (ablock) at (\bleft,\y) {
{\tt BobUpdateControl()}\vspace{2mm}\\
Send ${\bf r}[i]$ in slot $z_{m, i}^B$ for $i=1,\dots, (b'-b)/2$\\
Listen during slots $\widetilde{z}_{m,i}^A$ for $i=1,\dots, 
(b'-b)/2$ and write bits to $\widt{\bf r}$ \vspace{2mm}\\

{\bf if} $\syncb = 1$ {\bf and} $\spkb = 1$ {\bf then}\\
\hspace{2mm} Send the bits of $y_{m\bmod 2s}$ in the $b$\\
\hspace{3mm} remaining slots\\
{\bf else}\\
\hspace{2mm} Listen during the $b$ remaining slots and \\
\hspace{3mm} store as $g_B$\\
{\bf end if} \vspace{2mm}\\
{\tt BobControlFlow()}
};

\def \y {-20.9cm}
\node at (\mid, \y) {----------------- {\bf End of repeat} -----------------};

% \def \y {-20.6cm}
% \node[draw, below right, text width=\linewidth] at (-3cm, \y) {The
% method {\tt AliceControlFlow} is found in the appendix, while the method 
% {\tt BobControlFlow} is almost identical except that superscripts/subscripts 
% with 'A' are swapped with 'B.'};

\draw[-latex] (arstr.south east) -- (brstr.north west) node[above, sloped, near 
start] 
{\hspace{2mm} $w$ \hspace*{5mm}} node[above, sloped, near end] 
{\hspace*{5mm}$\widetilde{w}$\hspace{2mm}};
\end{tikzpicture}
\vspace*{-6mm}
\caption{Encoded protocol $\pienc$ for tolerating oblivious adversarial errors.}
\label{fig:oblivious}
\end{center}
\end{figure}
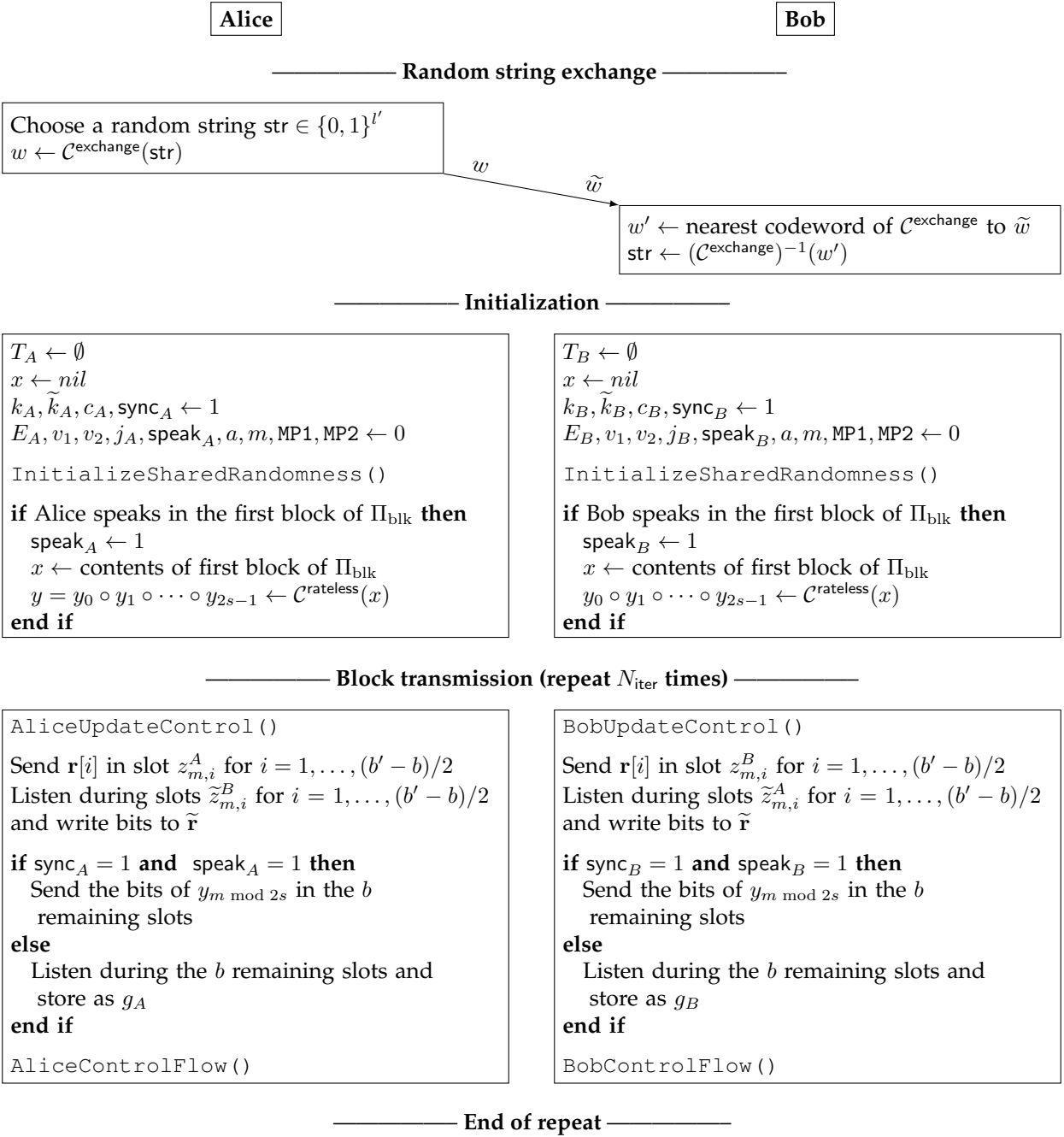

\begin{algorithm}[hp]
\caption{Procedure for Alice to process received data bits and control
info from a mini-block}
\begin{algorithmic}[1]
\Function{AliceControlFlow}{}

  \vspace{2mm}
  \noindent $\rhd$ {\bf \underline{Update phase}}:
  \vspace{2mm}
  \State $\widt{\ctrlinfo}_B^{(m)} \gets$ \Call{AliceDecodeControl}{}
  \If{$\widt{\ctrlinfo}_B^{(m)} \neq \perp$}
    \State $\left(\widt{h}_{B,c}^{(m)}, \widt{h}_{B,x}^{(m)}, 
\widt{h}_{B,k}^{(m)}, \widt{h}_{B,T}^{(m)}, \widt{h}_{B, {\tt MP1}}^{(m)}, 
\widt{h}_{B, {\tt MP2}}^{(m)}, \widt{j}_B, \rcvsyncb \right) \gets 
\widt{\ctrlinfo}_B^{(m)}$

    \vspace{3mm}

    \If{$\synca=0$}
      \If{$\widt{h}_{B,k}^{(m)} \neq {hash}_{B,k}^{(m)}(k_A)$ or $\rcvsyncb = 
1$}
	\State $E_A\gets E_A +1$
      \ElsIf{${hash}_{B,{\tt MP1}}^{(m)}(T_A[1\dots {\tt MP1}]) = 
  \widt{h}_{B,{\tt 
  MP1}}^{(m)}$ {\bf or} ${hash}_{B,{\tt MP2}}^{(m)}(T_A[1\dots {\tt MP1}]) = 
  \widt{h}_{B,{\tt MP2}}^{(m)}$}
	\State $v_1 \gets v_1 + 1$
      \ElsIf{${hash}_{B,{\tt MP1}}^{(m)}(T_A[1\dots {\tt MP2}]) = 
  \widt{h}_{B,{\tt MP1}}^{(m)}$ {\bf or} ${hash}_{B,{\tt MP2}}^{(m)}(T_A[1\dots 
  {\tt MP2}]) = 
  \widt{h}_{B,{\tt MP2}}^{(m)}$}
	\State $v_2 \gets v_2 + 1$
      \EndIf
      \EndIf
  \ElsIf{$\synca = 0$}
    \State $E_A\gets E_A+1$
  \EndIf
  
  \vspace{2mm}
  \If{$\synca = 0$}
    \State $k_A \gets k_A +1$
    \State $\tilde{k}_A \gets 2^{\left\lfloor 
\log_2 k_A \right\rfloor}$
  \EndIf
  \vspace{2mm}
  \State \Call{AliceUpdateSyncStatus}{}

  \vspace{3mm}
  
  \noindent $\rhd$ {\bf \underline{Transition phase}}:
  \vspace{2mm}

  \If{$k_A=\widt{k}_A \geq 2$ {\bf and} $v_1 \geq 0.2 k_A$}
    \State \Call{AliceRollback}{{\tt MP1}}
  \ElsIf{$k_A=\widt{k}_A \geq 2$ {\bf and} $v_2 \geq 0.2 k_A$}
    \State \Call{AliceRollback}{{\tt MP2}}
  \ElsIf{$k_A=\widt{k}_A \geq 2$ {\bf and} $E_A \geq 0.2 k_A$}
    \State $a \gets (m+1) \bmod 2s$
    \State $k_A,\widt{k}_A, \synca \gets 1$
    \State $E_A,v_1,v_2,j_A \gets 0$
  \ElsIf{$k_A=\widt{k}_A\geq 2$}
    \State $v_1,v_2\gets 0$
  \EndIf

  \vspace{3mm}
  
  \State ${\tt MP1} \gets \tilde{k}_A B \left\lfloor \frac{|T_A|}{\tilde{k}_A 
B} \right\rfloor$
  \State ${\tt MP2} \gets {\tt MP1} - \tilde{k}_A B$
  \State $m\gets m+1$

\EndFunction
\end{algorithmic}
\end{algorithm}

\begin{algorithm}[hp]
\caption{Procedure for Alice to update sync status}
\begin{algorithmic}[1]
\Function{AliceUpdateSyncStatus}{}
  \State $\synca\gets 0$
  \vspace{2mm}
  \If{$k_A = 1$}
    \If{$\widt{\ctrlinfo}_B^{(m)} \neq \,\,\perp$ {\bf and} 
$\widt{h}_{B,k}^{(m)} = {hash}_{B,k}^{(m)}(1)$}
      \If{$\rcvsyncb = 0$}
	\State $\synca\gets 1$; $j_A\gets 0$; $a\gets (m+1)\bmod{2s}$
      \ElsIf{${hash}_{B,c}^{(m)}(c_A) = \widt{h}_{B,c}^{(m)}$ {\bf and} 
${hash}_{B,T}^{(m)}(T_A) = \widt{h}_{B,T}^{(m)}$}
        \State $\synca\gets 1$
        \If{$\spka = 0$}
	  \If{$j_A\leq \widt{j}_B$}
	    \State \Call{AliceUpdateEstimate}{}
	  \Else
	    \State $j_A\gets 0$; $a\gets (m+1)\bmod{2s}$
	  \EndIf
	\Else
	  \State $j_A\gets \min\{j_A+1,2s\}$
        \EndIf
      \ElsIf{$\spka = 1$ {\bf and} ${hash}_{B,c}^{(m)}(c_A+1) = 
\widt{h}_{B,c}^{(m)}$ {\bf and } ${hash}_{B,T}^{(m)}(T_A \circ x) = \widt{h}_{B, 
T}^{(m)}$}
        \State $\synca\gets 1$
        \State \Call{AliceAdvanceBlock}{}
      \ElsIf{Bob speaks in block 
$(c_A-1)$ of $\piblk$ {\bf and} 
${hash}_{B,c}^{(m)}(c_A-1) = \widt{h}_{B,c}^{(m)}$ {\bf and}
${hash}_{B,T}^{(m)}(T_A [1\dots (c_A-2)B]) = \widt{h}_{B,T}^{(m)}$ {\bf and} 
${hash}_{B,x}^{(m)}(T_A [((c_A-2)B+1)\dots (c_A-1) B]) = 
\widt{h}_{B,x}^{(m)}$}
	\State $\synca \gets 1$
	\If{$\spka = 0$}
	  \State $j_A \gets 0$; $a\gets (m+1)\bmod {2s}$
	\Else
	  \State $j_A \gets \min\{j_A+1, 2s\}$
	\EndIf
      \EndIf
    \ElsIf{$\widt{\ctrlinfo}_B^{(m)} = \,\,\perp$}
      \State $\synca\gets 1$
      \If{$\spka = 0$}
        \If{$j_A \neq 0$}
	  \State \Call{AliceUpdateEstimate}{}
        \Else
	  \State $a\gets (m+1)\bmod{2s}$
	\EndIf
      \Else
	\State $j_A\gets \min\{j_A+1,2s\}$	
      \EndIf
    \EndIf
  \EndIf
\EndFunction
\end{algorithmic}
\end{algorithm}

\begin{algorithm}[htp]
\caption{Procedure for Alice to update control information}
\begin{algorithmic}[1]
\Function{AliceUpdateControl}{}
  \State \begin{varwidth}[t]{\linewidth - 0.5in} $\ctrlinfo_A^{(m)} \gets 
({hash}_{A,m}(c_A), 
{hash}_{A,x}^{(m)}(x), {hash}_{A,k}^{(m)}(k_A), {hash}_{A,T}^{(m)}(T_A), 
{hash}_{A,{\tt MP1}}^{(m)}(T_A[1 \dots {\tt MP1}]),$ ${hash}_{A,{\tt 
MP2}}^{(m)}(T_A[1 \dots {\tt MP2}]), j_A, 
\synca)$  \end{varwidth}
  \State ${\bf r} \gets \chash\left(\ctrlinfo_A^{(m)} \circ 
{hash}_{A,\mathrm{ctrl}}^{(m)}\left(\ctrlinfo_A^{(m)}\right)\right) \oplus 
V_A^{(m)}$
\EndFunction
\end{algorithmic}
\end{algorithm}

\begin{algorithm}[htp]
\caption{Procedure for Alice to decode control information sent by Bob}
\begin{algorithmic}[1]
\Function{AliceDecodeControl}{}
  \State ${\bf z} \gets $ decoding of ${\bf \widt{r}} \oplus V_B^{(m)}$ under 
$\chash$ (inverse of $\chash$ applied to nearest codeword)
  \State ${\bf z^c} \circ {\bf z^h} \gets {\bf z}$, 
where ${\bf z^c}$ has length $(b'-b)/2$
  \vspace{3mm}
  \If{${hash}_{B,\mathrm{ctrl}}^{(m)}({\bf z^c}) = {\bf z^h}$}
    \State \Return ${\bf z^c}$ 
  \Else
    \State \Return $\perp$
  \EndIf
\EndFunction
\end{algorithmic}
\end{algorithm}

\begin{algorithm}[htp]
\caption{Procedure for Alice to advance the block index and prepare for 
future transmissions}
\begin{algorithmic}[1]
\Function{AliceAdvanceBlock}{}
  \If{$\spka = 1$}
    \State $T_A \gets T_A \circ x$
  \Else
    \State $T_A \gets T_A \circ \widt{x}$
  \EndIf
  
  \vspace{3mm}
  \State $c_A \gets c_A + 1$
  \State $j_A \gets 0$
  \vspace{3mm}
  
  \If{Alice speaks in block $c_A$ of $\piblk$}
    \State $\spka \gets 1$
    \State $x\gets $ contents of block $c_A$ of $\piblk$
    \State $y = y_0 \circ y_1 \circ \cdots \circ y_{2s-1} \gets \crateless(x)$
  \Else
    \State $\spka \gets 0$
    \State $a \gets (m+1) \bmod 2s$
    \State $x\gets nil$
  \EndIf
\EndFunction
\end{algorithmic}
\end{algorithm}

\begin{algorithm}
\begin{algorithmic}[1]
\caption{Procedure for Alice to update her estimate of the contents of the 
current block based on past data blocks}
\Function{AliceUpdateEstimate}{}
  \State $\widt{g}_{j_A} \gets g_A$
  \State $j_A \gets j_A + 1$
  
  \vspace{3mm}
  
  \If{$j_A > s$}
    \State $\widt{x}\gets $ result after decoding $(\widt{g}_0, 
\widt{g}_1, \dots, \widt{g}_{j_A - 1})$ via the nearest codeword in 
$\crateless_{a,
j_A}$
    \If{${hash}_{B,x}^{(m)}(\widt{x}) = \widt{h}_{B,x}^{(m)}$}
      \State \Call{AliceAdvanceBlock}{}
    \ElsIf{$j_A = 2s$}
      \State $j_A \gets 0$
      \State $a \gets (m+1)\bmod 2s$
    \EndIf
  \EndIf
\EndFunction
\end{algorithmic}
\end{algorithm}

\begin{algorithm}
\begin{algorithmic}[1]
\caption{Procedure for Alice to backtrack to a previous meeting point}
\Function{AliceRollback}{{\tt MP}}
  \State $T_A \gets T_A[1\dots {\tt MP}]$
  \State $c_A \gets \frac{{\tt MP}}{B} + 1$
  \State $k_A, \widt{k}_A, \synca \gets 1$
  \State $E_A,v_1,v_2, j_A \gets 0$
  
  \vspace{3mm}
  
  \If{Alice speaks in block $c_A$ of $\piblk$}
    \State $\spka \gets 1$
    \State $x\gets $ contents of block $c_A$ of $\piblk$
    \State $y = y_0 \circ y_1 \circ \cdots \circ y_{2s-1} \gets \crateless(x)$
  \Else
    \State $\spka \gets 0$
    \State $a \gets (m+1) \bmod 2s$
    \State $x\gets nil$
  \EndIf
\EndFunction
\end{algorithmic}
\end{algorithm}

\begin{algorithm}[hp]
\caption{Procedure for Alice and Bob to use exchanged random string to 
initialize hash functions, information hiding mechanism, and encoding functions 
for control information}
\begin{algorithmic}[1]
\Function{InitalizeSharedRandomness}{}
  \State $p\gets \Theta(\log(1/\epsilon'))$
  \State $\delta \gets 2^{-\Theta(\Niter\cdot p)}$
  \State $L \gets \Theta(\Niter n\log(1/\epsilon'))$
  \State \begin{varwidth}[t]{\linewidth - 0.5in} Let $\shrand = \locrand \circ 
\shremain$, where $\locrand$ is of
length $\Theta(\Niter\cdot \log^2(1/\epsilon'))$ and $\shremain$ is of length 
$\Theta(\log(L/\delta))$ \end{varwidth}
  \State \begin{varwidth}[t]{\linewidth - 0.5in} $S \gets $ $\delta$-biased
length $L$ pseudorandom string derived 
from $\shremain$ (via the biased sample space of \cite{NaorNaor}) \end{varwidth}

  \vspace{2mm}
  \noindent $\rhd$ {\bf \underline{Generate locations for information hiding
in each iteration}}:
  \vspace{2mm}

  \For{$i = 0$ to $\Niter-1$}
    \State \begin{varwidth}[t]{\linewidth - 0.5in} Choose $z_{i,1}^A,
z_{i,2}^A, \dots, z_{i, (b'-b)/2}^A, z_{i,1}^B,
z_{i,2}^B, \dots, z_{i, (b'-b)/2}^B$ to be distinct numbers in
$\{1,2,\dots,b'\}$ using $O(\log^2 (1/\epsilon'))$ fresh random bits from
$\locrand$ \end{varwidth}
  \EndFor
  
  \vspace{2mm}
  \noindent $\rhd$ {\bf \underline{Set up parameters for encoding control 
information during each iteration}}
  \vspace{2mm}
  
  \For{$i = 0$ to $\Niter-1$}
    \State $V_A^{(i)} \gets $ $(b'-b)/2$ fresh random bits 
from $\locrand$
    \State $V_B^{(i)} \gets $ $(b'-b)/2$ fresh random bits 
from $\locrand$
    \State \begin{varwidth}[t]{\linewidth - 0.5in} Initialize 
${hash}_{A,\mathrm{ctrl}}^{(i)}, {hash}_{B,\mathrm{ctrl}}^{(i)}$ to an inner 
product hash function with output length $\Theta(\log(1/\epsilon'))$ and 
seed fixed as $\Theta(\log(1/\epsilon'))$ fresh random bits from $\locrand$ 
\end{varwidth}
  \EndFor

  \vspace{2mm}
  \noindent $\rhd$ {\bf \underline{Initialize hash functions for control 
information in each iteration}}:
  \vspace{2mm}

  \For{$i = 0$ to $\Niter-1$}
    \State \begin{varwidth}[t]{\linewidth - 0.5in} Initialize 
${hash}_{A,c}^{(i)}$, ${hash}_{A,x}^{(i)}$, ${hash}_{A,k}^{(i)}$, 
${hash}_{A,T}^{(i)}$, ${hash}_{A,{\tt MP1}}^{(i)}$, ${hash}_{A,{\tt 
MP2}}^{(i)}$, ${hash}_{B,c}^{(i)}$, ${hash}_{B,x}^{(i)}$, ${hash}_{B,k}^{(i)}$, 
${hash}_{B,T}^{(i)}$, ${hash}_{B,{\tt MP1}}^{(i)}$, ${hash}_{B,{\tt 
MP2}}^{(i)}$ to be inner product hash functions with output length 
$\Theta(\log(1/\epsilon'))$ and seed fixed using fresh random bits from 
$S$ \end{varwidth}
  \EndFor
\EndFunction
\end{algorithmic}
\end{algorithm}

\subsection{Analysis of Coding Scheme for Oblivious Adversarial Channels}
Now, we show that the coding scheme presented in Figure~\ref{fig:oblivious} 
allows one to tolerate an error fraction of $\epsilon$ under an oblivious 
adversary with high probability.

\subsubsection{Protocol States and Potential Function}
Let us define states for the encoded protocol $\pienc$. First, we define
\[
 \ell^+ = \left\lfloor \max\{\ell' \in [1, \min\{|T_A|, |T_B|\}]:
T_A[1\dots \ell'] = T_B[1\dots \ell']\right\rfloor, \quad\quad \ell^- =
|T_A|+|T_B| - 2\ell^+.
\]
In other words, $\ell^+$ is the length of the longest
common prefix of the transcripts $T_A$ and $T_B$, while $\ell^-$ is the total
length of the parts of $T_A$ and $T_B$ that are not in the common
prefix. Also recall that $\delta_{s+1}, \delta_{s+2}, \dots, 
\delta_{2s}$ are defined as in Lemma~\ref{lem:ecc}. Furthermore, we define 
$\delta_0, \delta_1, \dots, \delta_s = 0$ for convenience.

Now we are ready to define states for the protocol $\pienc$ as its execution 
proceeds.

\begin{definition}\label{def:state}
At the beginning of an iteration (the start of the code block in 
Figure~\ref{fig:oblivious} that is repeated $\Niter$ times), the protocol is 
said to be in one of three possible states:
\begin{itemize}
 \item \textbf{Perfectly synced} state: This occurs if $\synca=\syncb=1$, $k_A 
= k_B = 1$, $\ell^- 
= 0$, $c_A=c_B$, and $j_A \geq j_B$ if Alice is the sender in block $c_A=c_B$ 
of $\piblk$ (resp. $j_B \geq j_A$ if Bob is the sender in $B$-block $c_A=c_B$ 
of $\piblk$). In this case, we also define $j = \min\{j_A, j_B\}$.
 \item \textbf{Almost synced} state: This occurs if $\synca=\syncb=1$, $k_A = 
k_B = 1$, and one of the following holds:
\begin{enumerate}
 \item $\ell^-=B$, $c_B = c_A + 1$, and $T_B = T_A\circ w$, where $w$ 
represents the contents of the $c_A$-th $B$-block of $\piblk$. In this case, we 
define $j = j_B$.
 \item $\ell^-=B$, $c_A = c_B + 1$, and $T_A = T_B\circ w$, where $w$ 
represents the contents of the $c_B$-th $B$-block of $\piblk$. In this case, we 
define $j = j_A$.
 \item $\ell^-=0$, $c_A = c_B$, $j_B > j_A$, and Alice speaks in $B$-block 
$c_A=c_B$ of $\piblk$. In this case, we define $j = j_B$.
 \item $\ell^-=0$, $c_A = c_B$, $j_A > j_B$, and Bob speaks in $B$-block 
$c_A=c_B$ of $\piblk$. In this case, we define $j = j_A$.
\end{enumerate}
 \item \textbf{Unsynced} state: This is any state that does not fit into the 
above two categories.
\end{itemize}
\end{definition}
\noindent We also characterize the control information sent by each party 
during an iteration based on whether/how it is corrupted.
\begin{definition}
For any given iteration, the encoded control information sent by a party is 
categorized as one of the following:
\begin{itemize}
 \item \textbf{Sound control information}: If a party's unencoded control 
information for an iteration is decoded correctly by the other party (i.e., the 
output of $\contdec$ correctly retrieves the intended transmission), and no 
hash collisions (involving the hashes contained in the control information 
$\widt{\ctrlinfo}_A^{(m)}$ or $\widt{\ctrlinfo}_B^{(m)}$) occur, then the 
(encoded) control information is considered \emph{sound}.

 \item \textbf{Invalid control information}: If the attempt to decode a party's 
unencoded control information by the other party results in a failure (i.e., 
$\contdec$ outputs $\perp$), then the (encoded) control information is 
considered \emph{invalid}.

 \item \textbf{Maliciously corrupted control information}: If a party's 
control information is decoded incorrectly (i.e., $\contdec$ does not output 
$\perp$, but the output does not retrieve the intended transmission) or a hash 
collision (involving the hashes contained in the control information 
$\widt{\ctrlinfo}_A^{(m)}$ or $\widt{\ctrlinfo}_B^{(m)}$) occurs, then the 
(encoded) control information is considered \emph{maliciously corrupted}.
\end{itemize}
\end{definition}
\noindent Next, we wish to define a potential function $\Phi$ that depends on 
the current state in the encoded protocol. Before we can do so, we define a few 
quantities:
\begin{definition}
 Suppose the protocol is in a perfectly synced state. Then, we define the 
quantities $\curerr$ and $\inv$ as follows:
\begin{itemize}
 \item $\curerr$ is the total number of data (non-control information) bits 
that have been corrupted 
during the last $j$ iterations.
 \item $\inv$ is the number of iterations among the last $j$ iterations for 
which the control information of at least one party was invalid or maliciously 
corrupted.
\end{itemize}
\end{definition}
\begin{definition}
 Suppose the protocol is in an unsynced state. Then, we define $\malA$ as 
follows: At the start of $\pienc$, we initialize $\malA$ to 0. Whenever an 
iteration occurs from a state in which $\synca = 0$, such that either Alice's 
or Bob's control information during that iteration is maliciously corrupted, 
$\malA$ increases by 1 at the end of line 21 of ${\tt AliceControlFlow}$ during 
that iteration. Moreover, whenever Alice undergoes a \emph{transition} (i.e., 
one of the ``if'' conditions in lines 22-29 of ${\tt AliceControlFlow}$ is 
true), $\malA$ resets to 0.

The variable $\malB$ is defined in the obvious analagous manner.
\end{definition}
\begin{definition}
 For the sake of brevity, a variable $\mathrm{var}_{AB}$ will denote
$\mathrm{var}_A + \mathrm{var}_B$ (e.g., $k_{AB} = k_A+k_B$ and $E_{AB} = E_A 
+ E_B$).
\end{definition}

\noindent Now, we are ready to define the potential function $\Phi$.
\begin{definition}
Let $C_0, C_1, C_2, C_3, C_4, C_5, C_6, C_7, \cinv, \cmal, C, D > 0$ be 
suitably chosen constants (to be determined by Lemmas~\ref{lem:synctrans}, 
\ref{lem:asynctrans}, \ref{lem:unsynctrans} and Theorem~\ref{thm:nonrate}). 
Then, we define the potential function $\Phi$ associated with the execution of 
$\pienc$ according to the state of the protocol (see 
Definition~\ref{def:state}): 
\begin{align*}
 \Phi = \begin{cases}
         \ell^+(1+ C_0 H(\epsilon)) + (jb -
C\cdot\curerr\cdot\log(1/\epsilon)) - Db\cdot\inv \quad &\text{perfectly
synced}\\
\max\{\ell_A, \ell_B\}\cdot(1 + C_0 H(\epsilon)) - (j+1)b\quad &\text{almost
synced}\\
\ell^+ (1 + C_0 H(\epsilon)) - C_1 \ell^- + b(C_2 k_{AB} - C_3 E_{AB}) \quad 
&\text{unsynced, $(k_A, \synca) = (k_B, \syncb)$} \\
\qquad - 2C_7 B\,\malAB - Z_1
\quad
&\ \\
\ell^+ (1 + C_0 H(\epsilon)) - C_1 \ell^- + bC_5(-0.8 k_{AB} + 0.9 E_{AB})\quad 
&\text{unsynced, $(k_A, \synca) \neq (k_B, \syncb)$} \\
\qquad - C_7  B\,\malAB - Z_2 \quad &\ \\
        \end{cases},
\end{align*}
where $Z_1$ and $Z_2$ are defined by:
\[
  Z_1 = \begin{cases} bC_4 \quad &\text{if $k_A = k_B = 1$ 
and $\synca=\syncb=1$}\\ 
\frac{1}{2}bC_4 \quad &\text{if $k_A = k_B = 1$ and $\synca=\syncb=0$}\\
0 \quad &\text{otherwise}
\end{cases},
 \]
and
\[
 Z_2 = \begin{cases}
bC_6 \quad &\text{if $k_A=k_B=1$}\\
0\quad &\text{otherwise}
  \end{cases}.
\]
\end{definition}

\subsubsection{Bounding Iterations with Invalid or Maliciously Corrupted 
Control Information}
We now prove some lemmas that bound the number of iterations that can have 
invalid or maliciously corrupted control information.

\begin{lemma}\label{lem:whpcontrol}
 If the fraction of errors in a mini-block is $O(1)$, say, $< \frac{1}{20}$, 
then 
with probability at least $1 - {\epsilon'}^{2}$, 
both parties can correctly decode and verify the control symbols sent in the
block.
\end{lemma}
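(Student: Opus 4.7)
The plan is to combine information hiding with a concentration argument and then invoke Property~\ref{prop:encdec}(1). The key point is that the adversary is oblivious: the error pattern in the mini-block is fixed before the control positions $z^A_{m,\cdot}, z^B_{m,\cdot}$ are drawn from the shared randomness. Given any fixed set $S \subseteq \{1,\dots,b'\}$ of corrupted positions with $|S| < b'/20$, the $c\log(1/\epsilon')$ positions holding Alice's encoded control symbols are, marginally, a uniformly random subset of $\{1,\dots,b'\}$ of that size (the joint construction picks a random $(b'-b)$-subset and splits it, but each half is uniform on its own). Therefore the number $X_A$ of Alice's control positions that fall inside $S$ is hypergeometrically distributed with mean $\mu_A \le c\log(1/\epsilon')/20$.

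Next I would apply a Chernoff/Hoeffding bound for sampling without replacement to $X_A$ with threshold $o/8 = c\log(1/\epsilon')/8$. Since $o/8 = (5/2)\cdot (o/20) \ge (5/2)\mu_A$, the standard bound yields
\[
\Pr\!\left[X_A \ge c\log(1/\epsilon')/8\right] \;\le\; \exp\!\bigl(-\Omega(c\log(1/\epsilon'))\bigr) \;=\; (\epsilon')^{\Omega(c)}.
\]
Choosing the constant $c$ in the block length $b' = b + 2c\log(1/\epsilon')$ large enough makes this at most $\epsilon'^{2}/2$. The same bound applies to $X_B$, and a union bound over the two parties gives that with probability at least $1-\epsilon'^2$ we have $\cwt(V_A),\cwt(V_B) < o/8$, where $V_A, V_B$ are the induced error patterns on the two control substrings.

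Conditioned on this event, Property~\ref{prop:encdec}(1) immediately implies that $\contdec(\corrupt_{V_A}(\contenc(\ctrlinfo_A^{(m)}, R^A)), R^A) = \ctrlinfo_A^{(m)}$, and symmetrically for Bob. Since the decoded value equals the intended one, the embedded hash verification inside $\contdec$ (see the three-stage construction in Section~\ref{subsubsec:controlencode}) is automatically passed, and $\contdec$ returns the correct control string rather than $\perp$. This establishes both the ``decode'' and ``verify'' parts of the claim.

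The main obstacle is the concentration step, since it is what actually uses the information hiding. Two things must be checked carefully: (i) that the adversary's obliviousness lets us treat $S$ as fixed while $z^A_{m,\cdot}, z^B_{m,\cdot}$ are uniform (this is the whole point of taking the hiding bits from $\locrand$, which has never been transmitted over the channel), and (ii) that a sampling-without-replacement Chernoff bound applies despite the joint distinctness constraint between Alice's and Bob's positions. The latter is handled by noting that we only ever use the marginal distribution of each party's positions, for which uniformity on $c\log(1/\epsilon')$-subsets holds exactly; once we have concentration for each party individually, the rest is a routine union bound plus an application of Property~\ref{prop:encdec}.
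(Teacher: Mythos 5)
Your proof takes essentially the same approach as the paper's: use the random (adversary-hidden) placement of the control positions so that the expected fraction of corrupted control bits is below the code's decoding radius, apply a Chernoff-type concentration bound per party, and union bound over Alice and Bob. You are somewhat more careful than the paper in two places it glosses over — noting that the counts are hypergeometric (sampling without replacement, which if anything only tightens the bound) and that the joint distinctness of $z^A_{m,\cdot}$ and $z^B_{m,\cdot}$ still leaves each party's positions marginally uniform — but the argument and the invocation of Property~\ref{prop:encdec} are the same.
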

\begin{proof}
Let $\nu < 1/20$ be the fraction of errors in a mini-block. Recall that Alice's
control information in the mini-block consists of $c\log(1/\epsilon')$ randomly
located bits. Let $X$ be the number of these control bits that are corrupted.
Note that $\E[X] = \nu c\log(1/\epsilon')$. Now, since the control information
is protected with an error correcting code of distance $c\log(1/\epsilon')/4$,
we see that Bob can verify and correctly decode Alice's control symbols as long
as $X < c\log(1/\epsilon')/8$. Note that by the Chernoff bound,
\begin{align*}
 \Pr\left(X > c\log(1/\epsilon')/8\right) &\leq
e^{-\frac{\frac{c\log(1/\epsilon')}{8} - \frac{c\log(1/\epsilon')}{20}}{3}}\\
&\leq {\epsilon'}^{c/40},
\end{align*}
which is $< \epsilon'^2 / 2$ for a suitable constant $c$. Similarly, the
probability that Alice fails to verify and correctly decode Bob's control
symbols is $< \epsilon'^2/2$. Thus, the desired statement follows by a
union bound.
\end{proof}

\begin{lemma}\label{lem:invbound}
 With probability at least $1 - 2^{-\Omega(\epsilon'\Niter)}$, the 
number of iterations in which some party's control information is invalid but 
neither party's control information is maliciously corrupted is $O(\epsilon 
\Niter)$.
\end{lemma}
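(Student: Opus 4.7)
My plan is to upper bound the quantity in the lemma by the (necessarily larger) count of iterations in which at least one party's control information fails to be \emph{sound}, since invalid control information is in particular not sound. I would split the iterations into two classes based on the number of adversarial bit corruptions in the corresponding mini-block of $b'$ bits: a mini-block is \emph{heavy} if it contains at least $b'/20$ corrupted bits, and \emph{light} otherwise. For the heavy mini-blocks a simple budget argument gives a deterministic bound: the adversary has a total of at most $\epsilon \Niter b'$ bit corruptions, and each heavy mini-block consumes at least $b'/20$ of them, so the number of heavy mini-blocks is at most $20\epsilon \Niter = O(\epsilon \Niter)$.

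For each light mini-block, Lemma~\ref{lem:whpcontrol} guarantees that the probability (over the randomness used in that iteration) that either party fails to correctly decode and verify the control symbols is at most $\epsilon'^2$. The crucial point is that the randomness consumed in iteration $m$ --- namely the information-hiding positions $z_{m,\cdot}^A, z_{m,\cdot}^B$, the XOR masks $V_A^{(m)}$ and $V_B^{(m)}$, and the per-iteration hash seeds allocated inside {\tt InitializeSharedRandomness} --- is drawn as disjoint fresh blocks from $\locrand$ (and from the pseudorandom string $S$), so these randomness sources are independent across iterations. Combined with the obliviousness of the adversary (whose error pattern is fixed before seeing any of this randomness), the indicator variables for non-soundness across the at most $\Niter$ light mini-blocks are stochastically dominated by independent $\mathrm{Bernoulli}(\epsilon'^2)$ variables. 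A multiplicative Chernoff bound then yields that their sum exceeds $\epsilon \Niter$ with probability at most $2^{-\Omega(\epsilon \Niter)} \leq 2^{-\Omega(\epsilon' \Niter)}$, using $\mu \leq \epsilon'^2 \Niter \ll \epsilon \Niter$ (since $\epsilon' < \epsilon$) together with the large-deviation branch of Chernoff.

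The main delicate point is the independence claim in the Chernoff step: I would need to check carefully that the adversary, being oblivious, cannot adaptively concentrate its corruptions into light mini-blocks based on randomness it does not see, and that the per-iteration randomness drawn inside {\tt InitializeSharedRandomness} is genuinely disjoint across $m$. Both follow immediately from the oblivious adversary model and the explicit allocation of fresh random bits per iteration, so this is essentially a bookkeeping exercise rather than a new technical idea. Summing the $O(\epsilon \Niter)$ heavy contribution with the high-probability $O(\epsilon \Niter)$ light contribution gives the claimed overall bound of $O(\epsilon \Niter)$ with failure probability $2^{-\Omega(\epsilon' \Niter)}$.
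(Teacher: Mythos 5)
Your proof follows essentially the same route as the paper's: split iterations by a fixed $1/20$ error-fraction threshold, bound the heavy ones deterministically by the adversary's corruption budget ($\le 20\epsilon\Niter$), and apply Lemma~\ref{lem:whpcontrol} together with a Chernoff bound to the light ones. The additional care you take to justify independence across iterations (fresh, disjoint blocks of $\locrand$ for the information-hiding positions, XOR masks, and control-hash seeds, plus the obliviousness of the adversary) is correct and fills in a point the paper leaves implicit.
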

\begin{proof}
First of all, consider the number of iterations of $\pienc$ for which the 
fraction of errors within the iteration is at least $1/20$. Since the total 
error fraction throughout the protocol is $\epsilon$, we know that at at most 
$20\epsilon\Niter$ iterations have such an error fraction.

Next, consider any ``low-error'' iteration in which the error fraction is less 
than $1/20$. By Lemma~\ref{lem:whpcontrol}, the probability that control 
information of some party is invalid (but neither party's control information 
is maliciously corrupted) is at most $\epsilon'^2$. Then, by the Chernoff 
bound, the number of ``low-error'' iterations with invalid control information 
is at most $(\epsilon'^2 + \epsilon')\Niter = O(\epsilon' \Niter)$ with 
probability at least $1-2^{-\Omega(\epsilon' \Niter)}$.

It follows that with probability at least $1-2^{-\Omega(\epsilon' \Niter)}$, 
the total number of iterations with invalid control information (but not 
maliciously corrupted control information) is $O(\epsilon\Niter)$.
\end{proof}

\begin{lemma}\label{lem:malbound}
With probability at least $1 - 2^{-\Omega(\epsilon'^2 \Niter)}$, the 
number of iterations in which some party's control information is 
maliciously corrupted is at most $O(\epsilon'^2 \Niter)$.
\end{lemma}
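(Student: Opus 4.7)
The plan is to bound the per-iteration probability of malicious corruption by $O(\epsilon'^{c})$ for some constant $c$ that can be made $\geq 3$ by choice of the hash output length, and then aggregate across the $\Niter$ iterations via a Chernoff-style tail bound.

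The first step is to decompose malicious corruption into its two possible sources. Fix an iteration $m$ and focus on Alice's transmitted control information (Bob's case being symmetric, and handled by a union bound). The event that $\widt{\ctrlinfo}_A^{(m)}$ is maliciously corrupted decomposes into: (a) $\contdec$ outputs some $\ctrlinfo' \neq \ctrlinfo_A^{(m)}$ with $\ctrlinfo' \neq \perp$, or (b) the decoding is correct but at least one of the hashes $h_{A,c}^{(m)}, h_{A,x}^{(m)}, h_{A,k}^{(m)}, h_{A,T}^{(m)}, h_{A,{\tt MP1}}^{(m)}, h_{A,{\tt MP2}}^{(m)}$ collides with the evaluation on a different input (so Bob acts on an incorrect value).

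The second step is to bound the probability of each source per iteration. Let $V$ be the adversarial corruption pattern restricted to the $o = c\log(1/\epsilon')$ positions $z_{m,1}^A,\dots,z_{m,o}^A$ carrying Alice's encoded control information in mini-block $m$. If $\cwt(V) < o/8$, Property~\ref{prop:encdec}(1) guarantees correct decoding, so (a) cannot occur. If $\cwt(V) \geq o/8$, then the seed $R^A$ is a block of truly uniform bits of $\locrand$ (drawn afresh per iteration, and, crucially, independent of the adversary's error pattern since the adversary is oblivious), so Property~\ref{prop:encdec}(2) gives a malicious-decoding probability of at most $2^{-\Omega(l)} = \epsilon'^{\Omega(1)}$. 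For (b), each hash has output length $p=\Theta(\log(1/\epsilon'))$ and is seeded from the $\delta$-biased string $\shpseudo$, so by Property~\ref{prop:collisionbiased} any single collision happens with probability at most $2^{-p} + \delta = \epsilon'^{\Omega(1)}$; a union bound over the $O(1)$ hashes per iteration keeps the bound at $\epsilon'^{\Omega(1)}$. Choosing the constants $c$ in $o = c\log(1/\epsilon')$ and in $p$ large enough, the per-iteration probability of malicious corruption (of either Alice's or Bob's control information) is at most $\epsilon'^{c_0}$ for any desired constant $c_0$; we fix $c_0 \geq 3$.

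The third step is aggregation. Let $X_m$ be the indicator that iteration $m$ contains maliciously corrupted control information, and $X=\sum_m X_m$, so $\E[X] \leq \epsilon'^{c_0}\Niter$. The events $\{X_m\}$ are not fully independent across iterations, but I can split the randomness into two pieces. The XOR-pads $V_A^{(m)}, V_B^{(m)}$ and the position vectors $z_{m,\cdot}^A, z_{m,\cdot}^B$ driving source (a) come from $\locrand$, which consists of truly uniform bits with disjoint chunks used in different iterations, giving full mutual independence. The hash seeds from $\shpseudo$ driving source (b) are only $\delta$-close to $k$-wise independent with $k = \Theta(\Niter p)$; however, the standard Chernoff/Hoeffding bound holds for $k$-wise independent indicators up to a $\delta\cdot 2^{O(k)}$ additive loss, which by our choice $\delta = 2^{-\Theta(\Niter p)}$ is negligible compared to the target $2^{-\Omega(\epsilon'^2 \Niter)}$. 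Applying the Chernoff tail bound with target $t = C\epsilon'^2 \Niter \gg \E[X]$ yields $\Pr(X \geq t) \leq 2^{-\Omega(t)} = 2^{-\Omega(\epsilon'^2\Niter)}$, as desired.

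The main obstacle I anticipate is bookkeeping the independence structure carefully: one must verify that the adversary's obliviousness plus the separation of truly-random $\locrand$ (for encoding seeds and hidden positions) from the $\delta$-biased $\shpseudo$ (for content hashes) really gives enough independence to apply concentration. Once this is set up, the quantitative bounds follow from Property~\ref{prop:encdec}(2) and Property~\ref{prop:collisionbiased} with room to spare by tuning the logarithmic constants $c$ and $p$.
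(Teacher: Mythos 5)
Your proposal is correct and takes essentially the same route as the paper's own proof: bound the per-iteration probability of malicious corruption by a union bound over the two sources --- a wrong non-$\perp$ decoding handled via Property~\ref{prop:encdec}(2) and a hash collision handled via Property~\ref{prop:collisionbiased} --- and then apply a Chernoff bound that accounts for the limited independence coming from the $\delta$-biased hash seeds. The only difference is expository: you spell out the independence bookkeeping (separating the truly uniform bits of $\locrand$ that drive decoding failures from the $\delta$-biased $\shpseudo$ seeds that drive collisions) more explicitly than the paper, which dismisses this point with a parenthetical remark.
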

\begin{proof}
Suppose a particular party's control information is maliciously corrupted 
during a certain iteration (say, the $m^\text{th}$ iteration). Without loss of 
generality, assume Alice's control 
information is maliciously corrupted. Then, we must have one of 
the following:
\begin{enumerate}
 \item The number of corrupted bits in the encoded control information of Alice 
is $> \frac{1}{8} \left(\frac{b'-b}{2}\right)$, i.e., the fraction of 
control information bits that is corrupted is greater than $\frac{1}{8}$.

\item The number of corrupted bits in the encoded control information of Alice 
is $< \frac{1}{8}\left(\frac{b'-b}{2}\right)$, but a hash collision occurs for 
one of $h_{A,c}^{(m)}$, $h_{A,x}^{(m)}$, $h_{A,k}^{(m)}$, $h_{A,T}^{(m)}$, 
$h_{A,{\tt MP1}}^{(m)}$, $h_{A, {\tt MP2}}^{(m)}$.
\end{enumerate}
Note that by Property~\ref{prop:encdec}, case (1.) happens with 
probability at most
\[
2^{-\Theta(\log(1/\epsilon'))} \leq \epsilon'^2,
\]
for suitable constants.

Next, we consider the probability that case (2.) occurs. By 
Property~\ref{prop:collisionbiased}, we have that the probability of a hash 
collision any specific quantity among $c_A$, $x$, $k_A$, $T_A$, $T_A[1, {\tt 
MP1}]$, $T_A[1, {\tt MP2}]$ is at most $2^{-\Theta(\log(1/\epsilon'))} + 
2^{-\Theta(\Niter \log(1/\epsilon'))} \leq \epsilon'^2$ for appropriate 
constants. Thus, by a simple union bound, the probability that any one 
of the aforementioned quantities has a hash collision is at most $6\epsilon'^2$.

A simple union bound between the two events shows that the probability that 
Alice's control information in a given iteration is maliciously corrupted is at 
most $7\epsilon'^2$. Similarly, the probability that Bob's control information 
in a given iteration is maliciously corrupted is also at most $7\epsilon'^2$. 
Hence, the desired claim follows by the Chernoff bound (recall that there is 
limited independence, due to the fact that we use pseudorandom bits to seed 
hash functions, but this is not a problem due to our choice of parameters (see 
Section~\ref{subsec:controlinfo})).
\end{proof}

\subsubsection{Evolution of Potential Function During Iterations}
We now wish to analyze the evolution of the potential function $\Phi$ as the 
execution of the protocol proceeds. First, we define some notation that will 
make the analysis easier:
\begin{definition}
 Suppose we wish to analyze a variable $\mathrm{var}$ over
the course of an iteration. For the purpose of Lemmas~\ref{lem:synctrans}, 
\ref{lem:asynctrans}, and \ref{lem:unsynctrans}, we let $\mathrm{var}$
denote the value of the variable at the start of the iteration (the start of 
the code block in Figure~\ref{fig:oblivious} that is repeated $\Niter$ times). 
Moreover, we let $\mathrm{var}'$ denote the value of the variable
just after the ``update phase'' of the iteration (lines 2-21 of ${\tt
AliceControlFlow}$ and ${\tt BobControlFlow}$), while we will let
$\mathrm{var}''$ denote the value of the variable at the end of the iteration 
(at the end of the execution of ${\tt AliceControlFlow}$ and ${\tt 
BobControlFlow}$).

Moreover, we will use the notation $\Delta\mathrm{var}$ to denote 
$\mathrm{var}'' - \mathrm{var}$, i.e., the change in the variable over the 
course of an iteration. For instance, $\Delta\Phi = \Phi'' - \Phi$.
\end{definition}

\begin{definition}
During an iteration of $\pienc$, Alice is said to undergo a \emph{transition} 
if one of the ``if'' conditions in lines 22-29 of ${\tt AliceControlFlow}$ is 
true. The transition is called a \emph{meeting point transition} (or \emph{MP 
transition}) if either line 23 or line 25 is executed, while the transition is 
called an \emph{error transition} if lines 27-29 are executed. Transitions for 
Bob are defined similarly, except that one refers to lines in the corresponding 
${\tt BobControlFlow}$ function.
\end{definition}

\noindent Now, we are ready for the main analysis. Lemmas~\ref{lem:synctrans}, 
\ref{lem:asynctrans}, and \ref{lem:unsynctrans} prove lower bounds on the 
change in potential, $\Delta\Phi$, over the course of an iteration, depending 
on (1.) the state of the protocol prior to the iteration and (2.) whether/how 
control information is corrupted during the iteration.

\begin{lemma}\label{lem:synctrans}
Suppose the protocol is in a perfectly synced state at the beginning of an 
iteration. Then, the change in potential $\Phi$ over the course of the  
iteration behaves as follows, according to the subsequent state (at the end of 
the iteration):
\begin{enumerate}
\item If the subsequent state is perfectly synced or almost synced, then:
\begin{itemize}
\item If the control information received by both parties is sound, then 
$\Delta\Phi \geq b - Ct\cdot\log(1/\epsilon)$, where $t$ is the number of 
data (non-control) bits that are corrupted in the next iteration.

\item If the control information received by at least one party is invalid or 
maliciously corrupted, then $\Delta\Phi \geq -Ct\cdot\log(1/\epsilon) - (D-1)b 
\geq -Ct\cdot\log(1/\epsilon) - \min\{\cinv b,\cmal B\}$.
\end{itemize}

\item If the subsequent state is unsynced, then $\Delta\Phi \geq -\cmal B$.
\end{enumerate}
\end{lemma}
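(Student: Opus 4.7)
The plan is to run a case analysis on two axes: (a) whether each party's control information this iteration is sound vs.\ invalid/maliciously corrupted, and (b) which state (p-sync, a-sync, or unsynced) the protocol reaches at the end of the iteration. Since the update phase of $\mathtt{AliceControlFlow}$/$\mathtt{BobControlFlow}$ is deterministic given the decoded control strings, one can pinpoint exactly which pseudocode branch fires and then evaluate $\Delta\Phi$ term by term.

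First I handle Case 1 with both control blocks sound. By Property~\ref{prop:encdec}, each party reads the other's hashes truthfully. From p-sync, the sender's chunk counter increments, and the only possible transition is that the listener's $j_A$ passes $s$ and the rateless-code decoding of $\widt x$ passes the hash check $\mathrm{hash}(\widt x)=\widt h_{B,x}^{(m)}$, triggering $\mathtt{AliceAdvanceBlock}$ and a switch to a-sync. Two sub-cases:
(i) Stay in p-sync. Then $\Delta \ell^+=0$, $\Delta j=+1$, $\Delta\curerr=t$, $\Delta\inv=0$, yielding $\Delta\Phi=b-Ct\log(1/\epsilon)$ directly from the p-sync formula. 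The variant in which decoding fails at $j=2s$ and $j_A$ resets to $0$ is treated separately: Lemma~\ref{lem:ecc} forces $\curerr\geq (\delta_{2s}/2)\cdot 2B=B/15$, so the released $C\,\curerr\log(1/\epsilon)$ term dominates the $2sb=2B$ drop from $j$ resetting, provided $C$ is taken large.
(ii) Transition to a-sync. Here $\ell^+$ is unchanged but $\max\{\ell_A,\ell_B\}=\ell^++B$ and the new $j$ resets, so
$\Delta\Phi=B(1+C_0H(\epsilon))-b-j^\star b+C\,\curerr\log(1/\epsilon)+Db\,\inv$
with $j^\star\in(s,2s]$, hence $j^\star b\leq 2B$. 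The target inequality reduces to $C_0\,B\,H(\epsilon)+(\text{released penalties})\geq O(b)+O(B)$; with $B=sb$ and $s=\Theta(1/\epsilon')$, a fixed $C_0$ large enough (combined with the regime $\epsilon'\ll H(\epsilon)$ implied by $\epsilon'=\Theta(\epsilon^2)$) gives the claim.

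Next I handle Case 1 when at least one control block is invalid or maliciously corrupted but the state remains p-sync/a-sync. A $\perp$-output skips the $\widt h_{B,x}^{(m)}$ match entirely, while a maliciously corrupted control either still fails the hash match or induces a hash collision that lands us in Case 2 (as a false $\mathtt{AliceAdvanceBlock}$ desynchronizes the transcripts). Hence p-sync is preserved with $\Delta j=+1$, $\Delta\curerr=t$, $\Delta\inv=+1$, giving $\Delta\Phi=b-Ct\log(1/\epsilon)-Db=-(D-1)b-Ct\log(1/\epsilon)$. Choosing $D$ so that $D-1\leq\min\{\cinv,\cmal s\}$ yields the stated $-\min\{\cinv b,\cmal B\}$ bound.

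Finally, Case 2 (transition to unsynced) can arise only from a hash collision in a maliciously corrupted control block that causes a block-advance on a wrong $\widt x$. After this event $\ell^+$ is unchanged, $\ell^-$ jumps from $0$ to $B$, and $(k_A,\synca)=(k_B,\syncb)=(1,1)$ so the first unsynced formula applies. The difference then consists of a $-C_1 B$ penalty, a $-bC_4$ from $Z_1$, and an $O(b)$ contribution from $k_{AB}$ and $E_{AB}$; the increment to $\malAB$ happens only from $\synca=0$ starts, so it does not hurt us here. Picking $\cmal\geq C_1+2+C_4/s+O(1)$ (and consistent with the requirements of Lemmas~\ref{lem:asynctrans}--\ref{lem:unsynctrans}) gives $\Delta\Phi\geq -\cmal B$.

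The main obstacle is sub-case (ii) of Case~1 with sound control: naively $j^\star b$ is lost and only $B(1+C_0H(\epsilon))$ is gained, which is too tight when $\curerr$ and $\inv$ are both small. The fix is to exploit the slack $C_0 B\,H(\epsilon)-O(b)$ from the rate-loss multiplier on $\ell^+$, together with the distance bound of Lemma~\ref{lem:ecc} which forces a lower bound on $\curerr$ whenever $j^\star$ is close to $2s$; juggling these to fix $C_0,C,D$ consistently with the analogous bookkeeping of Lemmas~\ref{lem:asynctrans} and \ref{lem:unsynctrans} is the most delicate part of the argument.
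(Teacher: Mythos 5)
Your plan—case split on control-info status and on the subsequent state, then read off $\Delta\Phi$ from the potential—is the paper's plan, and your sub-case (i) with sound control and no reset is correct. But several of the sub-cases are handled incorrectly or too loosely, and the hardest one is not actually resolved.

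The central gap is in sub-case (ii) (sound control, transition to almost synced). You write that the target inequality reduces to $C_0 B H(\epsilon) + (\text{released penalties}) \geq O(b) + O(B)$ and then claim ``a fixed $C_0$ large enough \ldots gives the claim.'' That cannot work: the loss from the $jb$ term disappearing is up to $2sb = 2B$, while $C_0 B H(\epsilon)$ is only $O(B H(\epsilon)) \ll B$ for small $\epsilon$, no matter how large the fixed $C_0$ is. The released penalties $C\,\curerr\log(1/\epsilon) + Db\,\inv$ must supply nearly all of the missing $\sim B$, and showing that they do is the substance of the lemma. The paper's Case~3 does this by introducing $j_0$ (the last chunk count recorded after sound control) so that $\inv \geq j - j_0$; when $j_0 \le s$ the $Db\,\inv$ term alone closes the gap, and when $j_0 > s$ one must combine the rateless-code distance bound $\epsilon_0 \ge \tfrac12\delta_{j_0}$ with the explicit form $\delta_{j_0} = H^{-1}(\frac{j_0-s}{j_0} - \frac{1}{4s})$ to bound $j_0$ in terms of the corruption rate, then verify a chain of entropy inequalities. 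You gesture at ``the distance bound of Lemma~\ref{lem:ecc} \ldots forces a lower bound on $\curerr$,'' but it forces such a bound only conditional on $j_0 > s$, and you never set up the $j_0$-split, so the argument as written is a claim that the calculation ``can be juggled,'' not a proof.

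The same omission affects the $j = 2s$ reset variant of sub-case (i): your bound $\curerr \ge \delta_{2s} B$ is valid only when all $2s$ preceding iterations had sound control (so that the full $\mcC$ distance applies). When some are invalid or malicious, the paper again needs the $j_0$-split with $\inv \ge 2s - j_0 - 1$ and $\curerr \ge \tfrac12\delta_{j_0} j_0 b$, and a nontrivial inequality (\ref{eq:redeq}) involving $H^{-1}$ to merge them. Finally, in Case~2 (transition to unsynced) you restrict to $\ell^- = B$ and $(k_A,\synca) = (k_B,\syncb) = (1,1)$. A malicious corruption can cause both parties to advance (e.g.\ Bob advances on a wrong $\widt{x}$ while Alice advances because she was told Bob had succeeded), giving $\ell^- = 2B$; and one party may set its $\mathsf{sync}$ flag to $0$ while the other does not, landing in the $\synca'' \ne \syncb''$ branch of the potential with a different $Z$-term. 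You also omit the loss of the $+jb$ summand (up to $2B$) from the perfectly-synced potential. Each of these enlarges the needed $\cmal$, and the paper tracks them explicitly to justify $\Delta\Phi \ge -\cmal B$.
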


\begin{proof}
Assume that the protocol is currently in a perfectly synced state, and, without 
loss of generality, suppose that Alice is trying to send data bits 
corresponding to $c_A$-\text{th} $B$-block of $\piblk$ to Bob.

For the first part of the lemma statement, assume that the state after the next 
iteration is perfectly synced or almost synced. At the end of the iteration, Bob 
updates his estimate of what Alice is sending, and there are three cases:
\begin{itemize}
 \item \underline{Case 1}: Bob is still not able to decode the $c_A$-th 
$B$-block that Alice 
is sending, and $j_B$ does not reset to zero. In this case, it is clear that 
$j$ 
increases by 1, while $\curerr$ increases by $t$. Thus, $\Delta\Phi 
\geq b-Ct\cdot\log(1/\epsilon)$ if the control information received by both 
parties is sound, while $\Delta\Phi \geq -Ct\cdot\log(1/\epsilon) - (D-1)b$ 
otherwise (as $\inv$ increases by 1).

\item \underline{Case 2}: Bob is still not able to decode the $c_A$-th 
$B$-block that Alice is sending, but $j_B$ resets to 0 (after increasing to 
$2s$). Then, note that if both parties receive sound control information in the 
next iteration, we have
\[
 \Delta\Phi \geq (b-Ct\cdot\log(1/\epsilon)) + (Db\cdot\inv 
+ C(\curerr+t)\log(1/\epsilon) - 2B).
\]
Moreover, we must have $\curerr + t \geq \frac{1}{2} \delta_{2s} (2B) = 
\frac{1}{15}B$, which implies that
\[
 Db\cdot\inv + C(\curerr + t)\log(1/\epsilon) - 2B \geq 0,
\]
as desired (for suitably large $C$).

On the other hand, suppose some party receives invalid or maliciously corrupted 
control information in the next iteration. Then,
\[
 \Delta\Phi \geq (-Ct\cdot\log(1/\epsilon) - (D-1)b) + (Db\cdot(\inv+1) 
+ C(\curerr+t) \log(1/\epsilon) - 2B).
\]
Thus, to prove the lemma, it suffices to show 
\begin{equation}
 Db\cdot(\inv+1) + C(\curerr+t) \log(1/\epsilon) - 2B \geq 0. \label{eq:reset}
\end{equation}
Let $j_0$ be the last/most recent value of $j_B$ occurring after an 
iteration in which Bob receives sound control information (or 
$j_0 = 0$ if such an iteration did not occur). Thus, in the 
last $2s-j_0-1$ iterations, Bob has not received sound control information. 
This implies that $\inv \geq 2s-j_0-1$ and $\curerr \geq 
\frac{1}{2}\delta_{j_0}j_0 b$. Thus, we reduce (\ref{eq:reset}) to showing the 
following:
\begin{equation}
 D(2s-j_0)b + \frac{C}{2}\delta_{j_0}j_0 b \cdot 
\log(1/\epsilon) - 2B \geq 0. \label{eq:resetupdate}
\end{equation}
Note that if $j_0 \leq s$, then $\delta_{j_0} = 0$, and so the lefthand side of 
(\ref{eq:resetupdate}) is at least
\[
 Dsb - 2B = (D-2)B \geq 0,
\]
as desired. Hence, we now assume that $j_0 > s$. Then, by Lemma~\ref{lem:ecc}, 
$\delta_{j_0} \geq H^{-1}\left(\frac{j_0 - s}{j_0} - \frac{1}{4s}\right)$ 
(recall that $H^{-1}$ is the unique inverse of $H$ that takes values in
$[0,1/2]$). Thus, (\ref{eq:resetupdate}) reduces to showing
\begin{equation}
 \frac{C}{2}H^{-1}\left(\frac{j_0 - s}{j_0} - 
\frac{1}{4s}\right)\log(1/\epsilon) \geq D - \frac{2s(D-1)}{j_0}. 
\label{eq:redeq}
\end{equation}
Note that if $j_0 \leq \frac{D-1}{D}\cdot 2s$, then (\ref{eq:redeq}) is 
clearly true, as the righthand side of (\ref{eq:redeq}) is nonpositive.

If $j_0 > \frac{D-1}{D}\cdot 2s$, then note that the righthand side of 
(\ref{eq:redeq}) is at most 1 (since $j_0 \leq 2s$), while the lefthand side 
is at least
\begin{align*}
 \frac{C}{2} H^{-1}\left(1 - \dfrac{s}{\frac{D-1}{D}\cdot 2s} - 
\frac{\epsilon'}{4}\right)\log(1/\epsilon) &\geq \frac{C}{2} 
H^{-1}\left(\frac{D-2}{2(D-1)} - \frac{\epsilon'}{4}\right)\log(1/\epsilon)\\
&\geq 1.
\end{align*}

\item\underline{Case 3}: Bob manages to decode the $c_A$-th $B$-block and 
updates his transcript. Then, the protocol either transitions to an almost 
synced state or remains in a perfectly synced state (if Alice receives 
maliciously corrupted control information indicating that Bob has already 
advanced his transcript). Thus,
\[
 \Delta\Phi \geq (b-Ct\cdot\log(1/\epsilon)) + B(1+C_0 H(\epsilon)) + 
C (\curerr+t) \log(1/\epsilon) - (j+2)b + Db\cdot\inv,
\]
Hence, it suffices to show that
\begin{equation}
 B(1 + C_0 H(\epsilon)) + C(\curerr+t) \log(1/\epsilon) - (j+2)b + 
Db\cdot\inv \geq 0. 
\label{eq:nonnegpot}
\end{equation}

% Note that if $j=s-1$, then it must be that $\curerr' = 0$, and so, the 
% above quantity is
% \begin{align*}
%  B(1+C_0 H(\epsilon)) - (s+1)b + Db\cdot\inv &\geq C_0 B H(\epsilon) - b\\
%  &= B(C_0 H(\epsilon)-\epsilon')\\
%  &\geq 0.
% \end{align*}
Note that $j \geq s$. Suppose $j_0$ is the last/most recent value of $j_B$ 
occurring after an iteration in which Bob receives sound control information
(or 
$j_0 = 0$ if such an iteration did not 
occur). Then, $\inv \geq j-j_0$. Hence, (\ref{eq:nonnegpot}) reduces to 
showing
\begin{equation}
 B(1+C_0 H(\epsilon)) + C\cdot\curerr' \cdot 
\log(1/\epsilon) - (j+2)b + Db(j-j_0) \geq 0. \label{eq:rednonneg}
\end{equation}
Note that if $j_0 \leq s$, then the lefthand side of (\ref{eq:rednonneg}) is at 
least
\begin{align*}
 B(1+C_0 H(\epsilon)) - (j+2)b + Db(j-s) &\geq B(1+C_0 H(\epsilon)) + (D-1)jb - 
DB - 2b\\
&\geq B(1+C_0 H(\epsilon)) + (D-1)B - DB - 2b\\
&\geq B(C_0 H(\epsilon) - 2\epsilon')\\
&\geq 0,
\end{align*}
as desired.

Now, assume $j_0 > s$. Let $\epsilon_0$ be the fraction of errors in the first 
$j_0 b$ data bits sent since Alice and Bob became perfectly synced (or since the 
last reset). Then,
\[
 \curerr' \geq \epsilon_0 j_0 b.
\]
Hence, the lefthand side of (\ref{eq:rednonneg}) is at least
\begin{equation}
 B(1+C_0 H(\epsilon)) + j_0 b(C\epsilon_0 \log(1/\epsilon) - 1) - 2b + 
(D-1)b(j-j_0). \label{eq:lhspot}
\end{equation}
Note that if $C\epsilon_0 \log(1/\epsilon) \geq 1$, then the above quantity is 
clearly nonnegative, as $B \geq b/\epsilon' \geq 2b$. Thus, let us assume that 
$C\epsilon_0 \log(1/\epsilon) < 1$. Now, recall from our choice of $\crateless$ 
and the fact that Bob had not successfully decoded the blocks sent by Alice 
before the current iteration, we have $\epsilon_0 \geq 
\frac{1}{2}\delta_{j_0}$, which implies that
\begin{align*}
 \frac{j_0 - s}{j_0} - \frac{1}{4s} = H(\delta_{j_0}) \leq H(2\epsilon_0).
\end{align*}
Hence,
\begin{align*}
 j_0 \leq \frac{s}{1-H(2\epsilon_0)-\frac{1}{4s}}.
\end{align*}
Now, (\ref{eq:lhspot}) is at least
\begin{align}
 &\ B(1+C_0 H(\epsilon)) + 
\frac{B(C\epsilon_0 \log(1/\epsilon)-1)}{1-H(2\epsilon_0)-\frac{1}{4s}} - 2b 
\nonumber\\ 
&\geq B(1+C_0 H(\epsilon)) + 
\frac{B(C\epsilon_0 \log(1/\epsilon)-1)}{1-H(2\epsilon_0)-\frac{\epsilon'}{4}} 
- 2b \nonumber \\
&\geq B\left(1 + C_0 H(\epsilon) - (1-C\epsilon_0 \log(1/\epsilon))\left(1 + 
H(2\epsilon_0) + \frac{\epsilon'}{4} + 2\left(H(2\epsilon_0) + 
\frac{\epsilon'}{4}\right)^2\right) - 2\epsilon'\right) \nonumber \\
&\geq B \left(1 + C_0 H(\epsilon) - 1 - H(2\epsilon_0) - \frac{\epsilon'}{4} - 
2 H(2\epsilon_0)^2 - \epsilon' H(2\epsilon_0) - \frac{\epsilon'^2}{8} + 
C\epsilon_0\log(1/\epsilon) - 2\epsilon'\right) \nonumber \\
&\geq B\left( C_0 H(\epsilon) - 4\epsilon' - 3H(2\epsilon_0) + 
C\epsilon_0 \log(1/\epsilon) \right). \label{eq:simppot}
\end{align}
Note that if $\epsilon_0 < \epsilon$, then (\ref{eq:simppot}) is bounded from 
below by
\begin{align*}
 B(C_0 H(\epsilon) - 4\epsilon' - 3 H(2\epsilon)) &\geq B\left((4 H(\epsilon) 
- 4\epsilon') + ((C_0-4)H(\epsilon) - 3H(2\epsilon))\right)\\
&\geq 0,
\end{align*}
since $H(\epsilon) \geq \epsilon \geq \epsilon'$, $C_0 \geq 10$, and 
$2H(\epsilon)\geq H(2\epsilon)$.

On the other hand, if $\epsilon_0 \geq \epsilon$, then (\ref{eq:simppot}) is 
bounded 
from below by
\[
B\left((4 H(\epsilon) - 4\epsilon') + (C\epsilon_0 \log(1/\epsilon_0) - 3 
H(2\epsilon_0))\right) \geq 0,
\]
as long as $C\geq 10$.
\end{itemize}
This completes the proof of the first part of the lemma.

Next, we prove the second part of the lemma. Assume that the protocol is 
currently in a perfectly synced state and that the subsequent state is 
unsynced. Then, note that the control information of at least one party must be 
maliciously corrupted. Observe that $k_A'' = k_B'' = 1$, and ${\ell^-}'' 
\leq 2B$, while $E_A'' = E_B'' = 0$. Thus, if $\synca'' = \syncb''$, then
\[
 \Delta\Phi \geq -jb- 2 C_1 B + 2b C_2 - bC_4 \geq -\cmal B,
\]
while if $\synca'' \neq \syncb''$, then
\[
 \Delta\Phi \geq -jb - 2 C_1 B - 1.6bC_5 - bC_6 \geq -\cmal B,
\]
since $jb \leq 2B$.
\end{proof}

\begin{lemma}\label{lem:asynctrans}
 Suppose the protocol is in an almost synced state at the beginning of an 
iteration. Then, the change in 
potential $\Phi$ over the course of the iteration behaves as follows, according 
to the control information received during the iteration:
\begin{itemize}
\item If the control information received by both parties is sound, then 
$\Delta\Phi \geq b$.

\item If the control information received by at least one party is invalid, but 
neither party's control information is maliciously corrupted, then the 
potential does not change, i.e., $\Delta\Phi \geq -b \geq -\cinv b$.

\item If the control information received by at least one party is maliciously 
corrupted, then $\Delta\Phi \geq -\cmal B$.
\end{itemize}
\end{lemma}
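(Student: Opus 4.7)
The plan is a three-way case analysis on the control information received during the iteration (sound, invalid but not maliciously corrupted, and maliciously corrupted), with a sub-enumeration over the four subtypes of almost synced state listed in Definition~\ref{def:state}. The guiding intuition is that an almost synced state encodes a one-step transient mismatch: either one party is exactly one $B$-block ahead of the other in its transcript (subcases 1 and 2), or the transcripts already coincide and only one chunk counter is temporarily out of step (subcases 3 and 4). In each subcase a single exchange of sound control information is enough to snap both parties into a perfectly synced state.

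For the sound case I would trace directly through ${\tt AliceUpdateSyncStatus}$ and ${\tt BobUpdateSyncStatus}$. In subcases 1 and 2 the lagging party (who still holds $x$, the contents of the block it just finished transmitting) matches the ``$c_A + 1$'' elseif branch and calls ${\tt AliceAdvanceBlock}$ (or its Bob analogue), equalising the transcripts. In subcases 3 and 4 the out-of-step receiver hits the else branch where $j > \widt{j}$ is violated and resets its counter, while the sender's $j$ increments by one. In every combination, the new state is perfectly synced with $\ell^+_{\text{new}} = \max\{\ell_A,\ell_B\}$, $j_{\text{new}} = 0$, and $\curerr_{\text{new}} = \inv_{\text{new}} = 0$; substituting into the two potential formulas cancels the $-(j+1)b$ penalty of the almost synced definition and yields $\Delta\Phi \geq (j+1)b \geq b$.

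For the invalid case both parties compute $\widt{\ctrlinfo} = \perp$ and enter the last branch of the sync-status update, which preserves $\synca = \syncb = 1$ and $k_A = k_B = 1$, leaving the almost synced subtype intact. The only coordinate that can move is $j$, which grows by at most one (via $j \gets \min\{j+1, 2s\}$ when the relevant party is the sender, or via the hash-guarded step in ${\tt AliceUpdateEstimate}$ when it is the receiver; in the latter no erroneous advance can fire because $\widt{h}_{\cdot,x}^{(m)}$ is absent). The $-(j+1)b$ term therefore decreases by at most $b$, giving $\Delta\Phi \geq -b$, and since $\cinv \geq 1$, also $\Delta\Phi \geq -\cinv b$.

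Finally, for the maliciously corrupted case the transition can carry the state into either of the two unsynced branches, and the plan is to bound the drop by $\cmal B$. After a single maliciously corrupted iteration starting from an almost synced state one has $\malAB \geq 1$ at the end of the transition check, $\ell^- \leq 2B$ (at most one $B$-block of desynchronisation can be added by an erroneously triggered advance or rollback), $k_{AB} \leq 2$, $E_{AB} \leq 1$, and the $Z_1, Z_2$ boundary terms move by $O(b)$. Plugging these bounds into both unsynced formulas for $\Phi$ shows that the total drop from the almost synced potential is at most a fixed constant times $B$, and choosing $\cmal$ larger than this constant gives the bound. I expect this last case to be the main obstacle, since it requires enumerating how each of the four almost synced subtypes can be steered by maliciously corrupted control into each of the two unsynced subtypes and verifying that the $-C_7 B\,\malAB$ penalty (together with $-C_1 \ell^-$ and the $Z_1/Z_2$ terms) absorbs every resulting swing; the sound and invalid cases fall out by direct inspection of the sync-status update code.
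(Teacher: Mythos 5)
Your three-way split on the category of control information received, followed by a sub-enumeration over the almost-synced subtypes, is a valid reorganization of the paper's argument. The paper instead conditions on the \emph{resulting} state (perfectly synced / almost synced / unsynced) and observes which control-information outcomes are compatible with each transition; the two decompositions are dual and arrive at the same three bounds. Your computation in the sound branch, $\Delta\Phi \geq (j+1)b \geq b$, and in the invalid branch, $\Delta\Phi \geq -b$, match the paper exactly.

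The maliciously corrupted branch contains a real misreading that you should fix. You claim that after a single maliciously corrupted iteration starting from an almost synced state one has $\malAB \geq 1$, and that the $-C_7 B\,\malAB$ penalty helps ``absorb'' the swing. Both halves are wrong. By definition $\malA$ increments only during iterations that \emph{begin} with $\synca = 0$; in an almost synced state $\synca = \syncb = 1$, so $\malAB'' = 0$ exactly. Moreover, since $-2C_7 B\,\malAB$ (or $-C_7 B\,\malAB$) appears with a minus sign in $\Phi$, a positive $\malAB$ would \emph{widen} the drop, not cushion it. The actual bound is obtained by plugging $k_A'' = k_B'' = 1$, ${\ell^+}'' \geq \max\{\ell_A,\ell_B\} - B$, ${\ell^-}'' \leq 3B$, $E_{AB}'' = 0$, $\malAB'' = 0$, and the worst-case $Z_1$ (resp.\ $Z_2$) into the unsynced formula, yielding $\Delta\Phi \geq -B(1+C_0 H(\epsilon)) - 3C_1 B + 2bC_2 - bC_4 \geq -\cmal B$ when $\synca''=\syncb''$, with the analogous $C_5, C_6$ bound otherwise. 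Your stated $\ell^- \leq 2B$ is also too tight---from $\ell^- = B$, maliciously corrupted control can push both transcripts forward in inconsistent directions, and the paper's bound ${\ell^-}'' \leq 3B$ is what holds. Finally, in the invalid branch you assume both parties compute $\perp$, but the lemma hypothesis is ``at least one party''; when the other party's decode succeeds it may advance and land the pair in a perfectly synced state, which only improves the bound but is a case your reasoning silently skips. None of these change the qualitative conclusion, but the $\malAB$ semantics are load-bearing in Lemma~\ref{lem:unsynctrans}, so the misreading is worth correcting before moving there.
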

\begin{proof}
Assume the protocol lies in an almost synced state. We consider the following 
cases, according to the subsequent state in the protocol.
\begin{itemize}
 \item \underline{Case 1}: The subsequent state is perfectly synced. Then, we 
must have that $\Delta\Phi \geq (j+1)b \geq b$.
 \item \underline{Case 2}: The subsequent state is also almost synced. Then, 
note that the control information received by some party must be invalid or 
maliciously corrupted. Moreover, since $\max\{\ell_A,\ell_B\}$ remains 
unchanged and $j$ can increase by at most 1, it follows that $\Delta\Phi \geq 
-b \geq -\cmal B$.
 \item \underline{Case 3}: The subsequent state is unsynced. Then, observe that 
the control information received by some party must be maliciously 
corrupted. Note that ${\ell^+}'' \geq \max\{\ell_A, \ell_B\} - B$, and 
${\ell^-}'' \leq 3B$. Moreover, $k_A'' = k_B'' = 1$. Therefore, if $\synca'' = 
\syncb''$, then
\begin{align*}
 \Delta\Phi &\geq -B(1+C_0 H(\epsilon)) - 3C_1 B + 2bC_2 - bC_4 
\\
 &\geq -\cmal B,
\end{align*}
while if $\synca'' \neq \syncb''$, then
\begin{align*}
 \Delta\Phi &\geq -B(1+C_0 H(\epsilon)) - 3C_1 B - 1.6bC_5 - bC_6\\
 &\geq -\cmal B,
\end{align*}
as desired.

\end{itemize}
\end{proof}

\begin{lemma}\label{lem:unsynctrans}
 Suppose the protocol is in an unsynced state at the beginning of an iteration. 
Then, the change in potential $\Phi$ over the course of the iteration behaves 
as follows, according to the control information received during the iteration:
\begin{enumerate}
 \item If the control information received by both parties is sound, then
$\Delta\Phi \geq b$.
 \item If the control information received by at least one party is invalid,
but neither party's control information is maliciously corrupted, then
$\Delta\Phi \geq -\cinv b$.
 \item If the control information received by at least one party is maliciously
corrupted, then $\Delta\Phi \geq -\cmal B$.
\end{enumerate}
\end{lemma}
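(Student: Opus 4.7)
My plan is a case analysis along three interacting dimensions: (a) whether $(k_A,\synca)=(k_B,\syncb)$ holds before and after the iteration, determining which branch of $\Phi$ applies, (b) whether either party fires a transition in the transition phase, and (c) the quality of the exchanged control information. Before splitting into cases I would establish a uniform baseline: in a single iteration starting from an unsynced state, each party with $\sync=0$ has its $k$ incremented by one (so $k_{AB}$ grows by up to two) and its $E$ incremented by one exactly when it either fails to decode the partner's control info, decodes a $k$-hash mismatch, or observes the partner reporting out-of-sync. The quantities $\ell^+$, $\ell^-$, and $\malAB$ remain unchanged unless a transition fires or malicious corruption occurs.

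For sound control information I would argue that each party's decoded view of the other's state is correct. In the $(k_A,\synca)=(k_B,\syncb)$ branch this rules out spurious $E$-increments, so $\Delta\Phi \geq 2bC_2 - O(b) \geq b$ for suitable $C_2$. In the unequal branch, sound communication forces the next iteration into either the equal branch, a perfectly synced state, or an almost synced state: a party with $\sync=1$ receiving sound info from an out-of-sync partner sees $k\neq 1$ and drops to $\sync=0$; conversely, an unsynced party that has accumulated matching meeting-point evidence triggers a correct MP rollback, which decreases $\ell^-$ by at least $\widetilde{k}_A B$ while losing only $O(b\,k_{AB})$ in the $k$-term. Choosing $C_1$ large relative to $C_2$ yields $\Delta\Phi \geq b$. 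A correct error transition also gains potential because its triggering condition $E \geq 0.2\,k$ at $k=\widetilde{k}$ means the $-C_3 E_{AB}$ penalty had already absorbed a contribution that is reclaimed upon the reset.

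For merely invalid control information (no malicious corruption), $\ell^+,\ell^-,\malAB$ are untouched and no transition can fire on this information alone when $k\neq\widetilde{k}$; the only changes are $\Delta k_{AB}\leq 2$ and $\Delta E_{AB}\leq 2$, yielding $|\Delta\Phi|=O(b)$ and hence $\Delta\Phi \geq -\cinv b$ for a sufficiently large $\cinv$ depending on $C_2,\ldots,C_6$. For maliciously corrupted control information, the worst case is a false MP-rollback or a spurious error transition. In either case $\malAB$ increases by exactly one during the iteration, giving a $-C_7 B$ (or $-2C_7 B$) buffer that absorbs the $O(B)$ worst-case loss in $\ell^+$, $\ell^-$, and the $Z_1,Z_2$ terms when the iteration crosses between the two branches of $\Phi$; picking $C_7$ large enough compared to $1+C_0 H(\epsilon)$, $C_1$, $C_4$, $C_6$ yields $\Delta\Phi \geq -\cmal B$.

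The main obstacle I expect is the bookkeeping at the boundary between the equal and unequal branches of $\Phi$: the coefficient of $k_{AB}$ flips sign between them, so a crossing contributes a discrete jump that must be neutralized by the auxiliary constants $Z_1,Z_2$ together with the $\malAB$ buffer. I would verify that $Z_1,Z_2$ are precisely calibrated so that every crossing lands within the $\Delta\Phi\geq -\cmal B$ budget when malicious corruption caused it, within $-\cinv b$ when invalid control caused it, and yields $\Delta\Phi\geq b$ when driven by sound communication.
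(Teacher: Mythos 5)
Your high-level decomposition (equal vs.\ unequal $(k,\sync)$-branch; transition vs.\ no-transition; three corruption levels) is in the spirit of the paper's proof, which splits on $k_A\neq k_B$, $k_A=k_B=1$, and $k_A=k_B>1$, and within each on which parties transition. Your ``baseline'' observation about what an update phase can and cannot change is also accurate. But the sketch has one claim that is actually false and two load-bearing sub-arguments that are missing.

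The false claim is that ``in the unequal branch, sound communication forces the next iteration into either the equal branch, a perfectly synced state, or an almost synced state.'' When $k_A\neq k_B$, both parties with $\sync=0$ just increment $k$ and $E$ (each detects the other's $k$-hash mismatch), and if neither has reached a power-of-two scale no transition fires; the state stays unsynced with $k_A''\neq k_B''$. The potential gain in that case does \emph{not} come from a branch change but from the fact that $\Delta E_{AB}=\Delta k_{AB}$ and the coefficient $0.9C_5$ on $E_{AB}$ exceeds the coefficient $0.8C_5$ on $k_{AB}$, yielding $\Delta\Phi\geq 0.1C_5 b\geq b$. This is the basic engine of the unequal branch and your proposal never invokes it.

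Two further gaps. First, whenever a transition occurs from a state with large $k_A$, the paper needs the bound $E_A\leq 0.7(k_A-1)$ (derived from the fact that no error transition fired at scale $\widetilde{k}_A/2$, so $E_A\leq 0.2\cdot\widetilde{k}_A/2 + (\widetilde{k}_A/2-1)$). Without it, nothing controls how much of the $-C_3 E_{AB}$ (or $+0.9C_5 E_{AB}$) term disappears when $E_A$ resets, so the $2bC_2$-type gains you quote are not actually established. Second, your handling of meeting-point transitions implicitly assumes they are ``correct'' and shrink $\ell^-$. When a rollback is wrong (or when both parties roll back but to non-matching transcripts, so $\ell^{-\prime\prime}\neq 0$), the paper argues via a counting bound that $\geq 0.2(k_A+1)$ of the previous $k_A$ iterations must have had maliciously corrupted control information, so $\malAB$ has banked at least $\Theta(k_A)$ credit and the $-C_7 B\,\malAB$ term pays for the transcript damage. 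There is also the subtler case where both parties roll back to the same point ($\ell^{-\prime\prime}=0$) but $k_A+1>4\ell^-/B$: here $\ell^-$ is too small to pay for the $\ell^+$ loss, and the paper again uses a counting argument to show $\malAB\geq 0.1(k_A+1)$. None of these counting arguments appear in your sketch, and without them the $-\cmal B$ bound in part~(3) of the lemma cannot be closed for large~$k_A$, since the loss in $\ell^+$ scales as $k_A B$ while a single iteration's $\malAB$ increment gives only $\Theta(B)$ slack.
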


\begin{proof}
We consider several cases, depending on the values of $k_A, k_B$ and what
transitions occur before the end of the iteration.
\begin{itemize}
   \item \underline{Case 1}: $k_A \neq k_B$.
  \begin{itemize}
    \item \underline{Subcase 1}: No transitions occur before the start of the
next iteration.
  \begin{enumerate}[label=\alph*.)]
    \item If the control information sent by both parties is sound or invalid, 
then note 
that $\Delta k_A = \Delta E_A \in \{0,1\}$ and $\Delta k_B = \Delta E_B \in 
\{0,1\}$. Also, at least one of $\Delta k_A$, $\Delta k_B$ must be 1, while
$\ell^+$, $\ell^-$, $\malAB$ remain unchanged. Moreover, the state
will remain an unsynced state with $k_A'' \neq k_B''$. Therefore,
\[
 \Delta\Phi \geq b(-0.8 C_5 + 0.9 C_5) \geq b.
\]

\item If at least one party's control information is maliciously corrupted and 
$k_A, k_B > 1$,
then note that the state at the beginning of the next iteration will also be
unsynced with $k_A'' \neq k_B''$. Also, observe that $\Delta k_A = \Delta k_B = 
1$, while $\ell^+, \ell^-$ remain unchanged. Thus,
\[
 \Delta\Phi \geq 2b(-0.8C_5) - 2C_7 B \geq -\cmal B.
\]

\item If at least one party's control information is maliciously corrupted and 
one of $k_A,
k_B$ is 1, then without loss of generality, assume $k_A = 1$ and $k_B > 1$.
Note that $k_B$ increases by 1. Also, if $k_A$ does not
increase, then $\ell^-$ can increase by at most $B$. Hence,
\[
 \Delta\Phi \geq -0.8bC_5 - 2C_7B - \max\{0.8b C_5, C_1 B\} \geq -\cmal B.
\]
\end{enumerate}

\item \underline{Subcase 2}: Only one of Alice and Bob undergoes a transition 
before the start of the next iteration. Without loss of generality, assume that
Alice makes the transition. Also, let
\begin{align}
 P_1 = \begin{cases}
      0.2 C_7 (k_A+1)B - (1+C_0 H(\epsilon) + C_1)k_A B \ &\text{if Alice has 
an MP trans.}\\ 0 \ &\text{otherwise}
     \end{cases} \label{eq:pdef}
\end{align}
Note that $P_1\geq 0$ for a suitable choice of constants $C_0$, $C_1$, $C_7$. 
Also observe that if $k_A\geq 3$, then
\begin{align}
E_A \leq \frac{1}{2}(k_A + 1) - 1 + 0.2\cdot \frac{1}{2}(k_A+1) = 0.6 k_A - 0.4 
\leq 0.7(k_A-1), \label{eq:eabd}
\end{align}
since an error transition did not occur when Alice's backtracking parameter was 
equal to $\frac{1}{2}(k_A+1)$, and an additional $\frac{1}{2}(k_A+1)-1$ 
iterations have occurred since then. Note that (\ref{eq:eabd}) also holds if 
$k_A < 3$ since it must be the case that $E_A = 0$.

\begin{enumerate}[label=\alph*.)]
 \item Suppose the control information sent by each party is sound. Then, note 
that $(k_A'', \synca'') \neq (k_B'', \syncb'')$. Moreover, if Alice's 
transition is a meeting point transition, then we must have $\malA \geq 
0.2(k_A+1)$, and the transition can cause Alice's transcript $T_A$ to be 
rewound by at most $k_A B$ bits, which implies that $\Delta\ell^- \leq k_A B$ 
and $\Delta\ell^+ \geq -k_A B$.

Thus, if $k_A, k_B > 1$, then by (\ref{eq:eabd}), we have
\begin{align*}
 \Delta\Phi &\geq 0.8b C_5 (k_A - 1)  - 0.9 b C_5 E_A + (-0.8b C_5 + 0.9 bC_5) 
+ P_1\\
&\geq 0.8 bC_5 (k_A - 1) - 0.9 bC_5 \cdot 0.7(k_A - 1) + 0.1bC_5\\
&\geq 0.27 b C_5\\
&\geq b,
\end{align*}
while if $k_A = 1$ and $k_B > 1$, then
\begin{align*}
 \Delta\Phi &\geq -0.8 b C_5 + 0.9 b C_5 + P_1\\
 &\geq 0.1 b C_5\\
 &\geq b.
\end{align*}
Finally, if $k_B = 1$, then $k_A > 1$ and so, by (\ref{eq:eabd}), we have
\begin{align*}
 \Delta\Phi &\geq 0.8 bC_5(k_A - 1) - 0.9 bC_5 E_A - bC_6 + P_1\\
 &\geq 0.8bC_5 (k_A - 1) - 0.9 bC_5 \cdot 0.7 (k_A - 1) - bC_6\\
 &\geq (0.17 C_5 - C_6)b\\
 &\geq b.
\end{align*}

\item Suppose the control information sent by at least one party is invalid, 
but neither party's control information is maliciously corrupted. Again, we 
note that if Alice's transition is a meeting point transition, 
then $\malA \geq 0.2(k_A+1)$ and $\Delta\ell^- \leq k_A B$ and $\Delta\ell^+ 
\geq -k_A B$.

First, suppose that $k_B = \syncb = 1$ and that Bob receives invalid control 
information. Then, note that $(k_A'', \synca'') = (k_B'', \syncb'') = (1,1)$. 
Thus, by (\ref{eq:eabd}),
\begin{align*}
 \Delta\Phi &\geq 0.8 bC_5 (k_A+1) - 0.9 bC_5 E_A + 2bC_2 - bC_4 + P_1\\
 &\geq 0.8 bC_5 (k_A+1) - 0.9 bC_5 \cdot 0.7(k_A-1) + 2bC_2 - bC_4\\
 &\geq (2C_2 - C_4 + 1.77 C_5)b\\
 &\geq -\cinv b.
\end{align*}

Next, suppose that $k_B = \syncb = 1$ but Bob receives sound information. Then, 
note that $(k_A'', \synca'') \neq (k_B'', \syncb'')$. Hence, by 
(\ref{eq:eabd}), 
\begin{align*}
 \Delta\Phi &\geq 0.8 bC_5 (k_A-1) - 0.9 bC_5 E_A - bC_6 + P_1\\
  &\geq 0.8bC_5 (k_A-1) - 0.9 bC_5 \cdot 0.7 (k_A - 1) - bC_6\\
  &\geq (0.17 C_5 - C_6)b\\
  &\geq -\cinv b.
\end{align*}

Finally, suppose that $(k_B, \syncb) \neq (1,1)$. Then, $\Delta k_B = \Delta 
E_B = 1$. Thus, if $k_A > 1$, then by (\ref{eq:eabd}),
\begin{align*}
 \Delta\Phi &\geq 0.8bC_5 (k_A-1) - 0.9 bC_5 E_A + (-0.8bC_5 + 0.9 bC_5) + P_1\\
 &\geq 0.8 b C_5 (k_A - 1) - 0.9 bC_5 \cdot 0.7 (k_A - 1) + 0.1bC_5\\
 &\geq 0.27 C_5 b\\
 &\geq -\cinv b,
\end{align*}
while if $k_A = 1$, Alice's transition must be an error transition and so,
\begin{align*}
 \Delta\Phi &\geq -0.8 b C_5 + 0.9 b C_5\\
 &= 0.1 C_5 b\\
 &\geq -\cinv b.
\end{align*}

\item Suppose the control information sent by at least one of the parties is
maliciously corrupted. If Alice's transition is a meeting point 
transition, then $\malA \geq 0.2(k_A+1) - 1$, and $T_A$ can be rewound up to 
at most $k_A B$ bits during the transition.

First, suppose that $(k_B'', \syncb'') \neq (1,1)$. Then, $\Delta k_B \leq 1$ 
and $\Delta \malB \leq 1$. Thus, by (\ref{eq:eabd}), we have
\begin{align*}
 \Delta\Phi &\geq 0.8 bC_5 (k_A - 1) - 0.9 bC_5 E_A - 0.8 bC_5 - C_7 B - 
bC_6 + (P_1 - C_7 B)\\
&\geq 0.8bC_5 (k_A - 1) - 0.9 bC_5 \cdot 0.7 (k_A - 1) - 0.8bC_5 - C_7 B - 
bC_6 - C_7 B\\
&\geq -(0.8 C_5 + C_6)b - 2C_7 B\\
&\geq -\cmal B.
\end{align*}

Next, suppose that $(k_B'', \syncb'') = (1,1)$. Then, since Bob does not 
undergo a transition, we have $k_B = \syncb = 1$. Also, the length of $T_B$ can 
increase by at most $B$ bits over the course of the next iteration. Hence,
\begin{align*}
 \Delta\Phi &\geq 0.8 bC_5 k_{AB} - 0.9 bC_5 E_{AB} - C_1 B + (P - C_7 B) + 
2bC_2 - bC_4\\
 &\geq 0.8 bC_5 (k_A + 1) - 0.9 bC_5 \cdot 0.7 (k_A - 1) - C_1 B - C_7 B + 
2bC_2 - bC_4\\
 &\geq (2C_2 - C_4 + 1.6 C_5)b - (C_1 + C_7)B\\
 &\geq -\cmal B.
\end{align*}
\end{enumerate}
  
\item \underline{Subcase 3}: Both Alice and Bob undergo transitions before the
start of the next iteration. Again, note that note
that $E_A \leq 0.7(k_A - 1)$, due to (\ref{eq:eabd}). Similarly, $E_B \leq 
0.7(k_B - 1)$. Also, we define $P_1$ as in (\ref{eq:pdef}) and define $P_2$ 
analogously:
\[
 P_2 = \begin{cases} 0.2 C_7(k_B+1)B - (1+C_0H(\epsilon)+C_1) k_B B\ &\text{if 
Bob has an MP trans.}\\
0\ &\text{otherwise}
\end{cases}.
\]
Observe that $P_1, P_2 \geq 0$ for a suitable choice of constants $C_0$, $C_1$, 
$C_7$.

First, suppose that no party receives maliciously corrupted control 
information. Then, note that if Alice undergoes a meeting point transition, 
then $\malA \geq 0.2 (k_A+1)$, and the transition can cause $T_A$ to be rewound 
by at most $k_A B$ bits. Similarly, if Bob undergoes a meeting point transition, 
then $\malB \geq 0.2 (k_B+1)$, and the transition can cause $T_B$ to be rewound 
by at most $k_B B$ bits. Thus, regardless of the types of transitions that Alice 
and Bob make, we have
\begin{align*}
 \Delta\Phi &\geq 0.8 bC_5 k_{AB} - 0.9 bC_5 E_{AB} + P_1 + P_2 + 2bC_2 - bC_4\\
 &\geq 0.8 bC_5 k_{AB} - 0.9 bC_5 \cdot 0.7 ((k_A-1) + (k_B-1)) 
+ 2bC_2 - bC_4\\
 &\geq (2C_2 - C_4 + 1.6 C_5)b\\
 &\geq b,
\end{align*}

Now, suppose some party receives maliciously corrupted control information. 
We instead have $\malA \geq 0.2 (k_A+1)-1$ and $\malB \geq 0.2(k_A+1)-1$. Thus,
\begin{align*}
 \Delta\Phi &\geq 0.8bC_5 k_{AB} - 0.9 bC_5 E_{AB} + (P_1-C_7 B) + (P_2 - C_7 
B) + 2bC_2 - bC_4\\
&\geq (2C_2 - C_4 + 1.6C_5)b - 2C_7 B\\
&\geq -\cmal B,
\end{align*}
as desired.
\end{itemize}
  
  \item \underline{Case 2}: $k_A = k_B = 1$.
\begin{itemize}
\item \underline{Subcase 1}: $\synca = \syncb = 1$. Then, note that if both 
parties receive sound control information, then $\synca'' = \syncb'' = 0$. Thus,
\[
 \Delta\Phi = -\Delta Z_1 = \frac{1}{2}bC_4 \geq b.
\]
On the other hand, if some party receives invalid control information but 
neither party receives maliciously corrupted control information, then note 
that either $\synca'' = \syncb'' = 1$, in which case,
\[
 \Delta\Phi = 0 \geq -\cinv b,
\]
or $\synca'' \neq \syncb''$, in which case,
\[
\Delta\Phi \geq -2bC_2 + bC_4 - 1.6bC_5 - bC_6 \geq -\cinv b.
\]
Finally, consider the case in which  some party receives maliciously corrupted 
information. Then, if $\synca'' = \syncb''$, note that $\Delta\ell^- \leq 2$. 
Thus, if the subsequent state is unsynced, then
\[
 \Delta\Phi \geq -2C_1 B \geq -\cmal B,
\]
while if the subsequent state is perfectly or almost synced, then
\[
 \Delta\Phi \geq -2bC_2 + bC_4 - (2s+1)b \geq -\cmal B.
\]

Otherwise, if $\synca'' \neq \syncb''$, then $\Delta\ell^- \leq 1$, and so,
\begin{align*}
 \Delta\Phi \geq -C_1 B -2bC_2 + bC_4 - 1.6bC_5 -bC_6 \geq -\cmal B.
\end{align*}

\item \underline{Subcase 2}: $\synca = \syncb = 0$. First, suppose both parties 
receive sound control information. Then, either both parties do not undergo any 
transitions, in which case,
\[
 \Delta\Phi \geq 2bC_2 + \frac{1}{2}bC_4 \geq b,
\]
or both parties undergo a meeting point transition, in which case the 
subsequent state is perfectly synced, and so,
\[
 \Delta\Phi \geq -2bC_2 + \frac{1}{2}bC_4 \geq b.
\]

Next, consider the case in which some party receives invalid control 
information, but neither party receives maliciously corrupted control 
information. Suppose, without loss of generality, that Alice receives invalid 
control information. Then, $k_A'' = \synca'' = 1$. Note that if $k_B'' = 2$, 
then
\[
 \Delta\Phi \geq -2bC_2 + \frac{1}{2}bC_4 - 2.4bC_5 \geq -\cinv b.
\]
Otherwise, if $k_B'' = 1$, then either the subsequent state is perfectly 
synced, in which case
\[
 \Delta\Phi \geq -2bC_2 + \frac{1}{2}bC_4 \geq -\cinv b,
\]
or the subsequent state is almost synced, in which case
\[
 \Delta\Phi \geq B(1+C_0 H(\epsilon)) - 2bC_2 + \frac{1}{2} bC_4 - b \geq 
-\cinv b,
\]
or the subsequent state is unsynced, in which case
\[
 \Delta\Phi \geq -\frac{1}{2}bC_4 \geq -\cinv b.
\]

Finally, consider the case in which some party receives maliciously corrupted 
control information. If $k_A'' = k_B'' = 2$, then
\[
 \Delta\Phi \geq 2bC_2 - 4C_7 B + \frac{1}{2}bC_4 \geq -\cmal B.
\]
On the other hand, if $k_A'' = k_B'' = 1$, then $\Delta\ell^- \leq 2$. Thus, if 
the subsequent state is unsynced, then
\[
 \Delta\Phi \geq -2(1+C_0 H(\epsilon) + C_1)B - \frac{1}{2}bC_4 \geq -\cmal B,
\]
while if the subsequent state is perfectly or almost synced, then
\[
 \Delta\Phi \geq -2(1+C_0 H(\epsilon) + C_1)B + \frac{1}{2} bC_4 - b \geq 
-\cmal B.
\]
If $k_A''\neq k_B''$, then without loss of generality, assume that $k_A'' = 2$ 
and $k_B'' = 1$. We then have
\[
 \Delta\Phi \geq -(1+C_0 H(\epsilon) + C_1)B -2bC_2 + \frac{1}{2}bC_4 - 2.4 
bC_5 - C_7 B \geq -\cmal B.
\]

\item \underline{Subcase 3}: $\synca \neq \syncb$. Without loss of generality, 
assume that $\synca=1$ and $\syncb=0$. 

First, suppose that neither party receives maliciously corrupted control 
information. Then, $k_A'' = \synca'' = k_B'' = \syncb'' = 1$. Thus, if the 
subsequent state is unsynced, then we have
\[
 \Delta\Phi \geq 1.6bC_5 + bC_6 + 2bC_2 - bC_4 \geq b,
\]
while if the subsequent state is perfectly or almost synced, then
\[
 \Delta\Phi \geq 1.6bC_5 + bC_6 - b \geq b.
\]

Next, suppose that some party receives maliciously corrupted control 
information. Note that $k_A'' = 1$. If $\synca = 1$ and $k_B'' = 2$, then 
$\Delta\ell^- \leq 1$, and so,
\[
 \Delta\Phi \geq -C_1 B - 0.8 bC_5 - C_7B + bC_6 \geq -\cmal B.
\]
If $\synca = 1$ and $k_B'' = 1$, then either the subsequent state is unsynced, 
in which case,
\[
 \Delta\Phi \geq -C_1 B - (1+C_0 H(\epsilon) + C_1) B + 0.8bC_5 + bC_6 + 2bC_2 
- bC_4 \geq -\cmal B,
\]
or the subsequent state is perfectly/almost synced, in which case,
\[
 \Delta\Phi \geq -C_1 B - (1+C_0 H(\epsilon) + C_1) B + 1.6bC_5 + bC_6 - 
(2s+1)b \geq -\cmal B.
\]

Finally, suppose $\synca = 0$. Then, note that
\[
 \Delta\Phi \geq -(1+C_0 H(\epsilon) + C_1)B - 0.8bC_5 - C_7 B \geq -\cmal B.
\]

\end{itemize}

\item \underline{Case 3}: The protocol is in an unsynced state, and $k_A = k_B >
1$.
\begin{itemize}
\item \underline{Subcase 1}: Suppose neither Alice nor Bob undergoes a
transition before the start of
the next iteration. Then, we have $\Delta k_A = \Delta k_B = 1$. If the
control information received by both parties is either sound or invalid, then 
we have
\[
 \Delta \Phi \geq 2 b C_2 \geq b.
\]
On the other hand, if some party's control information is maliciously 
corrupted, then
\[
 \Delta\Phi \geq 2bC_2 - 2bC_3 - 4BC_7 \geq -\cmal B.
\]

\item \underline{Subcase 2}: Suppose both Alice and Bob undergo a transition,
and suppose at least one of the transitions is a meeting point transition.

\begin{enumerate}[label=\alph*.)]
 \item Suppose ${\ell^-}'' = 0$ and $k_A + 1 = k_B + 1 \leq \frac{4\ell^-}{B}$.
Then, note that $\ell^+$ decreases by at most $k_A B = k_B B$. Thus,
\begin{align*}
 \Delta\Phi &\geq -k_A B (1 + C_0 H(\epsilon)) + C_1 \ell^- - 2C_2 b (k_A -
1) - C_4 b\\
&\geq -k_A B (1 + C_0 H(\epsilon)) + C_1\cdot \frac{B(k_A+1)}{4} - 2C_2 b (k_A
- 1) - C_4 b\\
&= k_A B \left(\frac{C_1}{4} - C_0 H(\epsilon) - \frac{2C_2 b}{B} - 1\right) +
\frac{C_1 B}{4} + (2C_2 - C_4) b\\
&\geq b.
\end{align*}

\item Suppose ${\ell^-}'' \neq 0$. Without loss of generality, assume that
Alice has made a meeting point transition. Note that if Alice has made an
incorrect meeting point transition, then it is clear that $\malA' \geq
0.2(k_A+1)$. On the other hand, if she has made a correct transition, then Bob
has made an incorrect transition, since ${\ell^-}'' \neq 0$, and so, $\malB'
\geq 0.2(k_A+1)$. Since $\malA' = \malB'$, it follows that $\malAB' \geq
0.4(k_A+1)$ in either case. Thus, if the control information in the current
round is not maliciously corrupted, then $\malAB \geq 0.4(k_A+1)$, and so,
\begin{align*}
 \Delta\Phi &\geq -k_A B(1+C_0 H(\epsilon)+C_1)  - 2C_2 b (k_A - 1) +
2 C_7 B \cdot 0.4(k_A + 1) - C_4 b\\
&\geq k_A B \left(0.8 C_7 - C_0 H(\epsilon) - C_1 - \frac{2C_2 b}{B} - 1\right)
+ (2C_2 - C_4) b + 0.8 C_7 B\\
&\geq b.
\end{align*}
Otherwise, if some party's control information in the current round is
corrupted, then $\malAB \geq 0.4(k_A+1) - 2$, and so,
\begin{align*}
 \Delta\Phi &\geq k_A B \left(0.8 C_7 - C_0 H(\epsilon) - C_1 - \frac{2C_2 b}{B}
- 1\right) + (2C_2 - C_4) b - 3.2 C_7 B\\
&\geq -\cmal B.
\end{align*}

\item Suppose that ${\ell^-}'' = 0$ but $k_A + 1 = k_B + 1 > \frac{4\ell^-}{B}$.
Then observe that there must have been at least
\begin{equation}
 \frac{1}{4}(k_A+1)-0.2\cdot \frac{1}{2}(k_A+1) - 0.2\cdot\frac{1}{2}(k_A+1) =
0.05 (k_A+1) \label{eq:malcorr}
\end{equation}
maliciously corrupted rounds among the past $k_A$ rounds. This is because there
were $\frac{1}{4}(k_A+1)$ iterations taking place as Alice's backtracking 
parameter increased from $\frac{1}{4}(k_A+1)$ to $\frac{1}{2}(k_A+1)$, of which 
at most $0.2\cdot\frac{1}{2}(k_A+1)$ iterations could have had invalid control
information for Alice, and at most $0.2\cdot\frac{1}{2}(k_A+1)$ iterations
could have had sound control information for Alice (since Alice did not
undergo a meeting point transmission when her backtracking parameter reached
$\frac{k_A+1}{2}$). Thus, $\malAB\geq 2\cdot 0.05(k_A+1) = 0.1(k_A+1)$ and
so, 
\begin{align*}
 \Delta\Phi &\geq -k_A B (1 + C_0 H(\epsilon)) - 2b C_2 (k_A-1) + C_7 B
\cdot \malAB - C_4 b\\
&\geq k_A B\left( 0.1 C_7 - C_0 H(\epsilon) - \frac{2C_2 b}{B} - 1\right) +
(2C_2-C_4)b + 0.1C_7 B\\
&\geq b.
\end{align*}

\end{enumerate}

\item \underline{Subcase 3}: Suppose both Alice and Bob undergo error
transitions. Then, $E_A' \geq 0.2 (k_A+1)$ and $E_B' \geq 0.2
(k_B+1) = 0.2(k_A+1)$. Therefore, if both parties receive sound control 
information, then $E_A, E_B \geq 0.2(k_A+1)$, and so,
\begin{align*}
 \Delta\Phi &\geq C_3 bE_{AB} - 2C_2 b(k_A-1) - C_4 b\\
&\geq C_3 b(0.4k_A + 0.4) - 2C_2 b(k_A-1) - C_4 b\\
&\geq (0.4 C_3 - 2C_2)k_A b + (2C_2 + 0.4 C_3 - C_4)b\\
&\geq (0.8C_3 - C_4)b\\
&\geq b.
\end{align*}
On the other hand, if some party receives invalid or maliciously corrupted 
control information, then $E_A, E_B \geq 0.2(k_A+1) - 1$, and so,
\begin{align*}
 \Delta\Phi &\geq C_3 bE_{AB} - 2C_2 b(k_A-1) - C_4 b\\
 &\geq (0.4C_3 - 2C_2)k_A b + (2C_2 - 1.6 C_3 - C_4)b\\
 &\geq (-1.2 C_3 - C_4)b\\
 &\geq -\cinv b.
\end{align*}

\item \underline{Subcase 4}: Suppose only one of Alice and Bob undergoes a 
transition
before the next iteration. Without loss of generality, assume Alice undergoes
the transition. 

\begin{enumerate}[label=\alph*.)]
 \item Suppose the transition is an error transition. If both parties' 
control information is sound, then observe that $E_A
\geq 0.2(k_A+1)$. Thus,
\begin{align*}
\Delta\Phi &\geq -2bC_2 k_A + bC_3 E_A  - 0.8 bC_5 (k_A + 2)\\
&\geq -2bC_2 k_A + b C_3 (0.2 k_A + 0.2) - 0.8 bC_5 (k_A+2)\\
&\geq k_A b (0.2 C_3 - 0.8 C_5 - 2C_2) + (0.2 C_3 - 1.6C_5)b\\
&\geq b.
\end{align*}
Otherwise, if some party's control information is invalid, but neither 
party's control information is maliciously corrupted, then $E_A \geq 
0.2(k_A+1)-1 = 0.2k_A - 0.8$, and so,
\begin{align*}
\Delta\Phi &\geq -2bC_2 k_A + bC_3 E_A  - 0.8 bC_5 (k_A + 2)\\
&\geq -2bC_2 k_A + b C_3 (0.2 k_A - 0.8) - 0.8 bC_5 (k_A+2)\\
&\geq k_A b (0.2 C_3 - 0.8 C_5 - 2C_2) - (0.8 C_3 + 1.6C_5)b\\
&\geq -\cinv b.
\end{align*}
Finally, if some party's control information is maliciously corrupted, then 
again, we have $E_A \geq 0.2 k_A - 0.8$. Thus,
\begin{align*}
\Delta\Phi &\geq -2bC_2 k_A + bC_3 E_A  - 0.8 bC_5 (k_A + 2) - C_7 B\\
&\geq  k_A b (0.2 C_3 - 0.8 C_5 - 2C_2) - (0.8 C_3 + 1.6C_5)b - C_7 B\\
&\geq -\cmal B.
\end{align*}

\item Suppose the transition is a meeting point transition. Then, since
only one of the two players is transitioning, either (1.) Alice is incorrectly
transitioning, meaning that $\malA', \malB' \geq 0.2 (k_A+1)$, or  (2.)  Bob
should have also been transitioning, meaning that $\malA',
\malB' \geq\frac{1}{2}(k_A+1) - 0.2 (k_A+1) - 0.2(k_A + 1) \geq 0.1(k_A+1)$.
Either way, $\malA', \malB' \geq 0.1(k_A+1)$.

Hence, if neither party's control information in the current round is 
maliciously corrupted, then $\malA,\malB \geq 0.1(k_A+1)$, and so,
\begin{align*}
\Delta\Phi &\geq -2bC_2 k_A - 0.8b C_5 (k_A+2) + 2C_7 B \cdot \malA + C_7 B
\cdot \malB\\
&\quad  - k_A B(1 + C_0H(\epsilon)+C_1) \\
&\geq -2bC_2 k_A - 0.8b C_5 (k_A+2) + 0.3C_7 B (k_A+1) - k_A B(1 +
C_0H(\epsilon)+C_1) \\
&\geq k_A B \left(0.3 C_7 - C_1 - C_0 H(\epsilon) - 2C_2 \frac{b}{B} - 0.8C_5
\frac{b}{B} - 1\right) - 1.6bC_5 + 0.3 C_7 B\\
&\geq b.
\end{align*}
Otherwise, if there is maliciously corrupted control information in the current
round, then $\malA,\malB\geq 0.1(k_A+1)-1 = 0.1 k_A - 0.9$, and so,
\begin{align*}
\Delta\Phi &\geq -2bC_2 k_A - 0.8b C_5 (k_A+2) + 2C_7 B \cdot \malA + C_7 B
\cdot \malB - C_7 B\\
&\quad  - k_A B(1 + C_0H(\epsilon)+C_1) \\
&\geq -2bC_2 k_A - 0.8b C_5 (k_A+2) + 3C_7 B (0.1k_A-0.9) - C_7 B\\
&\quad - k_A B(1 + C_0H(\epsilon)+C_1) \\
&\geq k_A B \left(0.3 C_7 - C_1 - C_0 H(\epsilon) - 2C_2 \frac{b}{B} - 0.8C_5
\frac{b}{B} - 1\right) - 1.6bC_5 - 2.7 C_7 B\\
&\geq -\cmal B,
\end{align*}
as desired.
\end{enumerate}
\end{itemize}
\end{itemize}
\end{proof}

Now, we are ready to prove the main theorem of the section, which implies 
Theorem~\ref{thm:mainoblivious} for the choice $\epsilon' = \epsilon^2$.
\begin{theorem}\label{thm:nonrate}
 For any sufficiently small $\epsilon > 0$ and $n$-round interactive protocol 
$\Pi$ with average message length $\ell = \Omega(1/\epsilon'^3)$, the protocol 
$\pienc$ given in Figure~\ref{fig:oblivious} successfully simulates $\Pi$, 
with probability $1-2^{-\Omega(\epsilon'^2 \Niter)}$, over an oblivious 
adversarial channel with an $\epsilon$ error fraction while 
achieving a communication rate of $1-\Theta(\epsilon\log(1/\epsilon)) = 
1-\Theta(H(\epsilon))$.
\end{theorem}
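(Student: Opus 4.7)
My plan is to prove Theorem~\ref{thm:nonrate} via a potential-function argument, using the change-of-potential lemmas (Lemmas~\ref{lem:synctrans}, \ref{lem:asynctrans}, \ref{lem:unsynctrans}) together with the bounds from Lemmas~\ref{lem:invbound} and \ref{lem:malbound}. The core claim will be that over $\Niter$ iterations the potential $\Phi$ must rise to a value that can only be attained when $T_A = T_B$ equals the full transcript of $\piblk$, at which point $\pienc$ has simulated $\Pi$.

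First, I will observe that at the start of $\pienc$ the protocol is perfectly synced with $\ell^+ = j = \curerr = \inv = 0$, so $\Phi_0 = 0$. On the other hand, whenever the protocol is in any state with $\ell^+ \geq n'$, we have $\Phi \geq n'(1 + C_0 H(\epsilon)) - O(B)$, since all negative terms in $\Phi$ are bounded by $O(B)$ in absolute value in any state. Hence it suffices to show that, with the stated probability, $\sum_m \Delta\Phi^{(m)} \geq n'(1 + C_0 H(\epsilon)) - O(B)$ over the $\Niter$ iterations.

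Next, I will classify each iteration by how its control information fares. Let $N_{\mathsf{mal}}$ be the number of iterations in which some party's control information is maliciously corrupted, and let $N_{\mathsf{inv}}$ be the number of iterations in which some party's control information is invalid but neither is maliciously corrupted. By Lemmas~\ref{lem:invbound} and \ref{lem:malbound}, with probability at least $1 - 2^{-\Omega(\epsilon'^2 \Niter)}$ we have $N_{\mathsf{inv}} = O(\epsilon \Niter)$ and $N_{\mathsf{mal}} = O(\epsilon'^2 \Niter)$. Condition on this event for the remainder. Also, the total number $T$ of corrupted data bits (i.e., non-control bits) across all iterations is at most $\epsilon \Niter b'$, since the total number of transmitted bits is $\Niter b'$. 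By Lemmas~\ref{lem:synctrans}, \ref{lem:asynctrans}, \ref{lem:unsynctrans}, applied uniformly across all three state types, each iteration contributes at least $b - C t_m \log(1/\epsilon)$ to $\Delta\Phi$ if the control information received by both parties is sound (where $t_m$ is the number of data corruptions in that iteration), at least $-\cinv b$ if control is invalid but not maliciously corrupted, and at least $-\cmal B$ if any control information is maliciously corrupted.

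Summing, the total potential change is at least
\[
\Niter b \;-\; C \log(1/\epsilon)\sum_m t_m \;-\; \cinv b \cdot N_{\mathsf{inv}} \;-\; \cmal B \cdot N_{\mathsf{mal}}.
\]
Substituting $\sum_m t_m \leq \epsilon \Niter b'$, $N_{\mathsf{inv}} = O(\epsilon \Niter)$, $N_{\mathsf{mal}} = O(\epsilon'^2 \Niter)$, and using $B = sb$ with $s = \Theta(1/\epsilon')$ gives a lower bound of
\[
\Niter b \bigl(1 - O(\epsilon \log(1/\epsilon)) - O(\epsilon')\bigr).
\]
By the choice $\Niter = (n'/b)(1 + \Theta(\epsilon\log(1/\epsilon)))$ with a suitably large constant, this exceeds $n'(1 + C_0 H(\epsilon)) + O(B)$, so $\Phi$ must eventually reach the value that forces $\ell^+ = n'$, completing the simulation. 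For the communication rate, the total number of rounds used by $\pienc$ is $\Niter b' + 10\epsilon \Niter b'$ for the randomness exchange, which equals $n'(1 + \Theta(\epsilon\log(1/\epsilon))) = n(1 + \Theta(H(\epsilon)))$ once the $(1 + O(\epsilon'))$ overhead of Lemma~\ref{lem:blocking} is absorbed; this gives the claimed rate $1 - \Theta(H(\epsilon))$ upon setting $\epsilon' = \epsilon^2$.

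The most delicate step will be the accounting in the summation: the contribution $C\log(1/\epsilon)\sum_m t_m$ from data corruptions is charged globally rather than iteration-by-iteration, and one must verify that the potential drops from ``reset'' events (handled inside Lemma~\ref{lem:synctrans}, Case 2) are consistently absorbed by the globally summed $\curerr$ and $\inv$ terms rather than being double-counted---this is precisely why the potential has the $-C \cdot \curerr \cdot \log(1/\epsilon) - D b \cdot \inv$ structure in the perfectly synced regime, and telescoping across resets is what makes the global bounds $\sum t_m \leq \epsilon \Niter b'$ and $N_{\mathsf{mal}} + N_{\mathsf{inv}}$ suffice.
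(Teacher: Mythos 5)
Your overall architecture matches the paper's: sum the per-iteration potential changes from Lemmas~\ref{lem:synctrans}, \ref{lem:asynctrans}, \ref{lem:unsynctrans}, feed in the bounds from Lemmas~\ref{lem:invbound} and \ref{lem:malbound}, and conclude that $\Phi_{\text{final}}$ is large. But the last step---inferring $\ell^+ \geq n'$ from $\Phi_{\text{final}}$ being large---is where you have a genuine gap. You justify it with the claim that ``all negative terms in $\Phi$ are bounded by $O(B)$ in absolute value in any state,'' which is (a) aimed in the wrong direction (you need that the \emph{positive} non-$\ell^+$ contributions are small, i.e.\ $\Phi \leq \ell^+(1+C_0 H(\epsilon)) + O(B)$), and (b) false as stated: in the unsynced states, terms like $-C_1 \ell^-$, $-2C_7 B\,\malAB$, and $-0.8\,bC_5\,k_{AB}$ are unbounded in magnitude.

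The real issue is that the unsynced potential contains the unbounded \emph{positive} terms $b C_2 k_{AB}$ (when $(k_A,\synca) = (k_B,\syncb)$) and $0.9\,b C_5 E_{AB}$ (when $(k_A,\synca) \neq (k_B,\syncb)$). To get $\Phi \leq \ell^+(1 + C_0 H(\epsilon)) + O(B)$ one must argue that these are dominated by other (negative) state-dependent terms. The paper does this via a case analysis that uses structural invariants of the backtracking mechanism: for $(k_A,\synca) \neq (k_B,\syncb)$ it shows $E_A \leq 0.6 k_A - 0.4$ (since an error transition would have fired earlier otherwise), so $0.9 E_{AB}$ is dominated by $0.8 k_{AB}$; and for $(k_A,\synca) = (k_B,\syncb)$ with $k_A = k_B \geq 2$, it shows that either $\ell^- \geq \frac{B}{4}(k_A+1)$ or $\malAB \geq 0.1(k_A+1)$, so $b C_2 k_{AB}$ is absorbed by $-C_1 \ell^-$ or $-2C_7 B\,\malAB$. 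Without this case analysis (or something equivalent), the ``force $\ell^+ = n'$'' conclusion does not follow, and your proof proposal as written has a hole precisely at the point where the protocol's potential-function design pays off.

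One minor further note: you also briefly invert the implication (``whenever $\ell^+ \geq n'$ we have $\Phi \geq \dots$'' rather than the converse), and your threshold ``$\geq n'(1 + C_0 H(\epsilon)) - O(B)$'' should be ``$+ O(B)$'' to absorb the state-dependent slack; you later write the correct ``exceeds $n'(1 + C_0 H(\epsilon)) + O(B)$,'' so this is just a slip, but it reflects the same missing piece.
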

\begin{proof}
 Recall that $\piblk$ has $n'$ rounds, where $n' = n(1 +O(\epsilon'))$. Let 
$\Nmal$ be the number of iterations of $\pienc$ in which some party's control 
information is maliciously corrupted. Moreover, let $\Ninv$ be the number of 
iterations in which some party's control information is invalid but neither 
party's control information is maliciously corrupted. Finally, let $\Nsound$ be 
the number of iterations starting at an unsynced or almost synced state such 
that both parties receive sound control information.

Now, by Lemma~\ref{lem:malbound}, we know that with probability
$1-2^{-\Omega(\epsilon'^2 \Niter)}$, $\Nmal 
= O(\epsilon'^2 \Niter)$. Also, by Lemma~\ref{lem:invbound}, $\Ninv = 
O(\epsilon \Niter)$ with probability $1-2^{-\Omega(\epsilon' \Niter)}$. Recall
that the total number of data bits that can be corrupted 
by the adversary throughout the protocol is at most $\epsilon b\Niter$. Since 
$\Niter = \Nsound + \Ninv + \Nmal$, Lemmas~\ref{lem:synctrans}, 
\ref{lem:asynctrans}, and \ref{lem:unsynctrans} imply that at the end of the 
execution of $\pienc$, the potential function $\Phi$ satisfies
\begin{align*}
 \Phi &\geq b\Nsound - C \epsilon b \Niter\log(1/\epsilon) - \cinv b \Ninv - 
\cmal B \Nmal\\
&= b(\Niter-\Ninv-\Nmal) - C \epsilon b \Niter\log(1/\epsilon) - \cinv b \Ninv 
- \cmal B \Nmal\\
&= b\Niter - C\epsilon b \Niter\log(1/\epsilon) - (\cinv + 1)b\Ninv - (\cmal B 
+ 
b)\Nmal \\
&= b\Niter - C \epsilon b \Niter\log(1/\epsilon) - O(\epsilon) 
\cdot (\cinv+1) b\Niter - O(\epsilon'^2) \cdot (\cmal B + b) \Niter\\
&= b\Niter (1 - O(\epsilon) \cdot (\cinv+1) - O(\epsilon'^2) \cdot (\cmal s + 
1) - C\epsilon\log(1/\epsilon))\\
&= b\Niter (1 - O(\epsilon\log(1/\epsilon)))\\
&= b\cdot \frac{n'}{b}(1+\Theta(\epsilon\log(1/\epsilon)))\\
&\geq n' (1+C_0 H(\epsilon)) + (C_0 + 1)B.
\end{align*}
Now, in order to complete the proof, it suffices to show that $\ell^+ \geq 
n'$. We consider several cases, based on the ending state:
\begin{itemize}
 \item If the ending state is perfectly synced, then note that $jb - 
C\cdot\curerr\cdot \log(1/\epsilon) \leq 2B$. Thus,
\[
 \ell^+ \geq \frac{\Phi - 2B}{1 + C_0 H(\epsilon)} \geq n'.
\]

\item If the ending state is almost synced, then note that
\[
 \ell^+ \geq \frac{\Phi}{1+C_0 H(\epsilon)} - B \geq n'.
\]

\item If the ending state is unsynced and $(k_A, \synca) = (k_B, \syncb)$, then 
first consider the case $k_A = k_B = 1$. In this case,
\[
 \Phi \leq \ell^+ (1+C_0 H(\epsilon)) + 2bC_2,
\]
and so,
\[
 \ell^+ \geq \frac{\Phi - 2bC_2}{1+C_0 H(\epsilon)} \geq n'.
\]

Now, consider the case $k_A = k_B \geq 2$. Note that either $\ell^- \geq 
\frac{B}{4}(k_A+1)$ or
\begin{align*}
 \malAB &\geq 2\cdot\malA \geq 2 \left(\frac{1}{2}\widt{k}_A - 0.2 \widt{k}_A - 
0.2\widt{k}_A\right) \geq 0.2 \widt{k}_A \geq 0.1 (k_A+1)
\end{align*}
(see (\ref{eq:malcorr})). If the former holds, then
\begin{align*}
 \Phi &\leq \ell^+ (1+C_0 H(\epsilon)) - C_1 \ell^- + bC_2 k_{AB}\\
 &\leq \ell^+ (1+ C_0 H(\epsilon)) - C_1 \cdot \frac{B}{4}(k_A+1) + 2bC_2 k_A\\
 &\leq \ell^+ (1+C_0 H(\epsilon)).
\end{align*}
Otherwise, if the latter holds, then
\begin{align*}
 \Phi &\leq \ell^+ (1+C_0 H(\epsilon)) + bC_2 k_{AB} - 2C_7 B \malAB\\
 &\leq \ell^+ (1+C_0 H(\epsilon)) + 2bC_2 k_A - 2C_7 B(0.1(k_A+1))\\
 &\leq \ell^+ (1+C_0 H(\epsilon)).
\end{align*}
Either way,
\[
 \ell^+ \geq \frac{\Phi}{1+C_0 H(\epsilon)} \geq n'.
\]

\item If the ending state is unsynced and $k_A \neq k_B$, then consider the 
following. Note that if $k_A = 1$, then $E_A = 0 \leq 0.6 k_A - 0.4$. On the 
other hand, if $k_A \geq 2$, then
\begin{align*}
E_A &\leq 0.2 \widt{k}_A + (k_A -\widt{k}_A)\\
&= k_A - 0.8\widt{k}_A\\
&\leq k_A - 0.8\left(\frac{k_A+1}{2}\right)\\
&\leq 0.6 k_A - 0.4.
\end{align*}
Either way, $E_A \leq 0.6 k_A - 0.4$. Similarly, $E_B \leq 0.6 k_B - 0.4$. Thus,
\begin{align*}
 \Phi &\leq \ell^+ (1+C_0 H(\epsilon)) + bC_5(-0.8 k_{AB} + 0.9 
E_{AB})\\
&\leq \ell^+ (1+C_0 H(\epsilon)) + bC_5 (-0.8 k_{AB} + 0.9 ((0.6 k_A-0.4) + 
(0.6 k_B-0.4)))\\
&\leq \ell^+ (1+C_0 H(\epsilon)).
\end{align*}
Thus,
\[
 \ell^+ \geq \frac{\Phi}{1+C_0 H(\epsilon)} \geq n'.
\]
\end{itemize}
\end{proof}
\noindent Finally, we prove Theorem~\ref{thm:rateless}.
\begin{proof}
 Consider the same protocol $\pienc$ as in Theorem~\ref{thm:nonrate}, except 
that we discard the random string exchange procedure at the beginning of the 
protocol. Since Alice and Bob have access to public shared randomness, they can 
instead initialize $\shrand$ to a common random string of the appropriate 
length and continue with the remainder of $\pienc$. Moreover, in this case, 
$\epsilon'$ is a parameter that is set as part of the input. Then, it is clear 
that the analysis of Theorem~\ref{thm:nonrate} still goes through. In this 
case, we have that the total number of rounds is
\[
 \Niter\, b' = \frac{n'b'}{b}(1+O(\epsilon\log(1/\epsilon))) = n(1 + 
O(H(\epsilon)) + O(\epsilon'\,\mathrm{polylog}(1/\epsilon'))),
\]
while the success probability is $1 - 2^{-\Omega(\epsilon'^2 \Niter)} = 1 - 
2^{-\Omega(\epsilon'^3 n)}$, as desired.
\end{proof}

\begin{remark}
 It is routine to verify that the constants $C_0, C_1, C_2, C_3, C_4, 
C_5, C_6, C_7, \cinv, \cmal, C, D > 0$ can be chosen appropriately such that 
the relevant inequalities in Lemmas~\ref{lem:synctrans}, \ref{lem:asynctrans}, 
\ref{lem:unsynctrans}, and Theorem~\ref{thm:nonrate} all hold.
\end{remark}

\shortOnly{\newpage
\appendix

\section{Appendix}
\subsection{Related Works}\label{app:relworks}
\RelatedWorkSection

\subsection{Trivial Scheme for Non-Adaptive Protocols with Minimum Message 
Length Under Random Errors} \label{app:trivialscheme}
\TrivialScheme

\subsection{Preliminaries} \label{app:prelim}
\Prelim
\subsubsection{Communication Channels} \label{subsec:commchannel}
\CommChannels

\subsection{Proofs} \label{app:proofs}
\ProofofBlocking
\ProofofRatelessCode

\subsection{Encoding and Decoding Scheme for Control Information} 
\label{app:encscheme}
\EncodingScheme}

\newpage
\bibliographystyle{alpha}
\bibliography{references}

\end{document}